\documentclass[10pt]{amsart}
\usepackage{graphicx}
\usepackage{amscd}
\usepackage{mathrsfs}
\newtheorem{theorem}{Theorem}[section]
\newtheorem{lemma}[theorem]{Lemma}
\newtheorem{proposition}[theorem]{Proposition}
\newtheorem{corollary}[theorem]{Corollary}

\theoremstyle{definition}
\newtheorem{definition}[theorem]{Definition}

\newtheorem{remark}[theorem]{Remark}
\numberwithin{equation}{section}

\def\D{\mathscr D}
\def\F{\mathscr F}

\def\H{\mathscr H}

\def\P{\mathscr P}
\def\S{\mathscr S}
\def\A{\mathscr A}
\def\W{\mathscr W}
\def\L{\mathscr L}

\def\bkR{\mathbb R}
\def\CO{{\mathbb C}}

\newsymbol\subsetneq 2328

\begin{document}

\title[Wigner Measures in Noncommutative Quantum Mechanics]
{Wigner Measures in Noncommutative Quantum Mechanics}

\author{C. Bastos, N.C. Dias and J.N. Prata}
\address{Departamento de F\'{\i}sica and Instituto de Plasmas e Fus\~ao Nuclear, Instituto Superior T\'ecnico, Avenida Rovisco Pais 1, 1049-001 Lisboa, Portugal}
\email{cbastos@fisica.ist.utl.pt}

\address{Departamento de Matem\'atica, Universidade Lus\'ofona de Humanidades e
Tecnologias, Av. Campo Grande 376, 1749-024 Lisboa,
Portugal and Grupo de F\'{\i}sica Matem\'atica,
Universidade de Lisboa, Av. Prof. Gama Pinto 2, 1649-003,
Lisboa, Portugal}
\email{ncdias@meo.pt, joao.prata@mail.telepac.pt}

\keywords{Noncommutative Quantum Mechanics, Wigner Measures, Narcowich-Wigner Spectrum}

\thanks{{\it Mathematics Subject Classification (2000).} 81S30, 81R60 (primary),
53D55, 81S10 (secondary)}

\begin{abstract}
We study the properties of quasi-distributions or Wigner measures in the context of noncommutative quantum mechanics. In particular, we obtain
necessary and sufficient conditions for a phase-space function to be a noncommutative Wigner measure, for a Gaussian to be a noncommutative
Wigner measure, and derive certain properties of the marginal distributions which are not shared by ordinary Wigner measures. Moreover, we
derive the Robertson-Schr\"odinger uncertainty principle. Finally, we show explicitly how the set of noncommutative Wigner measures relates to
the sets of Liouville and (commutative) Wigner measures.
\end{abstract}

{\maketitle }

\begin{section}{Introduction}

In this work we address several features of phase-space quasi-distributions in the context of a canonical noncommutative extension of
non-relativistic quantum mechanics. This type of system has been recently considered by various authors
\cite{Bastos,Bastos1,Bastos2,Bertolami1,Demetrian,Dias5,Gamboa,Horvathy,Nair} as a simplified model for the more elaborate noncommutative field
theories \cite{Douglas}. The motivation comes mainly from string theory \cite{Seiberg} and noncommutative geometry \cite{Connes,Madore}.
Although one may address noncommutative quantum mechanics with a standard operator formulation, we feel that the Weyl-Wigner formulation is more
adequate. Indeed, (i) it places position and momentum on equal footing, (ii) the extra noncommutativity is trivially encapsulated in a modified
Moyal product, (iii) the passage from noncommutative to ordinary quantum mechanics is more transparent, (iv) the difference between ordinary and
noncommutative quantum mechanical systems is manifest when one considers the sets of quasi-distributions of the two theories. In \cite{Bastos1}
we initiated a systematic study of the Weyl-Wigner formulation of noncommutative quantum mechanics. There, we mainly focused on the algebraic
structure of the theory. The aim of the present work is to further develop this formulation by addressing a number of issues related to the
characterization of the set of states of the theory. These will be called the noncommutative Wigner measures (NCWMs).

In connection with (iii) and (iv), some of us studied the emergence of ordinary quantum mechanics in the realm of noncommutative quantum
mechanics \cite{Dias5}. A noncommutative Brownian particle was placed in interaction with an external heat bath of harmonic oscillators. We were
looking for an estimate for the time scale of this noncommutative-commutative transition. Using a decoherence approach we were able to produce
such an estimate for this simple system. In more general cases, however, we are required to established rigorous criteria for assessing whether such
transition takes place. These criteria hinge upon the difference between the states of a system in ordinary and in noncommutative quantum
mechanics. This is analogous to the comparison between classical and quantum states. The phase space framework is more suitable to address this issue. Classical states are described by positive
Liouville measures in phase-space and they need not satisfy the Heisenberg uncertainty relations. In contrast, quantum mechanical phase-space
quasi-distributions (Wigner measures) need not be positive, but must comply with the Heisenberg constraints.

The difference between the space of states in ordinary and noncommutative quantum mechanics is also more transparent in the phase-space framework.
In the standard operator formulation of noncommutative quantum mechanics the space of states
is the usual Hilbert space of square integrable functions. From this point of view, the space of states of ordinary and noncommutative quantum
mechanics coincide and one cannot tell one from the other. In contrast with this situation, if one resorts to the Weyl-Wigner formulation, one
obtains two disparate phase-space representations: the star-product and the set of phase-space quasi-distributions are different for the two
theories. This constitutes a clear advantage if one wants to study the emergence of ordinary quantum mechanics from a more fundamental
noncommutative quantum theory \cite{Dias5}. In this respect, our work here will culminate in Lemma 4.14 and Figure 1. These reveal that, in the
wider set of phase-space real and normalizable functions, one can find the subsets of Liouville measures (the classical states), the Wigner
measures (the quantum states) and the noncommutative Wigner measures (the noncommutative quantum states), and that these have non-trivial mutual
intersections. This constitutes a natural framework to study transitions from one theory to another. A word of caution is however in order. We
will prove Lemma 4.14 for $2$-dimensional systems only. However, we believe that similar results hold for higher dimensions. The reason is that
the $d=2$ case can be obtained from the higher dimensional ones by tracing out suitable degrees of freedom. We hope to return to this issue in a
future work.

Another interesting problem that one faces when dealing with noncommutative quantum mechanics or noncommutative field theories is the fact that
the noncommutative corrections are too small to be observed experimentally. This arises from upper bounds on the noncommutative parameters such
as those found in refs. \cite{Bertolami1,Carroll}. However our results stated in Theorems 4.2 and 4.3 and in Lemmata 4.4 and 4.5 may provide a
framework to obtain physical predictions which can be tested experimentally. Indeed, in Lemma 4.4, we construct certain NCWMs which maximize a
certain functional. These measures are not Wigner measures. Moreover, they look like the tensor product of two lower dimensional states. It is
then natural to look for entangled combinations of such states as the one in (\ref{Eq4.1.40}). It is well known that entanglement "amplifies"
the quantum nature of a system, so that its quantumness can be observed macroscopically. We hope that the entangled states of the form
(\ref{Eq4.1.40}) which exist only in noncommutative quantum mechanics may play this role here. In a certain sense, the formalism seems ripe to
develop a continuous variable quantum computation and quantum information for noncommutative quantum mechanics \cite{Giedke}.

Here is a brief outlook of our work. In section 2, we recapitulate certain aspects of the Weyl-Wigner formulation which will be generalized to
noncommutative quantum mechanics in the sequel. In section 3 we address the Weyl-Wigner formulation of a $d$-dimensional noncommutative quantum
mechanical system. We define the notion of NCWM and state some of its properties in $d$ dimensions. In
particular, (i) we obtain the uncertainty principle for NCWM in the Robertson-Schr\"odinger form; (ii) we establish an upper bound on the purity
of a NCWM; (iii) we derive the Hudson Theorem which classifies the set of positive, pure state NCWMs; (iv) we obtain necessary and sufficient
conditions for a Gaussian to be a NCWM or a pure state NCWM (Littlejohn's Theorem); (v) we define the linear transformations which map a NCWM to
another NCWM and which leave the uncertainty principle unchanged. In section 4, we specialize to the case $d=2$: (i) we show that the marginal
distributions are not necessarily non-negative and that they must satisfy certain purity-type constraints which are not required in ordinary
quantum mechanics; (ii) we show that these bounds can be saturated by certain states which have the structure of a tensor product of two lower
dimensional quasi-distributions; (iii) we derive necessary and sufficient conditions for a phase-space function to be a {\it bona fide} NCWM;
(iv) we establish the relation between the states in classical mechanics, and in ordinary and noncommutative quantum mechanics (see figure1).

\section{Weyl-Wigner formulation of quantum mechanics}

In this section we review some basic facts about the Weyl-Wigner formulation of ordinary quantum mechanics which will be relevant for the sequel. Our analysis is restricted to the case of flat phase-spaces. The reader is referred to Refs. \cite{Bayen1,Bayen2,Dias1,Dias2,Dias3,Dias4,Dubin,Ellinas,Folland,Pool,Segal,Vey,Wigner} for a more detailed presentation and to Refs. \cite{Bordemann,Fedosov1,Fedosov2,Kontsevich,Wilde} for the generalization of the formalism to the non-flat case.

Let us then settle down the preliminaries: we consider a $d$-dimensional dynamical system, such that its classical formulation lives in the flat phase-space $T^*M \simeq (\bkR^d )^* \times \bkR^d \simeq \bkR^{2d}$. A global Darboux chart can be defined on $T^*M$
\begin{eqnarray}
\xi&=&(R,\Pi)=(R_i, \Pi_i),\,\, i=1, \cdots,d,\nonumber\\
\xi_{\alpha}=R_{\alpha},\,\, \alpha&=&1,..,d \hspace{0.2cm} \mbox{and} \quad \xi_{\alpha}=\Pi_{\alpha-d}, \,\,\alpha=d+1,..,2d
\label{Eq2.1}
\end{eqnarray}
in terms of which the sympletic structure reads $dR_i \wedge d\Pi_i$. In the sequel the Latin letters run from $1$ to $d$ (e.g. $i,j,k, \cdots = 1, \cdots, d$), whereas the
Greek letters stand for phase-space indices (e.g. $\alpha, \beta, \gamma, \cdots = 1, \cdots, 2d$), unless otherwise stated. Moreover, summation over repeated indices is assumed.
Upon quantization, the set $\{\hat{\xi}_{\alpha},\, \alpha = 1,..,2d\}$ satisfies the commutation relations of the standard Heisenberg algebra:
\begin{equation}
\left[\hat{\xi}_{\alpha}, \hat{\xi}_{\beta} \right] = i \hbar j_{\alpha \beta}, \hspace{1 cm}
{\bf J}=
\left(
\begin{array}{c c}
{\bf 0}_{d \times d} & {\bf I}_{d \times d}\\
- {\bf I}_{d \times d} & {\bf 0}_{d \times d}
\end{array}
\right),
\label{Eq2.2}
\end{equation}
where $j_{\alpha \beta}$ are the components of the matrix ${\bf J}$. Moreover $\{\hat R_i,\, i=1,..,d \}$ constitute a complete set of commuting observables. Let us denote by $|R\rangle$ the general eigenstate of $\hat R$ associated
to the array of eigenvalues $R_i,\, i=1,..,d$ and spanning the Hilbert space $\H =L_2(\bkR^d,dR)$ of
complex valued functions $\psi :\bkR^{d}\longrightarrow \CO$ $(\psi (R) = \langle R| \psi\rangle)$, which are square integrable with respect to
the standard Lesbegue measure $dR$.
The scalar product in $\H$ is given by:
\begin{equation}
(\psi,\phi)_{\H}=\int_{\bkR^d} \,\overline{\psi(R)} \phi(R)  dR
\label{Eq2.3}
\end{equation}
where the over-bar denotes complex conjugation.\\

\noindent We now introduce the Gel'fand triple of vector spaces
\cite{Antoine1,Antoine2,Bohm,Gelfand,Roberts}:
\begin{equation}
\S(\bkR^d) \subset \H \subset \S'(\bkR^d),
\label{Eq2.4}
\end{equation}
where $\S(\bkR^d)$ is the space of all complex valued functions $t(R)$ that are infinitely smooth and, as $||R|| \to \infty$, they and all their partial derivatives decay to zero faster than any power of $1/||R||$. $\S (\bkR^d)$ is the space of rapid descent test functions \cite{Grubb,Hormander,Zemanian}, and $\S'(\bkR^d)$ is its dual, i.e. the space of tempered distributions. In analogy with (\ref{Eq2.4}) let us also introduce the triple:
\begin{equation}
\S (\bkR^{2d}) \subset \F=L_2(\bkR^{2d},dRd\Pi) \subset \S'(\bkR^{2d}),
\label{Eq2.5}
\end{equation}
where $\F $ is the set of square integrable phase-space functions with scalar product:
\begin{equation}
(F,G)_{\F}={1\over(2\pi \hbar)^d} \int_{\bkR^{2d}}  \,\, \overline{F(\xi)} G(\xi)  d \xi.
\label{Eq2.6}
\end{equation}
Finally, let $\hat{\S}'$ be the set of linear operators admitting a representation of the form \cite{Dubin}:
\begin{equation}
\hat A: \S(\bkR^d) \longrightarrow \S'(\bkR^d); \, \, \psi(R) \longrightarrow (\hat A \psi)(R)= \int_{\bkR^d}  \, A_K(R,R') \psi(R') dR',
\label{Eq2.7}
\end{equation}
where $A_K(R,R')=\langle R|\hat A|R'\rangle \in \S'(\bkR^{2d})$ is a distributional kernel.
The elements of $\hat{\S}'$ are named generalized operators.\\

\noindent
The Weyl-Wigner transform is the linear one-to-one invertible map \cite{Bracken,Dubin,Segal}:
\begin{eqnarray}
W_{\xi}: \hat{\S}'  & \longrightarrow & \S'(\bkR^{2d});  \nonumber\\
\hat A & \longrightarrow & A(R,\Pi)= W_{\xi}(\hat A) = \hbar^d  \int_{\bkR^d} \, e^{-i\Pi\cdot y}
A_K(R+\frac{\hbar}{2} y,R-\frac{\hbar}{2} y) dy  \nonumber \\
&&\hspace{3.2cm} =\hbar^d  \int_{\bkR^d}  \, e^{-i\Pi\cdot y}\,
 \langle R+\frac{\hbar}{2} y| \hat A |R-\frac{\hbar}{2} y\rangle dy
\label{Eq2.8}
\end{eqnarray}
where the Fourier transform is taken in the usual generalized way\footnote{The Fourier transform $T_F$ of a generalized function $B \in \S'(\bkR^n)$ (for $n \ge 1$) is another generalized function $T_F[B]\in \S'(\bkR^n)$ which is defined by $ \langle T_F[B],t\rangle=\langle B,T_F[t]\rangle$ for all $t \in \S(\bkR^n)$ \cite{Grubb,Hormander,Zemanian} and where $\langle A,t\rangle$ denotes the action of a distribution $A \in \S'(\bkR^n)$ on the test function $t \in \S (\bkR^n)$.} and the second form of the Weyl-Wigner map in terms of Dirac's bra and ket notation is more standard.

There are two important restrictions of $W_{\xi}$:

(1) The first one is to the vector space $\hat{\F}$ of Hilbert-Schmidt operators on $\H$, which admit a representation of the form (\ref{Eq2.7}) with $A_K(R,R') \in \F$, regarded as
an algebra with respect to the standard operator product, which is an inner operation in $\hat{\F}$.
In this space we may also introduce the inner product $(\hat A,\hat B)_{\hat{\F}} \equiv tr(\hat A^{\dagger} \hat B)$ and the Weyl-Wigner map
$ W_{\xi}: \hat{\F} \longrightarrow \F$ becomes a one-to-one invertible unitary transformation.

(2) The second one is to the enveloping algebra $\hat{\A}({\H})$ of the Heisenberg-Weyl
Lie algebra which contains all polynomials of the fundamental operators $\hat R, \hat{\Pi}$ and $\hat{I}$
modulo the ideal generated by the Heisenberg commutation relations. In this case the Weyl-Wigner
transform $W_{\xi}:\hat{\A}(\H) \longrightarrow \A(\bkR^{2d})$ becomes a one-to-one invertible map from $\hat{\A}(\H)$ to the algebra $\A (\bkR^{2d})$ of polynomial
functions on $\bkR^{2d}$. In particular $W_{\xi}(\hat I)=1$, $W_{\xi}(\hat R)=R$ and $W_{\xi}(\hat{\Pi})=\Pi$.\\

\noindent
The previous restrictions can be promoted to isomorphisms, if $\F$ and $\A (\bkR^{2d})$
are endowed with a suitable product. This is defined by:
\begin{equation}
W_{\xi}(\hat A) \star_{\hbar} W_{\xi}(\hat B) := W_{\xi}(\hat A \hat B)
\label{Eq2.9}
\end{equation}
for $\hat A,\hat B \in \hat{\S}'$ such that $\hat A\hat B \in \hat{\S}'$.
The $\star$-product admits the kernel representation:
\begin{equation}
A(\xi) \star_{\hbar} B(\xi) = {1\over(\pi \hbar)^{2d}} \int_{\bkR^{2d}} \int_{\bkR^{2d}} ~ A(\xi') B (\xi'') \times  \exp \left[ - \frac{2i}{\hbar} \left(\xi- \xi'\right)^T {\bf J} \left(\xi'' - \xi \right) \right] d \xi'd \xi''
\label{Eq2.10}
\end{equation}
and is an inner operation in $\F$ as well as in $\A (\bkR^{2d})$. The previous formula is also valid if we want to compute $A \star_{\hbar} B$ with $A \in \F$ and $B \in \A (\bkR^{2d})$, in which case $A \star_{\hbar} B \in \S' (\bkR^{2d})$. On the other hand, if $A \in \A (\bkR^{2d})$ and $B \in \A (\bkR^{2d}) \cup \F$ the $\star$-product can also be written in the well-known form \cite{Groenewold,Moyal}:
\begin{equation}
A(\xi) \star_{\hbar} B(\xi) = A(\xi) e^{ \frac{i\hbar}{2} \buildrel{\leftarrow}\over\partial_{\xi_{\alpha}} j_{\alpha \beta} \buildrel{\rightarrow}\over\partial_{\xi_{\beta}}} B(\xi),
\label{Eq2.11}
\end{equation}
where $\partial_{\xi_{\alpha}} A(\xi) = \frac{\partial}{\partial_{\xi_{\alpha}}} A (\xi)$ and $j_{\alpha \beta}$ are the components of the sympletic matrix (\ref{Eq2.2}).

When the Weyl-Wigner map is applied to a density matrix $\hat{\rho} \in \hat{{\F}}$, we get the celebrated Wigner measure or quasi-distribution:
\begin{equation}
f^C (\xi) = \frac{1}{(2 \pi \hbar)^d} W_{\xi} (\hat{\rho}) (\xi).
\label{Eq2.12}
\end{equation}
The superscript "C" will become clear in the sequel. If the system is in a pure state $\hat{\rho} = |\psi>< \psi|$, then the Wigner measure reads:
\begin{equation}
f^C (R, \Pi) = \frac{1}{(\pi \hbar)^d} \int_{\bkR^d}~ e^{-2 i y \cdot \Pi /\hbar} \overline{\psi (R-y)} \psi (R+y)  dy .
\label{Eq2.13}
\end{equation}
Mixed states are just convex combinations of the latter. It is important to recapitulate some properties of Wigner measures. They are real and normalized phase-space functions which admit marginal ditributions. For instance, for a pure state (\ref{Eq2.13}) these read:
\begin{equation}
\begin{array}{l}
\P_R (R) = \int_{\bkR^d} ~ f^C (R, \Pi) d \Pi = | \psi (R)|^2  \ge 0, \\
 \\
 \P_{\Pi} (\Pi) = \int_{\bkR^d}  ~ f^C (R, \Pi) d R = | \hat{\psi} (\Pi)|^2 \ge 0,
\end{array}
\label{Eq2.14}
\end{equation}
where $\hat{\psi} (\Pi)$ is the Fourier transform of $\psi (R)$. However, even though their marginal distributions are {\it bona fide} probability densities for position or momentum, the Wigner measures themselves cannot be regarded as joint probability distributions for position and momentum. The reason is that they are not necessarily everywhere non-negative. This is usually regarded as a manifestation of Heisenberg's uncertainty principle. Nevertheless, the existence of strictly positive Wigner measures is not precluded. It is a difficult (and unfinished) task to classify the entire set of Wigner functions which are everywhere non-negative. An important step in that direction is Hudson's beautiful theorem \cite{Hudson,Soto}:

\begin{theorem} {\bf (Hudson, Soto, Claverie)} Let $\psi \in L^2 (\bkR^d, d R)$ be a state vector. The Wigner measure of $\psi$ is non-negative iff $\psi$ is a Gaussian state.
\end{theorem}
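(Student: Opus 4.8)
The easy implication --- that a Gaussian state has a strictly positive Gaussian Wigner measure --- I would dispose of by direct computation: substituting $\psi(R)=c\,e^{-\frac12 R^{T}MR+b^{T}R}$ (with $M$ symmetric and $\mathrm{Re}\,M>0$) into (\ref{Eq2.13}) and carrying out the Gaussian integral over $y$ produces again a Gaussian in $(R,\Pi)$ whose exponent is a real negative-definite quadratic form, so $f^{C}>0$ everywhere. I will not belabour this.

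For the converse I would argue by contradiction, using two facts already established in Section~2: the unitarity of $W_{\xi}$ on $\hat{\F}$, which yields the Moyal overlap identity $(2\pi\hbar)^{d}\int_{\bkR^{2d}}f^{C}_{\psi}\,f^{C}_{\phi}\,d\xi=|(\psi,\phi)_{\H}|^{2}$ for normalized $\psi,\phi\in\H$, and the normalization $\int_{\bkR^{2d}}f^{C}_{\psi}\,d\xi=\|\psi\|^{2}=1$. First I would take $\phi=\phi_{z}$ a coherent state centred at an arbitrary $z\in\bkR^{2d}$; its Wigner measure is a phase-space Gaussian, hence strictly positive everywhere. If $f^{C}_{\psi}\ge0$ and $(\psi,\phi_{z})_{\H}=0$, then $f^{C}_{\psi}f^{C}_{\phi_{z}}=0$ almost everywhere, and since $f^{C}_{\phi_{z}}>0$ this forces $f^{C}_{\psi}=0$ a.e., contradicting $\int f^{C}_{\psi}=1$. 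So non-negativity of $f^{C}_{\psi}$ implies $(\psi,\phi_{z})_{\H}\ne0$ for \emph{every} $z$.

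Next I would convert this into an analyticity statement. After the standard change of variables $z=(q_{0},p_{0})\mapsto w\propto q_{0}-ip_{0}\in\CO^{d}$, the coherent-state amplitude $(\phi_{z},\psi)_{\H}$ equals a nowhere-vanishing Gaussian prefactor times the Bargmann transform $(B\psi)(w)$, an entire function on $\CO^{d}$ belonging to the Bargmann--Fock space of functions square-integrable against $e^{-|w|^{2}}$; the reproducing-kernel bound then gives $|(B\psi)(w)|\le C\,e^{|w|^{2}/2}$, so $B\psi$ has order at most $2$, and by the previous step it is zero-free. I would then apply Hadamard factorization: restricted to any complex line $t\mapsto a+tv$, $B\psi$ is a zero-free one-variable entire function of order $\le2$, hence the exponential of a quadratic polynomial in $t$; since $B\psi$ is zero-free on the simply connected domain $\CO^{d}$, $\log B\psi$ is globally defined, and the line restrictions force it to be a polynomial of total degree $\le2$, i.e. $B\psi(w)=\exp(w^{T}Aw+b^{T}w+c)$ for some symmetric complex $d\times d$ matrix $A$. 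Finally, membership in the Bargmann--Fock space forces $\|A\|<1/2$, and the inverse Bargmann transform of such an exponential is precisely a Gaussian $\psi(R)=c'e^{-\frac12 R^{T}MR+\beta^{T}R}$ with $\mathrm{Re}\,M>0$; hence $\psi$ is a Gaussian state.

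The step I expect to be the real obstacle is the passage from one to several variables in the Hadamard argument: the classical factorization theorem is one-dimensional, so for $d>1$ one cannot simply quote it but must restrict to all complex lines and then glue the local quadratic exponents into a single global degree-$\le2$ polynomial --- this is exactly the refinement of Hudson's original $d=1$ proof due to Soto and Claverie, and it is where genuine care is needed. A secondary nuisance is keeping track of the $\hbar$-dependent constants linking $\phi_{z}$, the Bargmann kernel and the Fock-space weight, so that the ``zero-free and order $\le2$'' dichotomy can be invoked cleanly.
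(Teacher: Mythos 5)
Your proposal is correct, but note that the paper itself offers no proof of this theorem: it is stated as a known result and attributed to the references \cite{Hudson,Soto}. The argument you give --- strict positivity of coherent-state Wigner functions plus the Moyal overlap identity to exclude zeros of the Bargmann transform, the reproducing-kernel growth bound to get order $\le 2$, Hadamard factorization along complex lines glued into a global quadratic exponent, and Fock-space membership (rather than the pointwise bound alone) to force $\mathrm{Re}\,M>0$ --- is precisely the classical Hudson/Soto--Claverie proof that the paper cites, and you correctly flag the only two points where care is genuinely needed.
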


There is thus far no analogous result for mixed states (see \cite{Werner1} for some attempts at classifying positive Wigner measures of mixed states). A class of positive Wigner measures of mixed states can be constructed by convoluting Wigner measures with suitable kernels (see below).

Another important property of Wigner measures is the following bound:
\begin{equation}
\int_{\bkR^{2d}}   \left[ f^C (\xi) \right]^2  ~d \xi \le \frac{1}{(2 \pi \hbar)^d}.
\label{Eq2.15}
\end{equation}
The equality holds iff the state is pure. For this reason, one calls the integral on the left-hand side the {\it purity} of the system.

Equally interesting is the issue of whether the previous properties are necessary and sufficient conditions for a phase-space function to be a Wigner measure. The answer is no. In fact these are just necesssary conditions. A phase-space function $f (\xi)$ is a Wigner measure iff there exits $b(\xi) \in \F$ such that:
\begin{eqnarray}
(i) & \int_{\bkR^{2d}} | b(z)|^2 ~d \xi =1 \label{Eq2.16}\\
(ii) & f( \xi) = \overline{b (\xi)} \star_{\hbar} b (\xi) \label{Eq2.17}
\end{eqnarray}
If the state is pure, then we may take $b (\xi) = (2 \pi \hbar)^{d/2} f (\xi)$, i.e.:
\begin{equation}
f_{pure}^C (\xi) \star_{\hbar} f_{pure}^C (\xi) = \frac{1}{(2 \pi \hbar)^d} f_{pure}^C (\xi).
\label{Eq2.18}
\end{equation}
These necessary and sufficient requirements are equivalent to another set of conditions called the {\it KLM conditions}. To state the latter, we first need the concept of {\it symplectic Fourier transform}:

\begin{definition} Let $f(\xi) \in \F$. We define its symplectic Fourier transform according to:
\begin{equation}
\tilde f^{{\bf J}} (a) = \int_{\bkR^{2d}} f(\xi) \exp \left( -i a^T {\bf J} \xi \right) ~d \xi .
\label{Eq2.19}
\end{equation}
The superscript "${\bf J}$" will be useful for the sequel. The formula (\ref{Eq2.19}) can be inverted:
\begin{equation}
f (\xi) = \frac{1}{(2 \pi )^{2 d}} \int_{\bkR^{2d}}\tilde f^{{\bf J}} (a) \exp \left( i a^T {\bf J} \xi \right) ~ d a .
\label{Eq2.20}
\end{equation}
\end{definition}

\begin{definition} The symplectic Fourier transform $\tilde f^{{\bf J}} (a)$ is said to be of the $\alpha$-positive type if the $m \times m$ matrix with entries
\begin{equation}
M_{jk} = \tilde f^{{\bf J}} (a_j - a_k) \exp \left( - \frac{i \alpha}{2} a_k^T {\bf J} a_j \right)
\label{Eq2.21}
\end{equation}
is hermitian and non-negative for any positive integer $m$ and any set of $m$ points $a_1, \cdots, a_m$ in the dual of the phase-space. By abuse of language we sometimes say that $f(\xi)$ is of the $\alpha$-positive type.
\end{definition}

With these definitions one can state the KLM (Kastler, Loupias, Miracle-Sole \cite{Kastler,Loupias,Narcowich1}) conditions, equivalent to (\ref{Eq2.16},\ref{Eq2.17}):

\begin{theorem} The phase-space function $f(\xi)$ is a Wigner measure, iff its symplectic Fourier transform $\tilde f^{{\bf J}} (a)$ satisfies the KLM conditions:
\begin{eqnarray}
(i) & \tilde f^{{\bf J}} (0)  & =1 \label{Eq2.22} \\
(ii) & \tilde f^{{\bf J}} (a) & {\mbox{is continuous and of $\hbar$-positive type.}} \label{Eq2.23}
\end{eqnarray}
\end{theorem}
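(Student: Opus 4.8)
The plan is to establish the equivalence between the pair of conditions (\ref{Eq2.16})--(\ref{Eq2.17}) and the KLM conditions (\ref{Eq2.22})--(\ref{Eq2.23}) by passing everything through the symplectic Fourier transform and recognizing the $\star$-product structure. First I would observe that condition (\ref{Eq2.22}), $\tilde f^{\bf J}(0)=1$, is nothing but the statement $\int_{\bkR^{2d}} f(\xi)\,d\xi = 1$, which is the normalization built into (\ref{Eq2.16}) together with (\ref{Eq2.17}): integrating $f(\xi) = \overline{b(\xi)}\star_\hbar b(\xi)$ over phase space and using the trace property of the $\star$-product (the integral of $A\star_\hbar B$ equals the integral of the pointwise product $AB$) gives $\int f = \int |b|^2 = 1$. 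So the normalization parts match trivially, and the content is in matching (\ref{Eq2.17}) with the $\hbar$-positive-type condition (\ref{Eq2.23}).

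The key computational step is to take the symplectic Fourier transform of the $\star$-product. Using the kernel representation (\ref{Eq2.10}), a direct Gaussian integration shows that
\begin{equation}
\widetilde{(A\star_\hbar B)}^{\bf J}(a) = \frac{1}{(2\pi)^{2d}}\int_{\bkR^{2d}} \tilde A^{\bf J}(a') \tilde B^{\bf J}(a-a')\, e^{\frac{i\hbar}{2}(a-a')^T {\bf J} a'}\, da',
\end{equation}
i.e. the $\star$-product becomes a twisted convolution at the level of symplectic Fourier transforms. Specializing to $A=\overline{b}$, $B=b$ and writing $g = \tilde b^{\bf J}$, one finds that $\widetilde{f}^{\bf J}(a)$ is the twisted convolution of (a reflected, conjugated copy of) $g$ with $g$. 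The standard fact — which I would invoke or reprove by a short argument — is that a function is a twisted convolution $\overline{g}\;\tilde\star\; g$ of some $L^2$ function $g$ with itself precisely when it is a continuous function of $\hbar$-positive type. One direction is immediate: if $\tilde f^{\bf J} = \overline g \,\tilde\star\, g$, then the matrix $M_{jk}$ in (\ref{Eq2.21}) factors as a Gram matrix $\langle v_j, v_k\rangle$ for suitable vectors $v_j$ built from $g$ and the points $a_j$, hence is Hermitian and non-negative. The converse — reconstructing $b$ from a given $\hbar$-positive-type $\tilde f^{\bf J}$ — is essentially an application of a GNS-type construction (or equivalently a noncommutative Bochner theorem, as in the Kastler--Loupias--Miracle-Sole papers cited): one builds a Hilbert space from the positivity data and extracts the square root.

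Concretely, I would structure the proof as: (1) identify (\ref{Eq2.22}) with normalization, as above; (2) derive the twisted-convolution formula for the symplectic Fourier transform of a $\star$-product; (3) show $(\ref{Eq2.17}) \Rightarrow (\ref{Eq2.23})$ by the Gram-matrix factorization; (4) show $(\ref{Eq2.23}) \Rightarrow (\ref{Eq2.17})$ by the GNS/Bochner reconstruction, citing \cite{Kastler,Loupias,Narcowich1} for the hard analytic content; (5) conclude by continuity that $b\in\F$. The main obstacle is step (4): the passage from pointwise positivity of all the finite matrices $M_{jk}$ to the global existence of an $L^2$ factor $b$ requires the operator-algebraic Bochner-type theorem rather than a bare computation, so I would either import it directly from the KLM literature or sketch the GNS construction — forming the algebra of ``test-function-smeared'' twisted-group-algebra elements, using $\alpha$-positivity to define a pre-inner-product, completing, and representing $f$ as a diagonal matrix element $\langle b, \pi(\cdot) b\rangle$. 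Everything else is Gaussian integral bookkeeping with the kernel (\ref{Eq2.10}).
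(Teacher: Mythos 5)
There is nothing in the paper to compare against: Theorem 2.4 is stated as a recalled classical result, with the proof delegated entirely to the cited literature \cite{Kastler,Loupias,Narcowich1}; the authors only use it (and its $\alpha$-deformed variant, Theorem 2.6) as input for the noncommutative generalization in Section 4.2. Your outline is, in substance, the standard KLM/Narcowich argument that those references contain, and it is correct as a sketch: the identification of $\tilde f^{\bf J}(0)=1$ with normalization via the trace property of $\star_\hbar$, the twisted-convolution form of the symplectic Fourier transform of a $\star$-product, and the Gram-matrix factorization giving the easy direction are all sound. The one point worth sharpening is the role of continuity in your step (4): the GNS construction from $\hbar$-positivity by itself only produces a state on the Weyl (twisted group) algebra, and it is precisely the continuity hypothesis that gives regularity of the resulting representation, so that the Stone--von Neumann theorem identifies it with a multiple of the Schr\"odinger representation and the state with a genuine density matrix $\hat\rho$ on $L^2(\bkR^d)$; only then does $b$ arise as (a normalized multiple of) the Weyl symbol of $\hat\rho^{1/2}$, recovering (\ref{Eq2.16})--(\ref{Eq2.17}). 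So continuity is the crucial analytic input of the converse, not a final bookkeeping step ``to conclude $b\in\F$''; with that caveat, importing the hard direction from the cited papers is exactly what the authors themselves do.
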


The concept of Narcowich-Wigner (NW) spectrum is useful in this context.

\begin{definition} The Narcowich-Wigner spectrum of a phase-space function $f(\xi) \in \F$ is the set:
\begin{equation}
\W (f) = \left\{ \alpha \in \bkR \left| \tilde f^{{\bf J}} (a) {\mbox{ is of the $\alpha$-positive type}} \right. \right\}.
\label{Eq2.24}
\end{equation}
Consequently, one may say that if $f$ is a Wigner measure, then $\hbar \in \W (f)$.
\end{definition}

If we analyze carefully the equivalence between the necessary and suffcient conditions (\ref{Eq2.16},\ref{Eq2.17}) and the set of KLM conditions (\ref{Eq2.22},\ref{Eq2.23}), we conclude that there is nothing special about $\hbar$. In fact, it is trivial to conclude that:

\begin{theorem} Let $f \in \F$, $\tilde f^{{\bf J}} $ be its symplectic Fourier transform, and $\alpha \in \bkR \backslash \left\{0 \right\}$. Then the following sets of conditions are equivalent:
\begin{equation}
\begin{array}{l l}
(i) & {\mbox{$\tilde f^{{\bf J}} (a)$ is continuous, $\tilde f^{{\bf J}} (0)=1$ and $\tilde f^{{\bf J}} (a)  $ is of the $\alpha$-positive type.}} \\
& \\
(ii) & {\mbox{There exists $b (\xi) \in \F$ such that $\int_{\bkR^{2d}} |b( \xi)|^2 ~d \xi  =1$ and $f(\xi) = \overline{b (\xi)} \star_{\alpha} b (\xi)$.}}
\end{array}
\label{Eq2.25}
\end{equation}

\end{theorem}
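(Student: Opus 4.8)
\noindent The claim is exactly the ``there is nothing special about $\hbar$'' remark made just above, so the plan is to reduce an arbitrary $\alpha\in\bkR\setminus\{0\}$ to the value $\hbar$ by an explicit linear change of phase-space variables and then invoke the characterization of Wigner measures already recalled in this section (the equivalence of ``$f$ is a Wigner measure'', of (\ref{Eq2.16})--(\ref{Eq2.17}), and of the KLM conditions (\ref{Eq2.22})--(\ref{Eq2.23})), which we may use as a black box for the value $\hbar>0$.

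First I would reduce to $\alpha>0$. Either of (i), (ii) already forces $f$ to be real: complex conjugation is an anti-automorphism of the $\star_{\alpha}$-product --- a change of variables in (\ref{Eq2.10}) gives $\overline{A\star_{\alpha}B}=\overline{B}\star_{\alpha}\overline{A}$ --- so if (ii) holds then $\overline{f}=\overline{\overline{b}\star_{\alpha}b}=\overline{b}\star_{\alpha}b=f$, while if (i) holds then hermiticity of the $2\times2$ matrix (\ref{Eq2.21}) built from $a_1=a$, $a_2=0$ forces $\tilde f^{{\bf J}}(-a)=\overline{\tilde f^{{\bf J}}(a)}$ for every $a$, i.e.\ $f$ is real. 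Next, a similar change of variables in (\ref{Eq2.10}) gives the swap identity $A\star_{\alpha}B=B\star_{-\alpha}A$, so condition (ii) holds for $\alpha$ with witness $b$ iff it holds for $-\alpha$ with witness $\overline{b}$; and for real $f$, since $\overline{\tilde f^{{\bf J}}(a)}=\tilde f^{{\bf J}}(-a)$, the matrix (\ref{Eq2.21}) at parameter $-\alpha$ and nodes $-a_1,\dots,-a_m$ is the entrywise complex conjugate of the one at $\alpha$ and $a_1,\dots,a_m$, and since a Hermitian matrix is non-negative iff its conjugate is, condition (i) holds for $\alpha$ iff it holds for $-\alpha$. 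Hence it is enough to handle $\alpha>0$.

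For $\alpha>0$ set $\lambda=\sqrt{\alpha/\hbar}$ and introduce the dilations $D_{\lambda}h(\xi)=\lambda^{d}h(\lambda\xi)$ (unitary on $L^{2}(\bkR^{2d})$ and commuting with complex conjugation) and $\Psi h(\xi)=\lambda^{2d}h(\lambda\xi)$. A change of variables in the kernel formula (\ref{Eq2.10}) --- valid for all elements of $\F$, unlike the differential form (\ref{Eq2.11}) --- shows that the Jacobians and numerical prefactors combine so that $(D_{\lambda}f)\star_{\hbar}(D_{\lambda}g)=\Psi(f\star_{\alpha}g)$; the fact that no leftover constant survives is precisely what singles out $\Psi$ (the power $\lambda^{2d}$ rather than the unitary $\lambda^{d}$). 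From this, $f=\overline{b}\star_{\alpha}b$ with $\|b\|_{L^{2}}=1$ if and only if $\Psi f=\overline{D_{\lambda}b}\star_{\hbar}(D_{\lambda}b)$ with $\|D_{\lambda}b\|_{L^{2}}=1$ (the converse obtained by the same construction with $\lambda$ replaced by $1/\lambda$), i.e.\ condition (ii) for $f$ at $\alpha$ is equivalent to condition (ii) for $\Psi f$ at $\hbar$. On the symplectic-Fourier side, $\widetilde{\Psi f}^{{\bf J}}(a)=\tilde f^{{\bf J}}(a/\lambda)$, so continuity and the value-at-the-origin condition transfer directly, while substituting $a_j=\lambda b_j$ in (\ref{Eq2.21}) and using $\hbar\lambda^{2}=\alpha$ turns the $\hbar$-positivity matrix of $\Psi f$ at $\{a_j\}$ into the $\alpha$-positivity matrix of $f$ at $\{b_j\}$; hence condition (i) for $f$ at $\alpha$ is equivalent to condition (i) for $\Psi f$ at $\hbar$. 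Applying the black-box characterization to $\Psi f$ then closes the chain: condition (i) for $f$ at $\alpha$ $\Leftrightarrow$ condition (i) for $\Psi f$ at $\hbar$ $\Leftrightarrow$ $\Psi f$ is a Wigner measure $\Leftrightarrow$ condition (ii) for $\Psi f$ at $\hbar$ $\Leftrightarrow$ condition (ii) for $f$ at $\alpha$.

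I anticipate no conceptual obstacle; the one thing requiring care is the bookkeeping of Jacobian and normalization factors, so that condition (ii) transports into \emph{exactly} $f=\overline{b}\star_{\hbar}b$ with no spurious constant --- this is what forces the precise definitions of $D_{\lambda}$ and $\Psi$ and the use of (\ref{Eq2.10}) rather than (\ref{Eq2.11}) --- together with the sign-matching of $\alpha$-positivity in the first step. One could instead bypass the change of variables and simply note that the standard proof of the KLM theorem uses only $\hbar\neq0$ and nowhere its positivity; the reduction above is preferable only in that it stays self-contained with respect to the statements quoted in this section.
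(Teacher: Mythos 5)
Your proposal is correct, and it takes a more explicit route than the paper, which offers no real proof at all: the paper merely remarks that a careful inspection of the equivalence between (\ref{Eq2.16},\ref{Eq2.17}) and the KLM conditions (\ref{Eq2.22},\ref{Eq2.23}) shows that "there is nothing special about $\hbar$", i.e.\ its implicit argument is exactly the alternative you mention in your closing sentence (re-run the standard KLM proof with $\hbar$ replaced by $\alpha$, whose sign is never used). Your argument instead keeps the $\hbar$-case results as black boxes and transports the statement: the dilation identity $(D_{\lambda}f)\star_{\hbar}(D_{\lambda}g)=\Psi(f\star_{\alpha}g)$ with $\lambda=\sqrt{\alpha/\hbar}$ (checked against the kernel form (\ref{Eq2.10}), with the Jacobians working out as you say), the transfer $\widetilde{\Psi f}^{{\bf J}}(a)=\tilde f^{{\bf J}}(a/\lambda)$ together with the substitution $a_j=\lambda b_j$ in (\ref{Eq2.21}), and, for $\alpha<0$, the reality of $f$ (forced by either condition) combined with the swap $a\star_{\alpha}b=b\star_{-\alpha}a$ — which the paper itself invokes for Corollary 2.7 — to flip the sign. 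All of these steps check out, including the unitarity of $D_{\lambda}$ on $L^2$ so that the normalization in condition (ii) is preserved and no spurious constant appears. What your approach buys is self-containedness relative to the statements quoted in Section 2 (no need to reopen the proof of Theorem 2.4); what the paper's tacit approach buys is brevity, at the cost of asking the reader to take on faith that the KLM argument is insensitive to the value and sign of the deformation parameter.
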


Here $\star_{\alpha}$ denotes the Moyal product (\ref{Eq2.10},\ref{Eq2.11}) with $\hbar$ replaced by $\alpha$. The case $\alpha =0$ is singular. Functions $\tilde f^{{\bf J}} (a)$ which are of $0$-positive type correspond, according to Bochner's theorem, to phase-space functions $f(\xi)$ which are everywhere non-negative.

From the previous theorem and the fact that $a(\xi) \star_{\alpha} b (\xi) = b(\xi) \star_{- \alpha} a (\xi)$, we can check that:

\begin{corollary} Let $f \in \F$. If $\alpha \in \W (f)$, then $- \alpha \in \W (f)$.
\end{corollary}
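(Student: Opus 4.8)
The plan is to argue directly from the definition of the $\alpha$-positive type, i.e. from the matrix (\ref{Eq2.21}), deducing the statement for $-\alpha$ from the hypothesis for $\alpha$ by the elementary device of reflecting the sample points $a_j \mapsto -a_j$. Proceeding this way one needs no normalization condition and no continuity of $\tilde f^{{\bf J}}$, which is convenient since the corollary assumes neither.

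First I would record that $\alpha \in \W(f)$ already forces $f$ to be real, equivalently $\tilde f^{{\bf J}}(-a) = \overline{\tilde f^{{\bf J}}(a)}$ for all $a$. Indeed, the hermiticity half of the $\alpha$-positive type condition, applied with $m = 2$ and the points $a_1 = a$, $a_2 = 0$, reads precisely $\tilde f^{{\bf J}}(a) = \overline{\tilde f^{{\bf J}}(-a)}$, the phase factors $\exp(-\frac{i\alpha}{2}a_k^T{\bf J}a_j)$ being trivial whenever one of the two points is the origin; for general $m$ the hermiticity identity collapses to the same statement after one uses ${\bf J}^T = -{\bf J}$ to write $a_j^T{\bf J}a_k = -a_k^T{\bf J}a_j$.

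Next, fix an integer $m$ and points $a_1,\dots,a_m$, and write $M^{(-\alpha)}_{jk} = \tilde f^{{\bf J}}(a_j-a_k)\exp(\frac{i\alpha}{2}a_k^T{\bf J}a_j)$ for the entries of the matrix (\ref{Eq2.21}) with $\alpha$ replaced by $-\alpha$. Feeding the reflected points $-a_1,\dots,-a_m$ into (\ref{Eq2.21}) for the original $\alpha$ yields, since $(-a_j)-(-a_k)=a_k-a_j$ and $(-a_k)^T{\bf J}(-a_j)=a_k^T{\bf J}a_j$, the matrix $N_{jk} = \tilde f^{{\bf J}}(a_k-a_j)\exp(-\frac{i\alpha}{2}a_k^T{\bf J}a_j)$, which by hypothesis is hermitian and non-negative. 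Using $\tilde f^{{\bf J}}(a_k-a_j) = \overline{\tilde f^{{\bf J}}(a_j-a_k)}$ from the previous step one obtains $\overline{N_{jk}} = \tilde f^{{\bf J}}(a_j-a_k)\exp(\frac{i\alpha}{2}a_k^T{\bf J}a_j) = M^{(-\alpha)}_{jk}$, that is $M^{(-\alpha)} = \overline{N}$. Finally, the entrywise complex conjugate of a hermitian non-negative matrix is again hermitian and non-negative: for any vector $v$ one has $v^{\dagger}\overline{N}v = \overline{\overline{v}^{\dagger}N\overline{v}}\ge 0$, and $\overline{N}^{\dagger}=\overline{N}$ follows from $N^{\dagger}=N$. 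Hence $M^{(-\alpha)}$ is hermitian and non-negative, and since $m$ and the $a_j$ were arbitrary, $\tilde f^{{\bf J}}$ is of $(-\alpha)$-positive type, i.e. $-\alpha \in \W(f)$.

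I do not expect a genuine obstacle here: the whole argument is bookkeeping of complex conjugates together with the antisymmetry of ${\bf J}$, and the one point that deserves care is that the reflection $a_j\mapsto -a_j$ converts $M^{(\alpha)}$ into the conjugate of $M^{(-\alpha)}$ only because $f$ is real — which is why it is worth extracting the reality of $f$ from $\alpha\in\W(f)$ at the outset. A shorter but less self-contained route is the one suggested by the identity $a\star_\alpha b = b\star_{-\alpha}a$ noted just above: if $f=\overline{b}\star_\alpha b$ with $\int_{\bkR^{2d}}|b(\xi)|^2\,d\xi=1$, then setting $c:=\overline{b}$ gives $f=b\star_{-\alpha}\overline{b}=\overline{c}\star_{-\alpha}c$ with $\int_{\bkR^{2d}}|c(\xi)|^2\,d\xi=1$, whence $-\alpha\in\W(f)$ by the preceding theorem. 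This, however, presupposes the hypotheses of that theorem — $\tilde f^{{\bf J}}(0)=1$ and the continuity of $\tilde f^{{\bf J}}$ — which do not appear in the corollary, so the direct computation above is preferable as a proof of the statement exactly as formulated.
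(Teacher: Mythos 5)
Your proof is correct, but it follows a genuinely different route from the paper. The paper obtains the corollary in one line from Theorem 2.6 together with the identity $a(\xi)\star_{\alpha}b(\xi)=b(\xi)\star_{-\alpha}a(\xi)$: if $f=\overline{b}\star_{\alpha}b$ with $b$ normalized, then $f=\overline{c}\star_{-\alpha}c$ for $c=\overline{b}$, so $-\alpha\in\W(f)$ by the converse direction of that theorem — exactly the ``shorter but less self-contained route'' you sketch at the end. Your primary argument instead works directly with Definition 2.3: you first extract the hermitian symmetry $\tilde f^{{\bf J}}(-a)=\overline{\tilde f^{{\bf J}}(a)}$ from the hermiticity half of the $\alpha$-positivity condition (the $m=2$ computation with $a_1=a$, $a_2=0$ is right, the phases vanishing by antisymmetry of ${\bf J}$), then observe that the positivity matrix for $-\alpha$ at points $a_1,\dots,a_m$ is the entrywise conjugate of the positivity matrix for $\alpha$ at the reflected points $-a_1,\dots,-a_m$, and conclude since the conjugate of a hermitian non-negative matrix is hermitian non-negative. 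All steps check out. What your approach buys is that it proves the corollary exactly as stated, for any $f\in\F$ with $\alpha\in\W(f)$, without invoking the normalization $\tilde f^{{\bf J}}(0)=1$ or the continuity of $\tilde f^{{\bf J}}$ that Theorem 2.6 presupposes (and it works for $\alpha=0$ as well, where the $\star_{\alpha}$-characterization is singular); what the paper's route buys is brevity, since the heavy lifting is already contained in Theorem 2.6. Your closing remark correctly identifies the only delicate point, namely that the reflection trick needs the reality of $f$, which you rightly derive from the hypothesis rather than assume.
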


Let $f \natural g$ denote the convolution:
\begin{equation}
(f \natural g ) (\xi) : = \int_{\bkR^{2d}} ~ f( \xi - \xi') g (\xi') d \xi'.
\label{Eq2.25A}
\end{equation}
We use this somewhat unusual notation for the convolution to avoid confusion with the star product.

An important result concerning the convolution of phase-space functions was proved in ref.\cite{Narcowich1}, using the fact that the Schur (or Hadamard) product of hermitian and non-negative matrices is again a hermitian and non-negative matrix:

\begin{theorem} The NW spectrum of the convolution $f \natural g$ contains all elements of the form $\alpha_1 + \alpha_2$ with $\alpha_1 \in \W (f)$ and $\alpha_2 \in \W (g)$.
\end{theorem}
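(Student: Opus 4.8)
The plan is to reduce the statement to the classical \emph{Schur product theorem} --- the Hadamard (entrywise) product of two hermitian, positive semi-definite matrices is again hermitian and positive semi-definite --- exactly as suggested in the sentence preceding the theorem. The only analytic ingredient required is that the symplectic Fourier transform turns the convolution $\natural$ into ordinary pointwise multiplication. Starting from $(f\natural g)(\xi)=\int_{\bkR^{2d}}f(\xi-\xi')g(\xi')\,d\xi'$, I would substitute $\eta=\xi-\xi'$ in the defining integral (\ref{Eq2.19}) and use that $a^T{\bf J}\xi$ is linear in $\xi$ to factor the exponential, obtaining
\[
\widetilde{(f\natural g)}^{\,{\bf J}}(a)\;=\;\tilde f^{\,{\bf J}}(a)\,\tilde g^{\,{\bf J}}(a).
\]
In particular $\widetilde{(f\natural g)}^{\,{\bf J}}(0)=\tilde f^{\,{\bf J}}(0)\,\tilde g^{\,{\bf J}}(0)$, and continuity of the product is inherited from that of the two factors.

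Next, fix $\alpha_1\in\W(f)$ and $\alpha_2\in\W(g)$, set $\alpha:=\alpha_1+\alpha_2$, and let $a_1,\dots,a_m$ be arbitrary points in the dual of phase space. The $m\times m$ matrix whose hermiticity and non-negativity must be checked has entries
\[
M_{jk}\;=\;\widetilde{(f\natural g)}^{\,{\bf J}}(a_j-a_k)\,\exp\!\left(-\frac{i\alpha}{2}\,a_k^T{\bf J}a_j\right).
\]
Invoking the convolution identity above and the elementary factorization $\exp\!\left(-\frac{i\alpha}{2}a_k^T{\bf J}a_j\right)=\exp\!\left(-\frac{i\alpha_1}{2}a_k^T{\bf J}a_j\right)\exp\!\left(-\frac{i\alpha_2}{2}a_k^T{\bf J}a_j\right)$, I would write $M_{jk}=A_{jk}B_{jk}$ with
\[
A_{jk}=\tilde f^{\,{\bf J}}(a_j-a_k)\exp\!\left(-\frac{i\alpha_1}{2}a_k^T{\bf J}a_j\right),\qquad B_{jk}=\tilde g^{\,{\bf J}}(a_j-a_k)\exp\!\left(-\frac{i\alpha_2}{2}a_k^T{\bf J}a_j\right).
\]
Thus $M$ is precisely the Hadamard product, on the common nodes $a_1,\dots,a_m$, of the matrix attached to $(f,\alpha_1)$ and the matrix attached to $(g,\alpha_2)$ in the sense of Definition~\ref{...} (equation (\ref{Eq2.21})).

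Since $\alpha_1\in\W(f)$, the matrix $A$ is hermitian and non-negative; since $\alpha_2\in\W(g)$, so is $B$. By the Schur product theorem the Hadamard product $M$ is hermitian and non-negative. As $m$ and the nodes $a_1,\dots,a_m$ were arbitrary, $\widetilde{(f\natural g)}^{\,{\bf J}}$ is of $\alpha$-positive type, i.e. $\alpha_1+\alpha_2\in\W(f\natural g)$, which is the assertion. (No case distinction for $\alpha_1=0$ or $\alpha_2=0$ is needed: the $0$-positive-type matrices $\tilde f^{\,{\bf J}}(a_j-a_k)$ are again covered by the same Schur argument.)

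The algebra is entirely routine once the exponential is split, so I expect the genuine obstacle to be purely functional-analytic: one must ensure that $f\natural g$ lies in a class for which $\W(f\natural g)$ is meaningful and for which the convolution theorem for $\tilde{\cdot}^{\,{\bf J}}$ is valid. For general $f,g\in\F$ this is not automatic (a convolution of two $L^2$ functions need not be $L^2$), so the rigorous version requires a short argument --- for instance restricting first to a suitable dense subclass of $\F$ and passing to the limit, or simply working with the hypothesis that the relevant symplectic Fourier transforms are continuous and bounded so that their product again defines a legitimate symplectic Fourier transform.
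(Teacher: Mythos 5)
Your proposal is correct and follows essentially the same route the paper takes: the paper attributes the result to Narcowich and indicates precisely this argument (symplectic Fourier transform turning $\natural$ into a pointwise product, the splitting of the exponential factor, and the Schur--Hadamard product theorem), and it spells out the identical reasoning in its proof of the noncommutative analogue, Theorem 4.10. Your closing remark about the $L^2$ technicalities of the convolution is a sensible caution but does not change the substance of the argument.
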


Nevertheless, it is not true that $\W (f \natural g) = \left\{ \alpha_1 + \alpha_2 \left| \alpha_1 \in \W (f), ~ \alpha_2 \in \W (g) \right. \right\}$. Counter-examples can be found in \cite{Werner1}.

An immediate consequence of Corollary 2.7 and Theorem 2.8 is that we can construct positive Wigner measures by convoluting a Wigner measure $f^C(\xi)$ with another Wigner measure $g^C (\xi)$, which, in addition to being of $\hbar$-positive type, is also of the $0$-positive or $2 \hbar$-positive type. In ref.\cite{Narcowich1} functions $g^C(\xi)$ with these characteristics where explicitly constructed. Moreover, by resorting to this concept of $\hbar$-positivity, Narcowich stated necessary and sufficient conditions for a Gaussian to be a Wigner measure:

\begin{lemma} Let ${\bf A}$ be a real, symmetric, positive defined $2 d \times 2d$ matrix. Then the Gaussian
\begin{equation}
f(\xi) = \sqrt{\frac{\det {\bf A}}{\pi^{2 d}}} \exp \left[ - (\xi -\xi_0)^T {\bf A} ( \xi - \xi_0) \right]
\label{Eq2.26}
\end{equation}
is a Wigner measure iff the matrix ${\bf B} = {\bf A}^{-1} + i \hbar {\bf J}$ is a non-negative matrix in $\CO^{2d}$.
\end{lemma}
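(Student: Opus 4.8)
The plan is to apply Theorem 2.6 with $\alpha = \hbar$, which reduces the claim to verifying that the symplectic Fourier transform $\tilde{f}^{\bf J}(a)$ of the Gaussian (\ref{Eq2.26}) is continuous, satisfies $\tilde{f}^{\bf J}(0) = 1$, and is of the $\hbar$-positive type, precisely when ${\bf B} = {\bf A}^{-1} + i\hbar{\bf J} \geq 0$. First I would compute $\tilde{f}^{\bf J}(a)$ explicitly: since $f(\xi)$ is a normalized Gaussian, a standard Gaussian integral gives $\tilde{f}^{\bf J}(a) = \exp(-\frac{1}{4} a^T {\bf J}^T {\bf A}^{-1} {\bf J} a) \cdot e^{-i a^T {\bf J} \xi_0}$ (up to checking the constant, which is fixed to $1$ at $a=0$ by the normalization of $f$). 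Continuity and $\tilde{f}^{\bf J}(0)=1$ are then immediate, and the phase factor $e^{-i a^T {\bf J}\xi_0}$ only shifts the problem; since translating $f$ in phase-space does not affect whether it is a Wigner measure (conjugation by a translation is a unitary/Moyal automorphism), I may set $\xi_0 = 0$ without loss of generality, or absorb the phase into the $M_{jk}$ matrix and check it cancels.

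Next I would write out the matrix from Definition 2.3 with $\alpha = \hbar$: $M_{jk} = \tilde{f}^{\bf J}(a_j - a_k)\exp(-\frac{i\hbar}{2} a_k^T {\bf J} a_j)$. Substituting the Gaussian form and expanding $(a_j - a_k)^T {\bf J}^T {\bf A}^{-1} {\bf J}(a_j - a_k)$, I expect the diagonal-in-$j$ and diagonal-in-$k$ pieces to factor out as $\exp(-\frac{1}{4} a_j^T {\bf J}^T{\bf A}^{-1}{\bf J} a_j)$ and $\exp(-\frac{1}{4} a_k^T {\bf J}^T{\bf A}^{-1}{\bf J} a_k)$, so that $M_{jk} = \overline{v_j}\, N_{jk}\, v_k$ for suitable scalars $v_j$, where $N_{jk}$ depends only on the cross term $\frac{1}{2} a_j^T {\bf J}^T {\bf A}^{-1}{\bf J} a_k$ together with the phase $-\frac{i\hbar}{2}a_k^T{\bf J} a_j$. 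Conjugating by the diagonal matrix $\mathrm{diag}(v_j)$ preserves hermiticity and non-negativity, so $M = (M_{jk})$ is hermitian and non-negative iff $N = (N_{jk})$ is. One then recognizes $N_{jk} = \exp\!\big(\tfrac12 a_j^T {\bf J}^T{\bf A}^{-1}{\bf J} a_k - \tfrac{i\hbar}{2} a_k^T {\bf J} a_j\big)$; using ${\bf J}^T = -{\bf J}$ and ${\bf J}^2 = -{\bf I}$, the quadratic form ${\bf J}^T{\bf A}^{-1}{\bf J}$ can be rewritten, and the exponent becomes $\tfrac12 a_j^T \big({\bf J}^T {\bf A}^{-1} {\bf J} + i\hbar {\bf J}^T\big) a_k = \tfrac12 a_j^T {\bf J}^T ({\bf A}^{-1} + i\hbar {\bf J}) {\bf J} a_k = \tfrac12 b_j^T {\bf B}\, b_k$ after the linear substitution $b_j = {\bf J} a_j$ (with $b_j^T {\bf B} b_k$ involving ${\bf B} = {\bf A}^{-1} + i\hbar{\bf J}$; one checks the hermiticity $\overline{b_j^T {\bf B} b_k} = b_k^T {\bf B} b_j$, which holds since ${\bf B}^\dagger = {\bf B}$).

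So the claim reduces to the following linear-algebra fact: for a hermitian matrix ${\bf B}$ in $\CO^{2d}$, the matrix with entries $\exp(\tfrac12 b_j^T {\bf B}\, b_k)$ is hermitian and non-negative for every finite collection of vectors $b_1, \dots, b_m \in \bkR^{2d}$ if and only if ${\bf B} \geq 0$. The "if" direction follows from the Schur product theorem: if ${\bf B} \geq 0$ then the Gram-type matrix $(b_j^T {\bf B}\, b_k)$ is hermitian non-negative, hence so is each of its entrywise powers $\tfrac{1}{n!}(b_j^T{\bf B}\,b_k)^n$, and summing the exponential series (which converges and preserves non-negativity under entrywise limits) gives the result. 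The "only if" direction: if ${\bf B}$ has a negative eigenvalue, pick $b$ in the corresponding eigenspace and take $b_j = t_j b$ for scalars $t_j$; then $\exp(\tfrac12 t_j t_k\, b^T{\bf B}\,b)$ is, after a change of variables, a matrix of the form $\exp(-\lambda s_j s_k)$ with $\lambda > 0$, which is not non-negative for suitably chosen $s_j$ (this is the standard fact that $e^{-\lambda st}$ fails to be a positive-definite kernel for $\lambda > 0$; concretely its symplectic/ordinary Fourier transform would have to be a positive measure, contradiction, or one exhibits a small $3\times 3$ counterexample). The main obstacle is bookkeeping: getting every sign and factor of $2$ and $\hbar$ right in the passage from $M_{jk}$ through the diagonal conjugation to $\exp(\tfrac12 b_j^T {\bf B} b_k)$, and making sure the ${\bf J}$-conjugation genuinely converts ${\bf J}^T{\bf A}^{-1}{\bf J} + i\hbar {\bf J}^T$ into ${\bf B}$ rather than its transpose or negative; the positive-definiteness machinery itself (Schur product theorem plus Bochner) is completely standard once the algebra is pinned down.
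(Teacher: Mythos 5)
The paper itself offers no proof of this lemma (it is quoted from Narcowich \cite{Narcowich1}), so your argument has to stand on its own. Its first half does: the computation of $\tilde f^{{\bf J}}(a)$, the removal of the translation phase and of the factors $e^{-a_j^T{\bf Q}a_j}$ by diagonal conjugation, and the identification of the remaining cross term as $\tfrac12 b_j^T{\bf B}\,b_k$ with $b_j={\bf J}a_j$ are all correct (the signs do work out, using ${\bf J}^T{\bf J}={\bf I}$ and ${\bf J}^T=-{\bf J}$). The ``if'' direction is sound: since the $b_j$ are real, ${\bf B}\ge 0$ makes $\bigl(b_j^T{\bf B}\,b_k\bigr)_{jk}$ a hermitian non-negative Gram matrix, the Schur product theorem and entrywise limits give non-negativity of $\bigl(\exp(\tfrac12 b_j^T{\bf B}\,b_k)\bigr)_{jk}$, and Theorem 2.4 (KLM) then shows $f$ is a Wigner measure.

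The gap is in the ``only if'' direction. The test points $a_j$ (hence $b_j={\bf J}a_j$) in the $\hbar$-positivity condition must be \emph{real} vectors, whereas a negative-eigenvalue witness for ${\bf B}={\bf A}^{-1}+i\hbar{\bf J}$ is necessarily genuinely complex: for every real $b\neq 0$ one has $b^T{\bf B}\,b=b^T{\bf A}^{-1}b>0$, because $b^T{\bf J}b=0$ by antisymmetry. So there is no admissible choice $b_j=t_jb$ that reduces your kernel to $\exp(-\lambda s_js_k)$ with $\lambda>0$; taking $b$ a complex eigenvector is not allowed, and even then $b^T{\bf B}\,b$ (as opposed to $b^{\dagger}{\bf B}\,b$) need not be negative. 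The failure of ${\bf B}\ge 0$ can only be detected through the complex coefficients $c_j$ in the quadratic form $\sum_{j,k}\overline{c_j}c_kN_{jk}$, not through the points themselves. Two standard repairs: (a) take several real points scaled by $\epsilon$, choose complex $c_j$ with $\sum_jc_j=0$, expand to second order and let $\epsilon\to 0$; this yields $w^{\dagger}({\bf A}^{-1}+i\hbar{\bf J})w\ge 0$ for arbitrary complex $w=\sum_jc_jb_j$, which is how $\hbar$-positivity forces the Robertson--Schr\"odinger inequality; or (b) invoke directly the uncertainty principle satisfied by every Wigner measure (the commutative analogue of Proposition 3.5): $0\le\int(\overline{a_{\alpha}\tau_{\alpha}})\star_{\hbar}(a_{\beta}\tau_{\beta})f\,d\xi=a^{\dagger}\bigl({\bf \Sigma}+\tfrac{i\hbar}{2}{\bf J}\bigr)a$, and for the Gaussian (\ref{Eq2.26}) the covariance is ${\bf \Sigma}=\tfrac12{\bf A}^{-1}$, so ${\bf A}^{-1}+i\hbar{\bf J}\ge 0$. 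With either fix the proof is complete; as written, the ``only if'' half does not go through.
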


\noindent
Note that these conditions are equivalent to a Wigner measure satisfying the Robertson-Schr\"odinger form of the uncertainty principle \cite{Gosson1,Gosson2,Robertson,Schrodinger}.

One is also able to tell when a Gaussian represents a pure state \cite{Littlejohn}:

\begin{theorem} {\bf (Littlejohn)} The Gaussian in (\ref{Eq2.26}) is the Wigner measure of a pure state iff there exists a symplectic matrix ${\bf P} \in Sp(2d; \bkR)$ such that ${\bf A}= {\bf P}^T {\bf P}$.
\end{theorem}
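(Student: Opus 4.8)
The plan is to combine the two tools already in hand: Narcowich's criterion (Lemma 2.10), which says the Gaussian (\ref{Eq2.26}) is a Wigner measure precisely when ${\bf B}={\bf A}^{-1}+i\hbar{\bf J}\ge0$ in $\CO^{2d}$, and the purity bound (\ref{Eq2.15}), whose case of equality detects pure states. Since a phase-space translation affects neither ${\bf A}$ nor the property of being a (pure-state) Wigner measure, I would take $\xi_0=0$; and I would absorb the constant $\hbar$ into ${\bf A}$ (equivalently set $\hbar=1$; the general case is the harmless rescaling $\xi\mapsto\hbar^{-1/2}\xi$, sending $\star_\hbar\mapsto\star_1$ and ${\bf A}\mapsto\hbar{\bf A}$). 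First I would record the one-line Gaussian integral $\int_{\bkR^{2d}}[f(\xi)]^2\,d\xi=(2\pi)^{-d}\sqrt{\det{\bf A}}$, so that by (\ref{Eq2.15}), whose equality holds iff the state is pure, \emph{a Gaussian Wigner measure is pure iff $\det{\bf A}=1$}. It then suffices to show that, among Gaussians with $\det{\bf A}=1$, the ones that are Wigner measures are exactly those of the form ${\bf A}={\bf P}^T{\bf P}$ with ${\bf P}\in Sp(2d; \bkR)$.

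For sufficiency, suppose ${\bf A}={\bf P}^T{\bf P}$ with ${\bf P}$ symplectic. Since ${\bf P}^{-1}$ is again symplectic, the identity ${\bf S}{\bf J}{\bf S}^T={\bf J}$ (valid for all ${\bf S}\in Sp(2d; \bkR)$) gives
\begin{equation}
{\bf A}^{-1}+i{\bf J}={\bf P}^{-1}({\bf P}^{-1})^T+i\,{\bf P}^{-1}{\bf J}({\bf P}^{-1})^T={\bf P}^{-1}\,({\bf I}+i{\bf J})\,({\bf P}^{-1})^T .
\end{equation}
As ${\bf P}^{-1}$ is real this is a congruence of Hermitian matrices, so ${\bf A}^{-1}+i{\bf J}\ge0$ iff ${\bf I}+i{\bf J}\ge0$; and since ${\bf J}$ has spectrum $\{\pm i\}$, the matrix ${\bf I}+i{\bf J}$ has eigenvalues in $\{0,2\}$, hence is non-negative. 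By Lemma 2.10, $f$ is a Wigner measure; and $\det{\bf A}=(\det{\bf P})^2=1$ saturates (\ref{Eq2.15}), so $f$ is pure.

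For necessity, let $f$ be the Wigner measure of a pure state, so $\det{\bf A}^{-1}=1$ and, by the ``only if'' half of Lemma 2.10, ${\bf A}^{-1}+i{\bf J}\ge0$. The key move is to apply Williamson's theorem to the real symmetric positive-definite matrix ${\bf A}^{-1}$: there is ${\bf S}\in Sp(2d; \bkR)$ with ${\bf S}^T{\bf A}^{-1}{\bf S}={\bf D}$, ${\bf D}$ diagonal with entries $(\lambda_1,\dots,\lambda_d,\lambda_1,\dots,\lambda_d)$, $\lambda_j>0$. Conjugating the positivity condition by ${\bf S}$ and using ${\bf S}^T{\bf J}{\bf S}={\bf J}$ gives ${\bf D}+i{\bf J}\ge0$; after the permutation pairing $R_j$ with $\Pi_j$, this splits into the $d$ Hermitian blocks $\left(\begin{smallmatrix}\lambda_j & i\\ -i & \lambda_j\end{smallmatrix}\right)$, each non-negative iff $\lambda_j\ge1$. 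But $\prod_{j=1}^d\lambda_j^2=\det{\bf D}=\det{\bf A}^{-1}=1$, so $\prod_j\lambda_j=1$, which together with $\lambda_j\ge1$ forces $\lambda_j=1$ for every $j$. Hence ${\bf S}^T{\bf A}^{-1}{\bf S}={\bf I}$, i.e. ${\bf A}={\bf S}{\bf S}^T=({\bf S}^T)^T{\bf S}^T$, the required factorization with ${\bf P}={\bf S}^T\in Sp(2d; \bkR)$. (One could also run necessity through Theorem 2.1: the Gaussian $f$ is strictly positive, hence the Wigner measure of some Gaussian wavefunction, and evaluating (\ref{Eq2.13}) on such a $\psi$ directly displays ${\bf A}$ in factored symplectic form.)

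The hard part is the necessity direction, and within it the recognition that Williamson's theorem is the right instrument: it simultaneously standardizes ${\bf A}^{-1}$ while keeping ${\bf J}$ — hence the positivity constraint — in canonical form, so that everything decouples into $2\times2$ blocks. The actual crux is then the elementary observation that reals $\lambda_j\ge1$ with $\prod_j\lambda_j=1$ must all equal $1$, and this is exactly the step where the hypothesis ``pure'' (rather than merely ``Wigner measure'') is consumed. Everything else is bookkeeping with the symplectic identities ${\bf S}^T{\bf J}{\bf S}={\bf S}{\bf J}{\bf S}^T={\bf J}$ and the trivial spectrum of ${\bf I}+i{\bf J}$.
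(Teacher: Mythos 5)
Your argument is correct, and it is worth noting that the paper itself does not prove this statement — it is quoted from Littlejohn's review — so there is no in-paper proof to compare against. Your route is a clean, self-contained derivation built from tools already in Section 2: Lemma 2.9 (which you miscite as ``Lemma 2.10''; Theorem 2.10 is the statement being proved) supplies the criterion ${\bf A}^{-1}+i\hbar{\bf J}\ge 0$, the purity bound (\ref{Eq2.15}) together with the Gaussian integral $\int f^2\,d\xi=(2\pi)^{-d}\sqrt{\det{\bf A}}$ converts ``pure'' into a determinant condition, and Williamson's theorem (imported from outside the paper, but standard) reduces the positivity constraint to the scalar inequalities $\lambda_j\ge 1$, which the determinant condition then forces to be equalities. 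The individual steps all check out: the congruence identities, the spectrum $\{0,2\}$ of ${\bf I}+i{\bf J}$, the $2\times 2$ block reduction of ${\bf D}+i{\bf J}$, and the recovery ${\bf A}={\bf S}{\bf S}^T$ with ${\bf P}={\bf S}^T$ symplectic. The proof behind the paper's citation instead goes through the wavefunction: a Gaussian Wigner measure is positive, so by Theorem 2.1 a pure one comes from a Gaussian $\psi$, and evaluating (\ref{Eq2.13}) on a general Gaussian $\psi$ exhibits ${\bf A}$ in factored symplectic form — essentially your parenthetical remark; your phase-space argument gets the harder (necessity) direction without ever leaving phase space, at the price of invoking Williamson. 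One point to make explicit: after your rescaling $\xi\mapsto\hbar^{-1/2}\xi$ what you actually prove is that the Gaussian is a pure-state Wigner measure iff $\hbar{\bf A}={\bf P}^T{\bf P}$ for some ${\bf P}\in Sp(2d;\bkR)$. That is in fact the correct reading of the theorem: $\det({\bf P}^T{\bf P})=1$, whereas your own purity computation shows that any pure Gaussian in the convention (\ref{Eq2.13}) has $\det{\bf A}=\hbar^{-2d}$, so the statement as printed is literally true only with $\hbar$ set to $1$ (or absorbed into ${\bf A}$); your normalization is therefore not a loss of generality but a needed repair, and it would be worth stating that explicitly rather than parenthetically.
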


\noindent
The NW spectrum of a pure state was completely characterized in \cite{Dias6}:

\begin{theorem} Let $\psi \in L^2 (\bkR^d, dR)$ be a state vector and $f_{\psi} $ the associated Wigner measure. If $\psi$ is a Gaussian, then its NW spectrum reads $\W (f_{\psi}) = \left[- \hbar, \hbar \right]$. If $\psi$ is non-Gaussian, then $\W (f_{\psi}) = \left\{- \hbar, \hbar \right\}$.
\end{theorem}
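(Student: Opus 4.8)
The plan is to fix the two endpoints $\pm\hbar$ and the outer bound $\W(f_\psi)\subseteq[-\hbar,\hbar]$ first, and then treat the Gaussian and non-Gaussian cases separately. Since $\psi$ is pure, $f_\psi$ is a genuine Wigner measure, so $\hbar\in\W(f_\psi)$ by Definition~2.5 and hence $-\hbar\in\W(f_\psi)$ by Corollary~2.7. For the outer bound, let $\alpha\in\W(f_\psi)$ with $\alpha\neq 0$; by Corollary~2.7 we may assume $\alpha>0$. By Theorem~2.6 there is $b\in\F$ with $\int|b|^2=1$ and $f_\psi=\overline b\star_\alpha b$. Reading this through the Weyl--Wigner transform of the $\alpha$-theory (the map $W_\xi$ of \eqref{Eq2.8} with $\hbar$ replaced by $\alpha$), $f_\psi$ is, up to the factor $(2\pi\alpha)^{-d}$, the $\alpha$-Weyl symbol of the density operator $\hat\rho_\alpha=(2\pi\alpha)^d\hat B^\dagger\hat B$, where $\hat B$ is the Hilbert--Schmidt operator with $\alpha$-symbol $b$; one has $\hat\rho_\alpha\ge 0$ and $\operatorname{tr}\hat\rho_\alpha=1$. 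Computing its purity in the $\alpha$-normalised inner product and using that \eqref{Eq2.15} is \emph{saturated} by the pure state $f_\psi$,
\[
\operatorname{tr}\!\big(\hat\rho_\alpha^2\big)=(2\pi\alpha)^d\!\int_{\bkR^{2d}}\!\big[f_\psi(\xi)\big]^2\,d\xi=(2\pi\alpha)^d(2\pi\hbar)^{-d}=(\alpha/\hbar)^d\le 1 ,
\]
so $\alpha\le\hbar$. Hence $\W(f_\psi)\subseteq[-\hbar,\hbar]$ in all cases.

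If $\psi$ is Gaussian, then $f_\psi$ has the form \eqref{Eq2.26}, and the argument behind Lemma~2.9 applies with $\hbar$ replaced by any $\alpha\neq 0$: $f_\psi$ is of $\alpha$-positive type iff ${\bf A}^{-1}+i\alpha{\bf J}\ge 0$, while for $\alpha=0$ this holds automatically (and agrees, via Bochner's theorem, with $f_\psi\ge 0$). As a family of linear matrix inequalities in $\alpha$, the solution set is a closed interval $[-c,c]$ symmetric about the origin; the outer bound gives $c\le\hbar$ and the fact that $f_\psi$ is a Wigner measure gives $c\ge\hbar$, so $\W(f_\psi)=[-\hbar,\hbar]$. (Alternatively one can feed ${\bf A}\propto{\bf P}^T{\bf P}$, ${\bf P}$ symplectic, from Littlejohn's Theorem~2.10 into the criterion and use ${\bf P}{\bf J}{\bf P}^T={\bf J}$ to reduce it to ${\bf I}+i(\alpha/\hbar){\bf J}\ge 0$, i.e.\ $|\alpha|\le\hbar$.)

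Suppose now $\psi$ is non-Gaussian; it remains to exclude every $\alpha$ with $|\alpha|<\hbar$ from $\W(f_\psi)$. First, $0\notin\W(f_\psi)$: otherwise, by Bochner's theorem (the remark after Theorem~2.6), $f_\psi\ge 0$ everywhere, and Hudson's Theorem~2.1 would make $\psi$ Gaussian. Next, assume $0<\alpha<\hbar$ (the case $\alpha<0$ reduces to this by Corollary~2.7) and suppose, for contradiction, that $\alpha\in\W(f_\psi)$. By the first paragraph $f_\psi$ is then the $\alpha$-Weyl symbol of a density operator $\hat\rho_\alpha$ with $\operatorname{tr}(\hat\rho_\alpha^2)=(\alpha/\hbar)^d<1$, i.e.\ a genuinely \emph{mixed} state of the $\alpha$-theory. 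Spectrally decomposing $\hat\rho_\alpha=\sum_n\lambda_n|\phi_n\rangle\langle\phi_n|$ and applying the $\alpha$-Weyl transform yields a convex combination $f_\psi=\sum_n\lambda_n g_n$ with at least two positive weights, each $g_n$ being a \emph{pure} $\star_\alpha$-Wigner measure. Using the scaling identity $A(\lambda\,\cdot)\star_\hbar B(\lambda\,\cdot)=\big(A\star_{\hbar/\lambda^2}B\big)(\lambda\,\cdot)$, which is immediate from \eqref{Eq2.11}, with $\lambda=\sqrt{\alpha/\hbar}$: after renormalisation each $g_n(\lambda\,\cdot)$ becomes a \emph{pure} $\star_\hbar$-Wigner measure, while $f_\psi(\lambda\,\cdot)$ becomes a \emph{mixed} $\hbar$-Wigner measure that is simultaneously a \emph{pure} $\star_{\hbar^2/\alpha}$-Wigner measure.

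The hard part will be turning this configuration into a contradiction --- equivalently, showing directly that the matrices $M_{jk}$ of Definition~2.3 built from $\widetilde{f_\psi}^{\,{\bf J}}$ cannot all be non-negative when $0<|\alpha|<\hbar$ and $\psi$ is non-Gaussian. A natural route is a descent argument: $\W(f_\psi)$ is a \emph{closed} subset of $\bkR$ (non-negativity of the $M_{jk}$ is a closed condition in $\alpha$), so it is enough to prove that the presence of some $\alpha\in(0,\hbar)$ in $\W(f_\psi)$ forces a strictly smaller positive number back into $\W(f_\psi)$; iterating then produces a sequence in $\W(f_\psi)$ converging to $0$, whence $0\in\W(f_\psi)$ and, by the previous paragraph, $\psi$ is Gaussian --- the desired contradiction. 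Making this descent step precise requires a careful analysis of how the pure-state idempotency $f_\psi\star_\hbar f_\psi=(2\pi\hbar)^{-d}f_\psi$ of \eqref{Eq2.18} interacts with the decomposition $f_\psi=\sum_n\lambda_n g_n$; the details are worked out in \cite{Dias6}.
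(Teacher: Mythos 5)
The paper never proves this theorem: it is quoted from \cite{Dias6} as a known result, so there is no in-text argument to measure yours against; I judge the proposal on its own terms. What you do prove is correct. The inclusion $\left\{\pm\hbar\right\}\subseteq\W(f_\psi)$, the outer bound $\W(f_\psi)\subseteq[-\hbar,\hbar]$ via the trace computation $\operatorname{tr}(\hat\rho_\alpha^2)=(\alpha/\hbar)^d\le 1$ (a clean and correct use of Theorem 2.6 plus the saturated purity bound \eqref{Eq2.15}), the Gaussian case via the $\alpha$-dependent matrix inequality ${\bf A}^{-1}+i\alpha{\bf J}\ge 0$, and the exclusion of $\alpha=0$ for non-Gaussian $\psi$ via Bochner and Hudson's theorem are all sound.

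The genuine gap is that the one claim carrying the real content of the theorem --- that no $\alpha$ with $0<|\alpha|<\hbar$ lies in $\W(f_\psi)$ when $\psi$ is non-Gaussian --- is not established. Your last paragraph only announces a descent strategy: the descent step itself (that $\alpha\in(0,\hbar)\cap\W(f_\psi)$ forces some $\alpha'\in(0,\alpha)\cap\W(f_\psi)$) is neither proved nor extracted from the configuration you set up (a pure $\hbar$-state that is simultaneously a genuinely mixed $\alpha$-state), and you explicitly defer ``the details'' to \cite{Dias6}, which is exactly the reference the paper itself cites for the whole statement. Two further points. First, even granting a descent step, a strictly decreasing positive sequence need not converge to $0$; the correct closing move is to take $\alpha_*=\inf\bigl(\W(f_\psi)\cap(0,\hbar)\bigr)$, note it is attained by closedness, and apply the descent step once more to contradict minimality --- fixable, but it should be said. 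Second, the scaling identity should read $A(\lambda\,\cdot)\star_\hbar B(\lambda\,\cdot)=\bigl(A\star_{\hbar\lambda^2}B\bigr)(\lambda\,\cdot)$, not $\hbar/\lambda^2$, as one checks from \eqref{Eq2.11}; your subsequent uses with $\lambda=\sqrt{\alpha/\hbar}$ are consistent with the corrected version, so this is only a typo, but as written the identity is false.
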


\end{section}

\begin{section}{Weyl-Wigner formulation of noncommutative quantum mechanics}

In noncommutative quantum mechanics, one replaces the Heisenberg algebra (\ref{Eq2.2}) by an {\it extended Heisenberg algebra}:
\begin{equation}
\left[ \hat z_{\alpha}, \hat z_{\beta} \right] = i \hbar \omega_{\alpha \beta}, \hspace{0.5 cm} \alpha, \beta = 1, \cdots 2d, \hspace{1 cm} {\bf \Omega} = \hbar^{-1}
\left(
\begin{array}{c c}
{\bf \Theta} & \hbar {\bf I}_{d \times d}\\
- \hbar {\bf I}_{d \times d} & {\bf N}
\end{array}
\right),
\label{Eq3.1}
\end{equation}
where $\hat z= (\hat q, \hat p)$ stand for the physical position and momentum variables and ${\bf \Theta}$, ${\bf N}$ are $d \times d$ constant antisymmetric real matrices whose entries $\theta_{ij}$, $\eta_{ij}$, with dimensions $(length)^2$ and $(momentum)^2$, measure the noncommmutativity in the spatial and momentum sectors, respectively. The matrix ${\bf \Omega}$ has entries $\omega_{\alpha \beta}$. We shall tacitly assume that:
\begin{equation}
\theta_{ij} \eta_{kl} < \hbar^2, \hspace{0.5 cm} 1 \le i < j \le d, ~ 1 \le k < l\le d.
\label{Eq3.2}
\end{equation}
This condition, which is compatible with experimental results \cite{Bertolami1,Carroll}, ensures that the skew-symmetric, bilinear form
\begin{equation}
\omega (z,u) = z^T {\bf \Omega} u = z_{\alpha} \omega_{\alpha \beta} u_{\beta}
\label{Eq3.3}
\end{equation}
is non-degenerate (see Lemma 6.1 in the Appendix). By a linear version of Darboux's Theorem, any skew-symmetric bilinear form can be cast in a "normal" form \cite{Cannas,Gosson2} under a linear transformation. This is a sort of symplectic Gram-Schmidt orthogonalization process. We shall denote it by Darboux (D) transformation. In practical terms, since $\Omega$ is even-dimensional and non-degenerate, this means that under the D transformation:
\begin{equation}
\hat z = \hat T (\hat{\xi})= {\bf S} \hat{\xi},
\label{Eq3.4}
\end{equation}
we obtain the Heisenberg algebra (\ref{Eq2.2}) for the variables $\hat{\xi}$. Here ${\bf S}$ is a $2d \times 2d$ constant real matrix. From (\ref{Eq2.2},\ref{Eq3.1},\ref{Eq3.4}), we conclude that:
\begin{equation}
{\bf S} {\bf J} {\bf S}^T = {\bf{\Omega}}.
\label{Eq3.5}
\end{equation}
This implies that $\det {\bf S} = \pm \sqrt{\det {\bf \Omega}}$. But in fact in can be shown that:
\begin{equation}
\det {\bf S} =  \sqrt{\det {\bf{\Omega}}} = \left|Pf ({\bf \Omega}) \right|>0,
\label{Eq3.6}
\end{equation}
where $Pf ({\bf \Omega})$ denotes the Pfaffian of ${\bf \Omega}$. This result is proved in the Appendix.

The D transformation is not unique. Indeed, if ${\bf S}$ is a solution of (\ref{Eq3.5}), then ${\bf S}{\bf L}$ with ${\bf L} \in Sp (2d; \bkR)$ is equally a solution. Nevertheless, in \cite{Bastos1} we proved that all physical predictions (in particular all traces of trace-class operators) are invariant under different choices of D maps. Using this transformation, we constructed a Weyl-Wigner formulation for noncommutative systems by resorting to an {\it extended Weyl-Wigner map}:
\begin{equation}
W_z^{\xi} : \hat{\S}' \longrightarrow \S' (\bkR^{2d}), \hspace{1 cm} \hat A \longrightarrow W_z^{\xi} (\hat A) = T \circ W_{\xi} \circ \hat T^{-1}.
\label{Eq3.7}
\end{equation}
Since the D transformation (\ref{Eq3.4}) is linear, there are no ordering ambiguities and for all practical purposes, $T$ and $\hat T$ are the same transformation: $T(\xi) = {\bf S} \xi$, $\xi \in T^*M$. The extended Weyl-Wigner map is independent of the particular D transformation. This means that:
\begin{equation}
W_z^{\xi} = T \circ W_{\xi} \circ \hat T^{-1} =  T' \circ W_{\xi'} \circ \hat T'^{-1}  = W_z^{\xi'}, \hspace{1 cm} z= T(\xi) = {\bf S} \xi= T' (\xi') = {\bf S'} \xi'
\label{Eq3.8}
\end{equation}
where the matrices ${\bf S}, {\bf S'}$ are both solutions of (\ref{Eq3.5}) and $\hat{\xi}$ and $\hat{\xi}'$ both obey the Heisenberg algebra. Let us denote by $\D_{\Omega} (2d ; \bkR)$ the set of all real $2d \times 2d$ matrices ${\bf S}$ which satisfy (\ref{Eq3.5}). Notice that $\D_{\Omega} (2d ; \bkR)$ is not a subgroup of $Gl (2d ; \bkR)$, as it is not even closed under matrix multiplication.

One of consequences of this definition is that the $\star$-product in noncommutative quantum mechanics becomes \cite{Bastos1}:
\begin{equation}
W_z^{\xi} (\hat A \cdot \hat B ): = W_z^{\xi} (\hat A) \star W_z^{\xi} (\hat B)= A(z) \star B(z) = A (z) \exp \left(\frac{i \hbar}{2} \buildrel{\leftarrow}\over\partial_{z_{\alpha}} \omega_{\alpha \beta}  \buildrel{\rightarrow}\over\partial_{z_{\beta}} \right)B (z),
\label{Eq3.9}
\end{equation}
for $ W_z^{\xi} (\hat A) : = A \in \A (\bkR^{2d})$ and $W_z^{\xi} (\hat B ):= \in \A (\bkR^{2d}) \cup \F$. This $\star$-product admits the kernel representation:
\begin{equation}
A(z) \star B(z) = \frac{1}{(\pi \hbar)^{2d} \det {\bf \Omega}} \int_{\bkR^{2d}} \int_{\bkR^{2d}}  A(z') B (z'') \exp \left[\frac{2i}{\hbar} (z-z')^T {\bf \Omega}^{-1} (z''-z) \right]~ d z' dz'' ,
\label{Eq3.10}
\end{equation}
for $A,B \in \A (\bkR^{2d}) \cup \F$. From (\ref{Eq3.1},\ref{Eq3.9}), we conclude that the $\star$-product is of the form:
\begin{equation}
A(z) \star B(z) = A(z) {\star}_{\hbar} {\star}_{\theta} {\star}_{\eta} B(z)
\label{Eq3.11}
\end{equation}
where
\begin{eqnarray}
A(z) \star_{\hbar} B(z) & = A (z) \exp \left(\frac{i \hbar}{2} \buildrel{\leftarrow}\over\partial_{z_{\alpha}} j_{\alpha \beta}  \buildrel{\rightarrow}\over\partial_{z_{\beta}}\right) B (z),
\label{Eq3.12}\\
A(z) {\star}_{\theta} B(z) & =  A(z) \exp \left(\frac{i}{2} \frac{\buildrel{\leftarrow}\over\partial}{\partial q_i} \theta_{ij} \frac{\buildrel{\rightarrow}\over\partial}{\partial q_j} \right) B(z) \label{Eq3.13}\\
A(z) {\star}_{\eta} B(z) & = A(z) \exp \left(\frac{i}{2} \frac{\buildrel{\leftarrow}\over\partial}{\partial p_i} \eta_{ij} \frac{\buildrel{\rightarrow}\over\partial}{\partial p_j} \right) B(z),
\label{Eq3.14}
\end{eqnarray}
If ${\bf \Theta}$ or ${\bf N}$ are non-degenerate (for instance for $d=2$), we may equally write a kernel representation for $\star_{\theta}, \star_{\eta}$:
\begin{equation}
\begin{array}{l l}
a(q) \star_{\theta} b(q) & = \frac{1}{\pi^d \det {\bf \Theta}} \int_{\bkR^d} \int_{\bkR^d}  a(q') b (q'') \exp \left[2i (q-q')^T {\bf \Theta}^{-1} (q''-q) \right] ~ d q' dq''  ,\\
& \\
c(p) \star_{\eta} d(p) & = \frac{1}{\pi^d \det {\bf N}} \int_{\bkR^d}  \int_{\bkR^d} c(p') d (p'') \exp \left[2i (p-p')^T {\bf N}^{-1} (p''-p) \right] ~ d p' dp'' ,
\end{array}
 \label{Eq3.15}
 \end{equation}
for $a,b \in L^2 (\bkR^d, dq) \cup \A (\bkR^d)$ and $c,d \in L^2 (\bkR^d, dp) \cup \A (\bkR^d)$.

The states of the system are adequately represented by what we called the noncommutative Wigner measures (NCWM) \cite{Bastos1}. The latter can be regarded as the composition of an ordinary Wigner measure with a D transformation (up to a multiplicative normalization constant):
\begin{definition} The noncommutative Wigner measure associated with a state with density matrix $\hat{\rho}$ is a phase-space function of the form
\begin{equation}
f^{NC} (z) : = \frac{1}{(2 \pi \hbar)^d \left| Pf ({\bf \Omega}) \right|}  W_z^{\xi} (\hat{\rho}) = \frac{1}{\left| Pf ({\bf \Omega}) \right|} f^C ({\bf S}^{-1} z) ,
\label{Eq3.16}
\end{equation}
where $f^C$ is the Wigner measure associated with $\hat{\rho}$ and ${\bf S} \in \D_{\Omega} (2d; \bkR)$ is a D transformation. If $f^C$ is the Wigner measure associated with a pure state $\hat{\rho}= | \psi>< \psi|$, then $f^{NC}$ is said to be a pure state noncommutative Wigner measure.
\end{definition}
In particular, one can compute the expectation value of an operator  $\hat A$ in  the state $\hat{\rho}$ according to:
\begin{equation}
 E \left[ \hat A \right] = tr (\hat A \hat{\rho}) = \int_{\bkR^{2d}} f^{NC} (z) A(z) ~ d z ,
\label{Eq3.17}
\end{equation}
where $f^{NC}$ is of the form (\ref{Eq3.16}) with $f^C = (2 \pi \hbar)^{-d} W_{\xi} (\hat{\rho})$ and $A(z) = W_z^{\xi} (\hat A)$.

Here is an alternative characterization of a NCWM which was poved in  \cite{Bastos1}:
\begin{proposition} A function $f(z) \in \F$ is a NCWM iff there exists $b(z) \in \F$ such that:
\begin{eqnarray}
(i) & \int_{\bkR^{2d}} |b(z)|^2 ~dz =1 \label{Eq3.18}\\
(ii) & f(z) = \overline{b(z)} \star b(z) \label{Eq3.19}
\end{eqnarray}
\end{proposition}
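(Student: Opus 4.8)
The plan is to deduce the proposition from its commutative counterpart~(\ref{Eq2.16})--(\ref{Eq2.17}) by conjugating with a Darboux transformation. Two facts are needed. First, by Definition 3.1 a function is a NCWM exactly when it equals $|Pf({\bf \Omega})|^{-1}\,f^{C}\circ{\bf S}^{-1}$ for some ordinary Wigner measure $f^{C}$ and some ${\bf S}\in\D_{\Omega}(2d;\bkR)$. Second, the noncommutative $\star$-product is the pullback of the ordinary Moyal product under $z={\bf S}\xi$, that is, $\bigl(g\circ{\bf S}^{-1}\bigr)\star\bigl(h\circ{\bf S}^{-1}\bigr)=\bigl(g\star_{\hbar}h\bigr)\circ{\bf S}^{-1}$ for admissible $g,h$; this is immediate from the definitions (\ref{Eq3.7}), (\ref{Eq3.9}) of $W_{z}^{\xi}$ and the $\star$-product, but if desired it can also be verified directly by the substitution $z={\bf S}\xi$ in the kernel formulas (\ref{Eq2.10}) and (\ref{Eq3.10}), using ${\bf S}{\bf J}{\bf S}^{T}={\bf \Omega}$ from (\ref{Eq3.5}) and $(\det{\bf S})^{2}=\det{\bf \Omega}$ from (\ref{Eq3.6}). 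Once these are in hand, the rest is a change-of-variables bookkeeping exercise.

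For the direct implication I would take a NCWM $f$, write $f(z)=|Pf({\bf \Omega})|^{-1}f^{C}({\bf S}^{-1}z)$ via Definition 3.1, apply (\ref{Eq2.16})--(\ref{Eq2.17}) to get $c\in\F$ with unit $L^{2}$-norm and $f^{C}=\overline{c}\star_{\hbar}c$, and set $b(z):=|Pf({\bf \Omega})|^{-1/2}\,c({\bf S}^{-1}z)$. Condition~(\ref{Eq3.18}) then follows from the Jacobian $|\det{\bf S}|=|Pf({\bf \Omega})|$ of~(\ref{Eq3.6}), and~(\ref{Eq3.19}) from the pullback identity for $\star$. For the converse, given $b\in\F$ satisfying~(\ref{Eq3.18})--(\ref{Eq3.19}), I would fix one ${\bf S}\in\D_{\Omega}(2d;\bkR)$, put $c(\xi):=|Pf({\bf \Omega})|^{1/2}\,b({\bf S}\xi)$ and $f^{C}:=\overline{c}\star_{\hbar}c$. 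The same Jacobian gives $\int_{\bkR^{2d}}|c(\xi)|^{2}\,d\xi=1$, so by the sufficiency direction of~(\ref{Eq2.16})--(\ref{Eq2.17}) the function $f^{C}$ is an ordinary Wigner measure; the pullback identity then yields $f^{C}(\xi)=|Pf({\bf \Omega})|\,f({\bf S}\xi)$, i.e. $f(z)=|Pf({\bf \Omega})|^{-1}f^{C}({\bf S}^{-1}z)$, which is precisely the form demanded by Definition 3.1, so $f$ is a NCWM.

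The only genuinely structural point is the $\star$-product pullback identity, and since the extended Weyl--Wigner map $W_{z}^{\xi}$ --- hence the extended Moyal product --- was \emph{defined} in Section 3 exactly as the conjugate of the ordinary construction by a Darboux matrix, this identity is automatic, so I expect no real obstacle. The remaining ingredients --- the two Jacobian computations, the tracking of the $|Pf({\bf \Omega})|$ factors, and the fact that $\overline{b}\star b\in\F$ whenever $b\in\F$ --- are routine and already implicit in the commutative version of the statement. Finally, nothing delicate arises from the choice of ${\bf S}$ or from $\D_{\Omega}(2d;\bkR)$ not being closed under multiplication, since only one Darboux matrix is used at a time and independence of that choice is already guaranteed by~(\ref{Eq3.8}).
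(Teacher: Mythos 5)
Your proof is correct. Note that the paper itself does not prove this proposition: it is quoted as "an alternative characterization of a NCWM which was proved in \cite{Bastos1}", so there is no internal argument to compare against. Your route --- reducing to the commutative characterization (\ref{Eq2.16})--(\ref{Eq2.17}) by conjugating with a fixed Darboux matrix, using the pullback identity $\bigl(g\star_{\hbar}h\bigr)\circ{\bf S}^{-1}=\bigl(g\circ{\bf S}^{-1}\bigr)\star\bigl(h\circ{\bf S}^{-1}\bigr)$ (which indeed follows either from the defining relation (\ref{Eq3.9}) for $W_z^{\xi}$ or from the kernel formulas (\ref{Eq2.10}) and (\ref{Eq3.10}) together with (\ref{Eq3.5}) and (\ref{Eq3.6})) and the Jacobian $\det{\bf S}=|Pf({\bf \Omega})|$ --- is exactly the mechanism the paper uses elsewhere (e.g. in Proposition 3.4 and Lemmata 3.7 and 3.8), and the bookkeeping of the $|Pf({\bf \Omega})|$ factors in both directions checks out; in particular the rescalings $b=|Pf({\bf \Omega})|^{-1/2}\,c\circ{\bf S}^{-1}$ and $c=|Pf({\bf \Omega})|^{1/2}\,b\circ{\bf S}$ preserve the unit normalization precisely because of (\ref{Eq3.6}). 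Your closing observation is also apt: only one element of $\D_{\Omega}(2d;\bkR)$ is ever used, and (\ref{Eq3.8}) guarantees the construction does not depend on which one, so the lack of group structure of $\D_{\Omega}(2d;\bkR)$ causes no difficulty.
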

With the previous definitions we may prove the following proposition.

\begin{proposition} Let $f^{NC}(z)$ be a NCWM. Then the inequality
\begin{equation}
\int_{\bkR^{2d}} \left( \overline{g(z)} \star g(z) \right) f^{NC} (z) ~dz \ge 0,
\label{Eq3.20}
\end{equation}
holds for any symbol $g(z)$ for which the left-hand side exists.
\end{proposition}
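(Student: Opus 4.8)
The plan is to pull the inequality back to the operator side, where it reduces to the positivity of the density matrix. By the definition of a NCWM, eq. (\ref{Eq3.16}), the function $f^{NC}$ is the extended Weyl--Wigner transform of a density matrix $\hat{\rho}$, which is in particular a non-negative, trace-class operator, while eq. (\ref{Eq3.17}) records all of its expectation values through the trace formula. Given a symbol $g(z)$, set $\hat G := \left( W_z^{\xi} \right)^{-1}(g) \in \hat{\S}'$, so that $W_z^{\xi}(\hat G) = g$. The aim is to establish the identity
\[
\int_{\bkR^{2d}} \left( \overline{g(z)} \star g(z) \right) f^{NC}(z)\, dz \; = \; tr \left( \hat G^{\dagger} \hat G\, \hat{\rho} \right) ,
\]
after which the statement is immediate.

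I would proceed as follows. First, conjugation of symbols corresponds to the operator adjoint: from the integral formula (\ref{Eq2.8}) one checks the elementary identity $\overline{W_{\xi}(\hat A)} = W_{\xi}(\hat A^{\dagger})$ for the ordinary Weyl--Wigner map, and since $W_z^{\xi} = T \circ W_{\xi} \circ \hat T^{-1}$ with $T$ the \emph{real} linear map $T(\xi) = {\bf S}\xi$, which commutes with complex conjugation, the same identity holds for the extended map; hence $\overline{g} = \overline{W_z^{\xi}(\hat G)} = W_z^{\xi}(\hat G^{\dagger})$. Second, the homomorphism property (\ref{Eq3.9}) of the noncommutative star product gives $\overline{g} \star g = W_z^{\xi}(\hat G^{\dagger}) \star W_z^{\xi}(\hat G) = W_z^{\xi}\!\left( \hat G^{\dagger}\hat G \right)$. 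Third, applying (\ref{Eq3.17}) with $\hat A = \hat G^{\dagger}\hat G$ yields precisely the displayed identity. Finally, using the spectral decomposition $\hat{\rho} = \sum_n p_n |\phi_n\rangle\langle\phi_n|$ with $p_n \ge 0$ and $\sum_n p_n = 1$, one gets $tr\!\left( \hat G^{\dagger}\hat G\, \hat{\rho} \right) = \sum_n p_n\, \| \hat G \phi_n \|^2 \ge 0$, which is the claim.

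A more self-contained, phase-space variant uses the characterization (\ref{Eq3.18})--(\ref{Eq3.19}) instead. Write $f^{NC} = \overline{b} \star b$ with $\int_{\bkR^{2d}} |b|^2\, dz = 1$. From the kernel representation (\ref{Eq3.10}), together with the antisymmetry of ${\bf \Omega}^{-1}$, one obtains the trace property $\int_{\bkR^{2d}} (A \star B)\, dz = \int_{\bkR^{2d}} A\, B\, dz$, hence cyclicity of the functional $\int_{\bkR^{2d}} (\, \cdot\, )\, dz$ under the star product, as well as the conjugation rule $\overline{A \star B} = \overline{B} \star \overline{A}$. Combining these with associativity, the left-hand side of (\ref{Eq3.20}) becomes $\int_{\bkR^{2d}} (\overline{g}\star g) \star (\overline{b}\star b)\, dz = \int_{\bkR^{2d}} b \star \overline{g} \star g \star \overline{b}\, dz = \int_{\bkR^{2d}} (b\star\overline{g})\, \overline{(b\star\overline{g})}\, dz = \int_{\bkR^{2d}} \left| b\star\overline{g} \right|^2 dz \ge 0$.

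The only genuine difficulty is analytic. For a general symbol $g$ the operator $\hat G$ may be unbounded, $\hat G^{\dagger}\hat G$ need not be trace class, the vectors $\hat G\phi_n$ need not all lie in the pertinent domains, and the identities $\overline{W_{\xi}(\hat A)} = W_{\xi}(\hat A^{\dagger})$ and (\ref{Eq3.9}) have to be read in $\hat{\S}'$ rather than merely on $\hat{\F}$. The hypothesis ``for any symbol $g(z)$ for which the left-hand side exists'' is precisely what makes this tractable: one first proves the bound for $g$ in a dense, well-behaved class --- for instance $g \in \S(\bkR^{2d})$, or symbols of Hilbert--Schmidt operators, where (\ref{Eq3.17}) and all of the above manipulations hold verbatim --- and then extends it to every admissible $g$ by a density and continuity argument.
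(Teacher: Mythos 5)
Your proposal is correct, and your ``phase-space variant'' is in fact the paper's own proof, almost word for word: the paper writes $f^{NC}=\overline{b}\star b$ using Proposition 3.2, invokes the trace/cyclicity property $\int_{\bkR^{2d}} A\star B\,dz=\int_{\bkR^{2d}} AB\,dz$ from \cite{Bastos1}, and reduces the left-hand side to $\int_{\bkR^{2d}}\left|b\star\overline{g}\right|^2 dz\ge 0$ exactly as you do. Your primary argument --- pulling everything back to the operator side via $\overline{g}\star g=W_z^{\xi}(\hat G^{\dagger}\hat G)$ and the trace formula (\ref{Eq3.17}), and then using positivity of $\hat{\rho}$ through its spectral decomposition --- is a genuinely different but equivalent route: it trades the star-product cyclicity lemma for the operator positivity of the density matrix, which makes the positivity mechanism more transparent, at the cost of having to justify the intertwining identities ($\overline{W_z^{\xi}(\hat A)}=W_z^{\xi}(\hat A^{\dagger})$, the homomorphism property) on $\hat{\S}'$ and to worry about domains and trace-class issues for unbounded $\hat G$. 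You are right that these analytic points are the only delicate matter, and your proposed resolution (prove it first for a dense well-behaved class of symbols, then extend using the hypothesis that the left-hand side exists) is reasonable; note that the paper's proof is no more careful on this score --- it manipulates the formal identities in the same way --- so your phase-space version matches it both in substance and in level of rigour.
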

\begin{proof}
In this proof we shall use the cyclic property
\begin{equation}
\int_{\bkR^{2d}} A(z) \star B(z) ~dz=\int_{\bkR^{2d}} A(z)  B(z) ~dz,
\label{Eq3.21}
\end{equation}
which was proved in \cite{Bastos1}. From (\ref{Eq3.19}), we have:
\begin{equation}
\begin{array}{c}
\int_{\bkR^{2d}} \left( \overline{g(z)} \star g(z) \right) f^{NC} (z) ~dz = \int_{\bkR^{2d}} \left( \overline{g(z)} \star g(z) \right) \left( \overline{b(z)} \star b(z) \right) ~dz= \\
\\
= \int_{\bkR^{2d}}  \overline{g(z)} \star g(z) \star \overline{b(z)} \star b(z)  ~dz=\int_{\bkR^{2d}}  b(z) \star \overline{g(z)} \star g(z) \star \overline{b(z)}   ~dz =\\
\\
= \int_{\bkR^{2d}} \left| b(z)\star  \overline{g(z)}  \right|^2  ~dz \ge 0
\end{array}
\label{Eq3.22}
\end{equation}
\end{proof}
\begin{proposition} Any NCWM $f^{NC} (z)$ satisfies the bound:
\begin{equation}
\int_{\bkR^{2d}} \left[f^{NC} (z) \right]^2 ~dz \le \frac{1}{(2 \pi \hbar)^d \left| Pf ({\bf \Omega}) \right|}.
\label{Eq3.23}
\end{equation}
The equality holds iff $f^{NC}$ is a pure state NCWM.
\end{proposition}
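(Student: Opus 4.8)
The plan is to reduce the bound (\ref{Eq3.23}) to the ordinary purity bound (\ref{Eq2.15}) by means of the Darboux change of variables $z = {\bf S}\xi$. First I would recall from the definition (\ref{Eq3.16}) that an arbitrary NCWM can be written as $f^{NC}(z) = \left| Pf({\bf \Omega})\right|^{-1} f^{C}({\bf S}^{-1}z)$, where $f^{C}$ is an ordinary Wigner measure and ${\bf S} \in \D_{\Omega}(2d;\bkR)$; moreover, by (\ref{Eq3.8}) the object $f^{NC}$, and hence the left-hand side of (\ref{Eq3.23}), does not depend on which Darboux matrix ${\bf S}$ is used to represent it. Squaring, integrating over $\bkR^{2d}$, and performing the substitution $\xi = {\bf S}^{-1}z$ (so that $dz = |\det {\bf S}|\, d\xi$, with $|\det {\bf S}| = \left| Pf({\bf \Omega})\right|$ by (\ref{Eq3.6})), I obtain
\[
\int_{\bkR^{2d}} \left[ f^{NC}(z) \right]^2 dz = \frac{1}{\left| Pf({\bf \Omega})\right|^{2}} \int_{\bkR^{2d}} \left[ f^{C}({\bf S}^{-1}z) \right]^2 dz = \frac{1}{\left| Pf({\bf \Omega})\right|} \int_{\bkR^{2d}} \left[ f^{C}(\xi) \right]^2 d\xi .
\]

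It then remains to invoke (\ref{Eq2.15}), which gives $\int_{\bkR^{2d}} \left[ f^{C}(\xi)\right]^2 d\xi \le (2\pi\hbar)^{-d}$, and the inequality (\ref{Eq3.23}) follows at once. For the equality case, note that every step in the display above is an identity, so equality in (\ref{Eq3.23}) is equivalent to equality in (\ref{Eq2.15}); the latter holds, as recalled right after (\ref{Eq2.15}), exactly when $f^{C}$ is the Wigner measure of a pure state $\hat{\rho} = |\psi\rangle\langle\psi|$. By the definition (\ref{Eq3.16}), this is precisely the condition that $f^{NC}$ be a pure state NCWM, which concludes the argument.

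This proof has essentially no obstacle: it is a one-line change of variables, the only inputs being the Jacobian identity $|\det {\bf S}| = \left| Pf({\bf \Omega})\right|$ established in (\ref{Eq3.6}) and the already-known commutative bound (\ref{Eq2.15}). The single point deserving a word of care is that the left-hand side of (\ref{Eq3.23}) is well defined independently of the Darboux map chosen for $f^{NC}$, which is guaranteed by (\ref{Eq3.8}); this also makes the equality statement transparent, the saturating states being precisely the images of pure-state commutative Wigner measures. An alternative, and essentially equivalent, route would be to stay in the operator picture: using the characterization (\ref{Eq3.18})--(\ref{Eq3.19}) one writes $f^{NC} = \overline{b}\star b$ and, via the cyclic property (\ref{Eq3.21}), identifies $\int_{\bkR^{2d}} \left[f^{NC}(z)\right]^2 dz$ with a fixed multiple of $tr(\hat{\rho}^{2})$ for the underlying density matrix $\hat{\rho}$, whence the bound follows from $tr(\hat{\rho}^{2}) \le 1$, with equality iff $\hat{\rho}$ is a rank-one projector; I would nonetheless prefer the change-of-variables proof for its brevity.
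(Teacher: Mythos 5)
Your proof is correct and follows essentially the same route as the paper: square the defining relation (\ref{Eq3.16}), change variables $\xi={\bf S}^{-1}z$ using $\det{\bf S}=\left|Pf({\bf \Omega})\right|$ from (\ref{Eq3.6}), and reduce to the commutative purity bound (\ref{Eq2.15}) together with its equality case. The remarks on Darboux-map independence and the alternative trace-based argument are sound but not needed beyond what the paper already does.
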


\begin{proof} Let us compute the left-hand side of (\ref{Eq3.23}) using (\ref{Eq3.16}):
\begin{equation}
\begin{array}{c}
\int_{\bkR^{2d}} \left[f^{NC} (z) \right]^2 ~dz = \frac{1}{\det {\bf{\Omega}}} \int_{\bkR^{2d}}  \left[f^C ({\bf S}^{-1} z) \right]^2 ~dz  =\\
 \\
 = \frac{\det {\bf S}}{\det {\bf{\Omega}}} \int_{\bkR^{2d}}  \left[f^C (\xi) \right]^2 ~d \xi = \frac{1}{\left| Pf ({\bf \Omega}) \right|} \int_{\bkR^{2d}}  \left[f^C (\xi) \right]^2 ~ d \xi .
\end{array}
\label{Eq3.24}
\end{equation}
In the last step we used (\ref{Eq3.6}). From (\ref{Eq2.15}), the result of the proposition follows immediately.
\end{proof}
Let us now derive the uncertainty principle for NCWMs in the Robertson-Schr\"odinger form \cite{Gosson1,Gosson2,Robertson,Schrodinger}. Let:
\begin{equation}
< \hat z_{\alpha} > : = \int_{\bkR^{2d}} z_{\alpha} f^{NC} (z) ~dz, \hspace{0.5 cm} \alpha = 1, \cdots, 2d
\label{Eq3.25}
\end{equation}
We also define
\begin{equation}
\hat{\tau}_{\alpha}  : = \hat z_{\alpha}  - < \hat z_{\alpha} >, \hspace{0.5 cm} \alpha = 1, \cdots, 2d
\label{Eq3.26}
\end{equation}
Moreover, let ${\bf \Sigma}$ be the covariance matrix with entries:
\begin{equation}
\sigma_{\alpha \beta} = \int_{\bkR^{2d}} \tau_{\alpha} \tau_{\beta} f^{NC} (z) ~dz, \hspace{0.5 cm} \alpha, \beta = 1, \cdots, 2d
\label{Eq3.27}
\end{equation}
We then have:
\begin{proposition}
Let $f^{NC}$ be a NCWM and let ${\bf \Sigma}$ be its covariance matrix. Then it obeys the following uncertainty principle. The matrix
\begin{equation}
{\bf \Sigma} +  \frac{i \hbar}{2} {\bf \Omega}
\label{Eq3.28}
\end{equation}
is non-negative.
\end{proposition}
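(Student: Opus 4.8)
The plan is to obtain the matrix inequality directly from the inequality (\ref{Eq3.20}) by inserting a suitably chosen symbol $g$, namely an arbitrary complex linear combination of the centered coordinates. For $\lambda = (\lambda_1, \dots, \lambda_{2d})^T \in \CO^{2d}$ I would set $g(z) = \sum_{\alpha} \lambda_{\alpha} \tau_{\alpha}$, where $\tau_{\alpha} = z_{\alpha} - \langle \hat z_{\alpha} \rangle$ is as in (\ref{Eq3.26}). Since each $\tau_{\alpha}$ is real, $\overline{g(z)} = \sum_{\beta} \overline{\lambda_{\beta}}\,\tau_{\beta}$; and since the $\tau_{\alpha}$ are affine functions, the bidifferential series defining the $\star$-product in (\ref{Eq3.9}) truncates at first order, giving $\tau_{\beta} \star \tau_{\alpha} = \tau_{\beta} \tau_{\alpha} + \frac{i\hbar}{2} \omega_{\beta \alpha}$. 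Using bilinearity of $\star$ in the constant coefficients $\lambda_{\alpha}$, one then has $\overline{g} \star g = \sum_{\alpha, \beta} \overline{\lambda_{\beta}} \lambda_{\alpha}\, \tau_{\beta} \tau_{\alpha} + \frac{i\hbar}{2} \sum_{\alpha, \beta} \overline{\lambda_{\beta}}\, \omega_{\beta \alpha}\, \lambda_{\alpha}$, a polynomial of degree two.

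The next step is to integrate this identity against $f^{NC}$. The quadratic part contributes $\sum_{\alpha, \beta} \overline{\lambda_{\beta}} \lambda_{\alpha} \int_{\bkR^{2d}} \tau_{\beta} \tau_{\alpha} f^{NC}(z)\, dz = \sum_{\alpha, \beta} \overline{\lambda_{\beta}}\, \sigma_{\beta \alpha}\, \lambda_{\alpha}$ by the definition (\ref{Eq3.27}) of the covariance matrix, and the constant part contributes $\frac{i\hbar}{2} \sum_{\alpha,\beta} \overline{\lambda_{\beta}}\, \omega_{\beta\alpha}\, \lambda_{\alpha}$ once one uses $\int_{\bkR^{2d}} f^{NC}(z)\, dz = 1$. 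This normalization follows either from the change of variables $z = {\bf S}\xi$ in (\ref{Eq3.16}) together with (\ref{Eq3.6}), or more directly from the characterization (\ref{Eq3.18})--(\ref{Eq3.19}): writing $f^{NC} = \overline{b} \star b$ with $\int |b|^2\, dz = 1$ and invoking the cyclic property (\ref{Eq3.21}), $\int f^{NC}\, dz = \int \overline{b} \star b\, dz = \int |b|^2\, dz = 1$. The inequality (\ref{Eq3.20}), applied with this $g$, then reads $\lambda^{\dagger} \left( {\bf \Sigma} + \frac{i\hbar}{2} {\bf \Omega} \right) \lambda \ge 0$.

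Finally one notes that ${\bf \Sigma}$ is real symmetric by (\ref{Eq3.27}) and ${\bf \Omega}$ is real antisymmetric by (\ref{Eq3.1}), so $\frac{i\hbar}{2} {\bf \Omega}$ is Hermitian and ${\bf \Sigma} + \frac{i\hbar}{2} {\bf \Omega}$ is a Hermitian $2d \times 2d$ matrix; the inequality just derived, holding for every $\lambda \in \CO^{2d}$, is exactly the assertion that this matrix is non-negative. The one point that genuinely needs care is the legitimacy of applying (\ref{Eq3.20}) to the unbounded polynomial symbol $g$, and of the termwise manipulations under the integral sign; both are harmless once one grants that $f^{NC}$ has finite second moments — which is implicit in the statement, since otherwise ${\bf \Sigma}$ would not be defined — so the remainder is just the routine index bookkeeping above. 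The result is the Robertson--Schr\"odinger uncertainty relation with the symplectic matrix ${\bf J}$ of ordinary quantum mechanics replaced by the noncommutative matrix ${\bf \Omega}$.
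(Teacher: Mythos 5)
Your proposal is correct and follows essentially the same route as the paper's proof: apply the positivity inequality (\ref{Eq3.20}) to the linear symbol $g(z)=\lambda_{\alpha}\tau_{\alpha}$, note that the $\star$-product of affine symbols truncates to $\tau_{\alpha}\tau_{\beta}+\frac{i\hbar}{2}\omega_{\alpha\beta}$, and integrate against $f^{NC}$ to obtain $\lambda^{\dagger}\left({\bf \Sigma}+\frac{i\hbar}{2}{\bf \Omega}\right)\lambda\ge 0$ for all $\lambda\in\CO^{2d}$. The additional details you spell out (normalization of $f^{NC}$, Hermiticity of the matrix, the admissibility of the unbounded polynomial symbol) are left implicit in the paper but do not change the argument.
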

\begin{proof}
Consider an arbitrary set of $2d$ complex constants $a_{\alpha}$ $(\alpha= 1, \cdots, 2d)$. From (\ref{Eq3.20}), we have:
\begin{equation}
0 \le \int_{\bkR^{2d}} \left( \overline{a_{\alpha}} \tau_{\alpha} \right) \star \left( a_{\beta}\tau_{\beta} \right) f^{NC} (z) ~dz = \overline{a_{\alpha}} \left(\sigma_{\alpha \beta} +  \frac{i \hbar}{2} \Omega_{\alpha \beta} \right) a_{\beta}
\label{Eq3.29}
\end{equation}
which yields the result.
\end{proof}
Such uncertainty relations can be derived in a similar fashion for pairs of noncommuting essentially self-adjoint operators, with some common domain.
\begin{theorem} {\bf (Hudson's Theorem for NCWM)} Let $\psi \in L^2 (\bkR^d, dR)$ be a state vector. The noncommutative Wigner measure of $\psi$ is non-negative iff $\psi$ is a Gaussian state.
\end{theorem}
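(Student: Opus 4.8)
The plan is to reduce the noncommutative statement to the classical Hudson--Soto--Claverie theorem (Theorem 2.1) via the Darboux transformation that defines the NCWM. Recall from Definition 3.1 that
\begin{equation}
f^{NC}(z) = \frac{1}{\left| Pf({\bf \Omega})\right|} f^C ({\bf S}^{-1} z),
\label{Eq3.30new}
\end{equation}
where $f^C$ is the ordinary Wigner measure of the state and ${\bf S} \in \D_{\Omega}(2d;\bkR)$. Since $\left| Pf({\bf \Omega})\right| > 0$ by \eqref{Eq3.6} and ${\bf S}$ is invertible, the composition $z \mapsto {\bf S}^{-1} z$ is a fixed linear bijection of $\bkR^{2d}$. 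Hence $f^{NC}(z) \ge 0$ for all $z$ if and only if $f^C(\xi) \ge 0$ for all $\xi$, because the change of variables $\xi = {\bf S}^{-1} z$ preserves the sign. This is the first and essentially trivial step: positivity of $f^{NC}$ is equivalent to positivity of the underlying ordinary Wigner measure $f^C$.

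The second step is to apply Theorem 2.1: $f^C$ is the Wigner measure of $\psi \in L^2(\bkR^d, dR)$, so $f^C \ge 0$ everywhere if and only if $\psi$ is a Gaussian state. Chaining the two equivalences gives: $f^{NC}$ of $\psi$ is non-negative iff $\psi$ is a Gaussian state, which is exactly the claim. One should note that in \eqref{Eq3.16} the ``state vector'' $\psi$ refers to $\hat\rho = |\psi\rangle\langle\psi|$ as an operator on $\H = L_2(\bkR^d, dR)$, the same Hilbert space for both the commutative and noncommutative theories (as emphasized in the Introduction); the noncommutativity is entirely encapsulated in the map $W_z^\xi$, not in the space of states. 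So there is no ambiguity about what ``Gaussian state'' means here.

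The only subtlety I would flag is whether one should present the result for \emph{pure} states only (as literally stated, with $\psi$ a state vector) or also remark on mixed states. Since the statement restricts to a state vector $\psi$, the pure-state version of Hudson's theorem (Theorem 2.1) suffices and no mixed-state classification is needed. A second, very minor point: one must make sure the class of $\psi$ for which $f^C$ makes sense pointwise is the same on both sides --- but Definition 3.1 already builds $f^{NC}$ out of $f^C$ by an affine change of variables, so any regularity issue is inherited verbatim from the commutative case and requires no new argument. Thus there is no real obstacle; the proof is a one-line reduction, and the bulk of the write-up is simply spelling out the sign-preservation under $z \mapsto {\bf S}^{-1} z$ and citing Theorem 2.1.
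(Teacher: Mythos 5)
Your proposal is correct and is essentially the paper's own proof: the authors likewise invoke the defining relation (\ref{Eq3.16}), observe that $f^{NC}(z)\ge 0$ everywhere iff $f^{C}(\xi)\ge 0$ everywhere (since the prefactor $\left|Pf({\bf \Omega})\right|^{-1}$ is positive and ${\bf S}^{-1}$ is a linear bijection), and then apply Theorem 2.1. Your write-up merely spells out these sign-preservation details more explicitly.
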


\begin{proof} The result of the theorem follows from (\ref{Eq3.16}). Indeed $f^{NC} (z)$ is everywhere non-negative iff $f^C (\xi)$ is everywhere non-negative. The rest is an immediate consequence of Theorem 2.1.
\end{proof}

Next, we consider the analogs of Lemma 2.9 and Theorem 2.10.

\begin{lemma} Let ${\bf C}$ be a real, symmetric, positive defined $2d \times 2d$ matrix. Then the Gaussian
\begin{equation}
f(z) = \sqrt{\frac{\det {\bf C}}{\pi^{2d}}} \exp \left( - (z-z_0)^T {\bf C} (z-z_0) \right)
\label{Eq3.30}
\end{equation}
is a NCWM iff the matrix ${\bf D} = {\bf C}^{-1} + i \hbar {\bf \Omega}$ is non-negative in $\CO^{2d}$.
\end{lemma}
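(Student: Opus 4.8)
The plan is to reduce the noncommutative statement to the commutative one (Lemma 2.9) via the Darboux transformation, exactly as was done for Proposition 3.6. Starting from a Gaussian $f(z)$ with matrix $\bf C$ as in (\ref{Eq3.30}), I would first express it as a composition with a D transformation: by (\ref{Eq3.16}), $f$ is a NCWM iff $f^C(\xi) := |Pf({\bf \Omega})|\, f({\bf S}\xi)$ is an ordinary Wigner measure, where ${\bf S} \in \D_{\Omega}(2d;\bkR)$. A direct substitution $z = {\bf S}\xi$ into (\ref{Eq3.30}) shows that $f^C(\xi)$ is again a Gaussian, of the form (\ref{Eq2.26}) with $\xi_0 = {\bf S}^{-1} z_0$ and with matrix ${\bf A} = {\bf S}^T {\bf C}\,{\bf S}$. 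One checks the normalization is consistent: $\sqrt{\det {\bf A}/\pi^{2d}} = |\det {\bf S}|\sqrt{\det {\bf C}/\pi^{2d}} = |Pf({\bf \Omega})|\sqrt{\det {\bf C}/\pi^{2d}}$ by (\ref{Eq3.6}), so $f^C$ has the precise prefactor demanded in Lemma 2.9.

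Next I apply Lemma 2.9 to $f^C$: it is a Wigner measure iff ${\bf B} = {\bf A}^{-1} + i\hbar {\bf J}$ is non-negative on $\CO^{2d}$. Now ${\bf A}^{-1} = {\bf S}^{-1}{\bf C}^{-1}({\bf S}^T)^{-1} = {\bf S}^{-1}{\bf C}^{-1}({\bf S}^{-1})^T$, so
\begin{equation}
{\bf B} = {\bf S}^{-1}{\bf C}^{-1}({\bf S}^{-1})^T + i\hbar {\bf J} = {\bf S}^{-1}\left({\bf C}^{-1} + i\hbar\, {\bf S}{\bf J}{\bf S}^T\right)({\bf S}^{-1})^T = {\bf S}^{-1}\, {\bf D}\, ({\bf S}^{-1})^T,
\label{EqProposalB}
\end{equation}
using the defining relation (\ref{Eq3.5}), ${\bf S}{\bf J}{\bf S}^T = {\bf \Omega}$, where ${\bf D} = {\bf C}^{-1} + i\hbar {\bf \Omega}$. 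Since ${\bf S}$ is a real invertible matrix, the map $w \mapsto ({\bf S}^{-1})^T w$ is a $\CO$-linear bijection of $\CO^{2d}$, and congruence by a real invertible matrix preserves hermiticity and the sign of the associated quadratic form: ${\bf B} \ge 0$ iff ${\bf D} \ge 0$. This gives the claimed equivalence.

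The only genuinely delicate point is independence of the D transformation: the statement of the lemma must not depend on which ${\bf S} \in \D_{\Omega}(2d;\bkR)$ is used. This is automatic from (\ref{EqProposalB}) together with the congruence-invariance argument above — any two solutions ${\bf S}, {\bf S}'$ of (\ref{Eq3.5}) produce matrices ${\bf B}$ that are real-congruent to the same ${\bf D}$, hence share the same definiteness — and it is in any case already guaranteed by the fact, recalled after (\ref{Eq3.16}), that the extended Weyl-Wigner map and the notion of NCWM are independent of the choice of D map. I would also remark, in parallel with the comment following Lemma 2.9, that the condition ${\bf D} = {\bf C}^{-1} + i\hbar{\bf \Omega} \ge 0$ is precisely the Robertson-Schr\"odinger uncertainty principle of Proposition 3.7 applied to the covariance matrix ${\bf \Sigma} = \tfrac12 {\bf C}^{-1}$ of the Gaussian (\ref{Eq3.30}), so that the lemma says a Gaussian is a NCWM iff it saturates no more than the noncommutative uncertainty bound.
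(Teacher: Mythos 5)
Your proposal is correct and follows essentially the same route as the paper: pull $f$ back through the D transformation to a commutative Gaussian with matrix ${\bf A}={\bf S}^T{\bf C}{\bf S}$, apply Lemma 2.9, and observe that ${\bf A}^{-1}+i\hbar{\bf J}={\bf S}^{-1}{\bf D}({\bf S}^T)^{-1}$ so that real congruence preserves non-negativity. The added remarks on independence of the choice of ${\bf S}$ and the link to Proposition 3.5 are consistent with the paper's surrounding discussion but not part of its proof.
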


\begin{proof} Let us define $g(\xi) = \left|Pf( {\bf \Omega}) \right| f ({\bf S} \xi)$. Then from (\ref{Eq3.30}), we have:
\begin{equation}
g(\xi) = \sqrt{\frac{\det ({\bf C}{\bf \Omega})}{\pi^{2d}}} \exp \left( - ({\bf S} \xi-z_0)^T {\bf C} ({\bf S} \xi-z_0) \right)= \sqrt{\frac{\det {\bf A}}{\pi^{2d}}} \exp \left( - ( \xi- \xi_0)^T {\bf A} ( \xi- \xi_0) \right),
\label{Eq3.31}
\end{equation}
where $\xi_0 = {\bf S}^{-1} z_0$ and ${\bf A} = {\bf S}^T {\bf C} {\bf S}$. Also, from the definition of the matrix ${\bf A}$, it follows that $\det {\bf A} = (\det {\bf S})^2 \det {\bf C}
= \det({\bf C} {\bf \Omega})$. Clearly, ${\bf A}$ is equally a real, symmetric, positive defined matrix. From Lemma 2.9 and the definition (\ref{Eq3.16}) of NCWM, we conclude that $f$ is a NCWM iff the matrix
\begin{equation}
{\bf B}= {\bf A}^{-1} + i \hbar {\bf J} = {\bf S}^{-1} \left( {\bf C}^{-1} + i \hbar {\bf S} {\bf J} {\bf S}^T \right) ({\bf S}^T)^{-1} = {\bf S}^{-1} {\bf D} ({\bf S}^T)^{-1}
\label{Eq3.32}
\end{equation}
is non-negative in $\CO^{2d}$. This is of course equivalent to ${\bf D}$ being non-negative, as the matrices ${\bf S}$ are real.
\end{proof}

Obviously, the Gaussian is simultaneously a Wigner measure and a NCWM iff both ${\bf D}$ and ${\bf C}^{-1} + i \hbar {\bf J}$ are non-negative. Moreover, notice that, since the covariance matrix of the Gaussian is $\frac{1}{2} C^{-1}$, we conclude that the uncertainty principle (Proposition 3.5) is a necessary and sufficient condition for a Gaussian to be a NCWM. For more general states however, it is necessary but not sufficient.

\begin{theorem} {\bf (Littlejohn's Theorem for NCWM)} The Gaussian in (\ref{Eq3.30}) is the NCWM of a pure state iff there exists ${\bf B} \in \D_{\Omega} (2d ; \bkR)$ such that ${\bf C}^{-1} = {\bf B}^T {\bf B}$.
\end{theorem}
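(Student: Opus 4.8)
The plan is to reduce the noncommutative statement to its commutative counterpart, Littlejohn's Theorem (Theorem 2.10), via the Darboux map, exactly as was done for Lemma 3.9. First I would introduce, as in the proof of Lemma 3.9, the rescaled pullback $g(\xi) = \left|Pf({\bf \Omega})\right| f({\bf S}\xi)$, where ${\bf S} \in \D_{\Omega}(2d;\bkR)$ is a fixed D transformation. By the computation already carried out in (\ref{Eq3.31}), $g$ is the Gaussian (\ref{Eq2.26}) with ${\bf A} = {\bf S}^T {\bf C} {\bf S}$, and by the very definition (\ref{Eq3.16}) of a NCWM, $f$ is a pure state NCWM iff $g$ is the Wigner measure of a pure state. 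Applying Littlejohn's Theorem (Theorem 2.10) to $g$, this holds iff there exists a symplectic matrix ${\bf P} \in Sp(2d;\bkR)$ with ${\bf A} = {\bf P}^T {\bf P}$, i.e. ${\bf S}^T {\bf C} {\bf S} = {\bf P}^T {\bf P}$.

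The next step is purely linear-algebraic: to convert the condition $\exists\, {\bf P} \in Sp(2d;\bkR): {\bf S}^T {\bf C}{\bf S} = {\bf P}^T {\bf P}$ into the condition $\exists\, {\bf B} \in \D_{\Omega}(2d;\bkR): {\bf C}^{-1} = {\bf B}^T {\bf B}$. The natural guess is ${\bf B} = {\bf S} {\bf P}^{-1}$ (or a suitable variant). I would check both directions. Given ${\bf P}$ symplectic with ${\bf S}^T{\bf C}{\bf S} = {\bf P}^T{\bf P}$, set ${\bf B} = {\bf S}{\bf P}^{-1}$; then ${\bf B}^T{\bf B} = ({\bf P}^{-1})^T {\bf S}^T{\bf C}{\bf S}{\bf P}^{-1} = ({\bf P}^{-1})^T {\bf P}^T{\bf P}{\bf P}^{-1} = {\bf I}$, which is not what we want --- so the correct relation must instead produce ${\bf C}^{-1}$. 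Reworking it: from ${\bf S}^T{\bf C}{\bf S} = {\bf P}^T{\bf P}$ we get ${\bf C} = ({\bf S}^T)^{-1}{\bf P}^T{\bf P}{\bf S}^{-1} = ({\bf P}{\bf S}^{-1})^T({\bf P}{\bf S}^{-1})$, hence ${\bf C}^{-1} = ({\bf S}{\bf P}^{-1})({\bf S}{\bf P}^{-1})^T$; so with ${\bf B}^T := {\bf S}{\bf P}^{-1}$, i.e. ${\bf B} = ({\bf P}^{-1})^T{\bf S}^T = ({\bf P}^T)^{-1}{\bf S}^T$, one has ${\bf C}^{-1} = {\bf B}^T{\bf B}$. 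It remains to verify that this ${\bf B}$ lies in $\D_{\Omega}(2d;\bkR)$, i.e. satisfies ${\bf B}{\bf J}{\bf B}^T = {\bf \Omega}$: compute ${\bf B}{\bf J}{\bf B}^T = ({\bf P}^T)^{-1}{\bf S}^T {\bf J} {\bf S}{\bf P}^{-1}$. Here I would use that ${\bf S} \in \D_{\Omega}$ means ${\bf S}{\bf J}{\bf S}^T = {\bf \Omega}$; one needs the companion identity ${\bf S}^T{\bf J}{\bf S} = {\bf J}$ (equivalently ${\bf S} \in Sp(2d;\bkR)$ up to the $\Omega$-twist), which follows from (\ref{Eq3.5}) together with $|\det {\bf S}| = |Pf({\bf\Omega})|$ and the structure of ${\bf\Omega}$ — actually more carefully: from ${\bf S}{\bf J}{\bf S}^T={\bf\Omega}$ alone one does not get ${\bf S}^T{\bf J}{\bf S}={\bf J}$, so I would instead exploit the non-uniqueness ${\bf S}\mapsto {\bf S}{\bf L}$, ${\bf L}\in Sp(2d;\bkR)$, to absorb ${\bf P}$: since ${\bf P} \in Sp(2d;\bkR)$, the matrix ${\bf S}' := {\bf S}{\bf P}^{-1}$ is again in $\D_{\Omega}(2d;\bkR)$, and then ${\bf C}^{-1} = {\bf S}'({\bf S}')^T$. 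But $\D_{\Omega}$ is not closed under transpose either, so one final massage is needed: write ${\bf C}^{-1} = {\bf S}'({\bf S}')^T = ({\bf S}'{\bf R})({\bf S}'{\bf R})^T$ for any orthogonal ${\bf R}$, and choose ${\bf R} \in O(2d) \cap Sp(2d;\bkR)$ so that ${\bf B} := ({\bf S}'{\bf R})^T \in \D_{\Omega}(2d;\bkR)$; such an ${\bf R}$ exists by polar decomposition of ${\bf S}'$, since the positive part of the polar decomposition of a matrix in $\D_\Omega$ is symmetric and the orthogonal part can be taken symplectic.

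For the converse, given ${\bf B} \in \D_{\Omega}(2d;\bkR)$ with ${\bf C}^{-1} = {\bf B}^T{\bf B}$, set ${\bf P} := {\bf S}^{-1}{\bf B}^{-T}$ (reversing the above); then ${\bf P}^T{\bf P} = {\bf B}^{-1}{\bf S}^{-T}{\bf S}^{-1}{\bf B}^{-T}$ and ${\bf A} = {\bf S}^T{\bf C}{\bf S} = {\bf S}^T {\bf B}^{-1}{\bf B}^{-T}{\bf S}$, and one checks ${\bf P}^T{\bf P} = {\bf A}$ by substitution, while ${\bf P} \in Sp(2d;\bkR)$ follows from ${\bf B},{\bf S} \in \D_{\Omega}$ and (\ref{Eq3.5}): indeed ${\bf P}{\bf J}{\bf P}^T = {\bf S}^{-1}{\bf B}^{-T}{\bf J}{\bf B}^{-1}{\bf S}^{-T} = {\bf S}^{-1}{\bf\Omega}^{-1}{\bf S}^{-T}$ — wait, I need ${\bf B}^{-T}{\bf J}{\bf B}^{-1} = {\bf\Omega}^{-1}$, which is ${\bf B}{\bf J}^{-1}{\bf B}^T = {\bf\Omega}$ up to signs, consistent with ${\bf B}\in\D_\Omega$ since ${\bf J}^{-1} = -{\bf J} = {\bf J}^T$ and $\D_\Omega$ is closed under ${\bf B}\mapsto -{\bf B}$; then ${\bf P}{\bf J}{\bf P}^T = {\bf S}^{-1}{\bf\Omega}^{-1}{\bf S}^{-T} = {\bf S}^{-1}({\bf S}{\bf J}{\bf S}^T)^{-1}{\bf S}^{-T}$ — and this equals ${\bf S}^{-1}{\bf S}^{-T}{\bf J}^{-1}{\bf S}^{-1}{\bf S}^{-T}$, which is not obviously ${\bf J}$; so here too I would route the argument through Littlejohn's theorem in reverse and the polar-decomposition bookkeeping rather than a brute-force identity chase.

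\emph{Main obstacle.} The conceptual content — reduce to the commutative case through ${\bf S}$, invoke Theorem 2.10 — is routine; the genuine work is the linear-algebra step of passing between "$\exists {\bf P}\in Sp$ with ${\bf S}^T{\bf C}{\bf S}={\bf P}^T{\bf P}$" and "$\exists {\bf B}\in\D_\Omega$ with ${\bf C}^{-1}={\bf B}^T{\bf B}$," the subtlety being that $\D_{\Omega}(2d;\bkR)$ is not a group and is not closed under transposition, so one must carefully use the freedom ${\bf S}\mapsto{\bf S}{\bf L}$, ${\bf L}\in Sp(2d;\bkR)$, together with a polar-decomposition argument (using that $O(2d)\cap Sp(2d;\bkR)$ acts transitively enough on the orthogonal parts) to land in $\D_\Omega$ with the transpose on the correct side. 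I expect this bookkeeping — not any deep idea — to be where the proof must be written with care.
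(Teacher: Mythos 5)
Your first paragraph reproduces the paper's argument, and in fact your ``reworked'' identity already \emph{is} the paper's proof: after pulling back by a fixed ${\bf S}\in\D_{\Omega}(2d;\bkR)$ as in Lemma 3.7, the Gaussian (\ref{Eq3.30}) is a pure-state NCWM iff ${\bf A}={\bf S}^T{\bf C}{\bf S}={\bf P}^T{\bf P}$ for some ${\bf P}\in Sp(2d;\bkR)$, and the paper simply sets ${\bf B}^{-1}={\bf P}{\bf S}^{-1}$, records ${\bf C}=({\bf B}^{-1})^T{\bf B}^{-1}$ (equivalently ${\bf C}^{-1}={\bf B}{\bf B}^T$ with ${\bf B}={\bf S}{\bf P}^{-1}$), and checks ${\bf B}{\bf J}{\bf B}^T={\bf \Omega}$ in one line from (\ref{Eq3.5}); the converse is the same identity read backwards: given ${\bf B}\in\D_{\Omega}$ with ${\bf C}^{-1}={\bf B}{\bf B}^T$, the matrix ${\bf P}:={\bf B}^{-1}{\bf S}$ satisfies ${\bf P}{\bf J}{\bf P}^T={\bf B}^{-1}{\bf \Omega}({\bf B}^{-1})^T={\bf J}$ and ${\bf P}^T{\bf P}={\bf S}^T({\bf B}{\bf B}^T)^{-1}{\bf S}={\bf S}^T{\bf C}{\bf S}$. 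Up to that point your proposal is correct and no further bookkeeping is needed.

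The genuine gap is the extra step you add to land on the literal form ${\bf C}^{-1}={\bf B}^T{\bf B}$, with the transpose on the other side. Your mechanism --- choose ${\bf R}\in O(2d)\cap Sp(2d;\bkR)$ with $({\bf S}'{\bf R})^T\in\D_{\Omega}$, ``which exists by polar decomposition since the orthogonal part can be taken symplectic'' --- fails: the polar factors of ${\bf S}'$ are unique, the orthogonal factor of an element of $\D_{\Omega}$ is in general not symplectic, and, decisively, no orthogonal ${\bf R}$ of any kind can work in general, because $({\bf S}'{\bf R})^T\in\D_{\Omega}$ means ${\bf R}^T\left({\bf S}'^T{\bf J}{\bf S}'\right){\bf R}={\bf \Omega}$ and hence forces $\mathrm{tr}\bigl[({\bf S}'^T{\bf J}{\bf S}')^2\bigr]=\mathrm{tr}({\bf \Omega}^2)$. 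Take ${\bf C}^{-1}={\bf S}{\bf S}^T$ with ${\bf S}$ the D map (\ref{Eq4.4}): this is a pure Gaussian NCWM (it corresponds to ${\bf A}={\bf I}$, i.e. ${\bf P}={\bf I}$), yet any ${\bf B}\in\D_{\Omega}$ with ${\bf B}^T{\bf B}={\bf S}{\bf S}^T$ would give, by cyclicity of the trace, $\mathrm{tr}({\bf \Omega}^2)=\mathrm{tr}\bigl[({\bf B}{\bf J}{\bf B}^T)^2\bigr]=\mathrm{tr}\bigl[({\bf J}{\bf B}^T{\bf B})^2\bigr]=\mathrm{tr}\bigl[({\bf S}^T{\bf J}{\bf S})^2\bigr]$, while a direct computation using (\ref{Eq4.5}) gives $\mathrm{tr}\bigl[({\bf S}^T{\bf J}{\bf S})^2\bigr]=-2\bigl(\lambda^2\eta^2/(\mu^2\hbar^2)+\mu^2\theta^2/(\lambda^2\hbar^2)+2\bigr)$ versus $\mathrm{tr}({\bf \Omega}^2)=-2\bigl((\theta^2+\eta^2)/\hbar^2+2\bigr)$; these differ whenever $\lambda\neq\mu$ and $\lambda\eta\neq\mu\theta$ (e.g. $\theta=\eta$, $\lambda\neq\mu$). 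So the two conditions ``$\exists\,{\bf B}\in\D_{\Omega}$ with ${\bf C}^{-1}={\bf B}{\bf B}^T$'' and ``$\exists\,{\bf B}\in\D_{\Omega}$ with ${\bf C}^{-1}={\bf B}^T{\bf B}$'' are genuinely inequivalent ($\D_{\Omega}$ is not transpose-closed, exactly as you suspected, and the spectrum of ${\bf J}{\bf C}^{-1}$ is the obstruction); your patch cannot be repaired, and your converse direction, which you leave unfinished, founders on the same point. What can be proved --- and what the paper's proof actually establishes --- is the condition your own first computation produced, ${\bf C}=({\bf B}^{-1})^T{\bf B}^{-1}$, i.e. ${\bf C}^{-1}={\bf B}{\bf B}^T$ with ${\bf B}\in\D_{\Omega}(2d;\bkR)$; the placement ``${\bf C}^{-1}={\bf B}^T{\bf B}$'' in the printed statement is a transpose slip relative to the paper's own argument, not something recoverable by polar-decomposition bookkeeping.
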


\begin{proof} From (\ref{Eq3.16},\ref{Eq3.30}) and Theorem 2.10, we conclude that (\ref{Eq3.30}) is the NCWM of a pure state iff there exists a matrix ${\bf P} \in Sp (2d; \bkR)$ such that ${\bf A} = {\bf P}^T {\bf P}$, i.e. ${\bf C}= \left({\bf B}^{-1} \right)^T {\bf B}^{-1}$, with ${\bf B}^{-1}= {\bf P} {\bf S}^{-1}$. From (\ref{Eq3.5}), we conclude that ${\bf B}^{-1} {\bf \Omega } \left({\bf B}^{-1} \right)^T = {\bf J}$. In other words: ${\bf B} \in \D_{\Omega} (2d; \bkR)$.
\end{proof}

Before we conclude this section let us briefly discuss the set of diffeomorphisms that transform a NCWM into another NCWM.

\begin{definition} We denote by $Sp_{\Omega} (2d; \bkR)$ the set of noncommutative symplectic transformations. These are the $2d \times 2d$ real, constant matrices ${\bf M}$ that satisfy:
\begin{equation}
{\bf M} {\bf \Omega} {\bf M}^T= {\bf \Omega}.
\label{Eq3.33}
\end{equation}

These correspond to the linear transformations which leave the skew-symmetric bilinear form (\ref{Eq3.3}) invariant. It is not very difficult to check that $Sp_{\Omega} (2d; \bkR)$ is a group with respect to matrix multiplication. We also define the group of noncommutative symplectomorphisms to be the set of automorphisms $\phi$ of $\bkR^{2d}$ such that $d \phi (z) \in Sp_{\Omega} (2d; \bkR)$ for all $z \in \bkR^{2d}$.
\end{definition}

\begin{lemma} The groups $Sp_{\Omega} (2d; \bkR)$ and $Sp (2d; \bkR)$ are isomorphic.
\end{lemma}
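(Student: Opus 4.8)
The plan is to exhibit an explicit isomorphism via conjugation by a Darboux matrix. Fix any ${\bf S} \in \D_{\Omega}(2d;\bkR)$, which exists by the linear Darboux theorem and which is invertible since $\det {\bf S} = \left|Pf({\bf \Omega})\right| > 0$ by (\ref{Eq3.6}). Define the map
\begin{equation}
\phi : Sp(2d;\bkR) \longrightarrow Gl(2d;\bkR), \hspace{1 cm} \phi({\bf P}) = {\bf S}\, {\bf P}\, {\bf S}^{-1}.
\label{EqProof1}
\end{equation}
It is immediate that $\phi$ is a group homomorphism (conjugation) and injective (because ${\bf S}$ is invertible). The first substantive step is to verify that $\phi$ maps into $Sp_{\Omega}(2d;\bkR)$. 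Using (\ref{Eq3.5}) in the form ${\bf S}^{-1}{\bf \Omega}\left({\bf S}^{-1}\right)^T = {\bf J}$, for ${\bf P} \in Sp(2d;\bkR)$ one computes
\begin{equation}
\phi({\bf P})\, {\bf \Omega}\, \phi({\bf P})^T = {\bf S}\, {\bf P}\,{\bf S}^{-1}{\bf \Omega}\left({\bf S}^{-1}\right)^T{\bf P}^T {\bf S}^T = {\bf S}\, {\bf P}\, {\bf J}\, {\bf P}^T {\bf S}^T = {\bf S}\, {\bf J}\, {\bf S}^T = {\bf \Omega},
\label{EqProof2}
\end{equation}
so $\phi({\bf P}) \in Sp_{\Omega}(2d;\bkR)$ as required.

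The second step is surjectivity. Given ${\bf M} \in Sp_{\Omega}(2d;\bkR)$, set ${\bf P} := {\bf S}^{-1}{\bf M}{\bf S}$. Then, again using (\ref{Eq3.5}) and (\ref{Eq3.33}),
\begin{equation}
{\bf P}\, {\bf J}\, {\bf P}^T = {\bf S}^{-1}{\bf M}\,{\bf S}\,{\bf J}\,{\bf S}^T\,{\bf M}^T\left({\bf S}^{-1}\right)^T = {\bf S}^{-1}{\bf M}\,{\bf \Omega}\,{\bf M}^T\left({\bf S}^{-1}\right)^T = {\bf S}^{-1}{\bf \Omega}\left({\bf S}^{-1}\right)^T = {\bf J},
\label{EqProof3}
\end{equation}
so ${\bf P} \in Sp(2d;\bkR)$ and clearly $\phi({\bf P}) = {\bf M}$. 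Hence $\phi$ is a bijective group homomorphism, i.e. an isomorphism, which proves the lemma.

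There is essentially no hard obstacle here: the argument is a routine transport of structure along the change of basis ${\bf S}$. The only points worth flagging are that the existence and invertibility of ${\bf S}$ rely on the non-degeneracy of ${\bf \Omega}$ guaranteed by (\ref{Eq3.2}) together with (\ref{Eq3.6}), and that the isomorphism $\phi$ depends on the choice of ${\bf S} \in \D_{\Omega}(2d;\bkR)$ — different choices differ by an inner automorphism of $Sp(2d;\bkR)$, since any two such ${\bf S}$ differ by right multiplication by an element of $Sp(2d;\bkR)$, so the statement of the lemma (existence of an isomorphism) is unambiguous.
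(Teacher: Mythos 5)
Your proof is correct and follows the same route as the paper: the paper's proof simply exhibits the conjugation map $\phi_{\bf S}({\bf P}) = {\bf S}{\bf P}{\bf S}^{-1}$ for a fixed ${\bf S} \in \D_{\Omega}(2d;\bkR)$ and asserts it is an isomorphism, while you supply the straightforward verifications (image in $Sp_{\Omega}(2d;\bkR)$, injectivity, surjectivity) that the paper leaves implicit.
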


\begin{proof} Fix some element ${\bf S} \in \D_{\Omega} (2d; \bkR)$. Then the map
\begin{equation}
\begin{array}{l c l}
\phi_{{\bf S}}: & Sp (2d; \bkR) & \longrightarrow Sp_{\Omega} (2d; \bkR)\\
& & \\
& {\bf P} & \longmapsto \phi_{{\bf S}} ({\bf P}) = {\bf S} {\bf P} {\bf S}^{-1}
\label{Eq3.34}
\end{array}
\end{equation}
is a Lie group isomorphism.
\end{proof}

\begin{lemma} Noncommutative symplectic transformations map NCWM's to NCWM's.
\end{lemma}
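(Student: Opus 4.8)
The statement to establish is: if ${\bf M} \in Sp_{\Omega} (2d; \bkR)$ and $f^{NC} \in \F$ is a NCWM, then the composition $f^{NC} \circ {\bf M}$ (equivalently $f^{NC} \circ {\bf M}^{-1}$, since $Sp_{\Omega} (2d; \bkR)$ is a group) is again a NCWM. The plan is to read the answer off Definition 3.1, which realizes every NCWM as an \emph{ordinary} Wigner measure precomposed with a Darboux matrix ${\bf S} \in \D_{\Omega} (2d; \bkR)$. The key observation is that transforming the NCWM by ${\bf M}$ does not require touching the underlying commutative Wigner measure at all: it merely amounts to replacing ${\bf S}$ by another admissible Darboux matrix.

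Concretely, I would write $f^{NC} (z) = |Pf ({\bf \Omega})|^{-1} f^C ({\bf S}^{-1} z)$ as in (\ref{Eq3.16}), with $f^C$ the Wigner measure of a density matrix $\hat{\rho}$ and ${\bf S} \in \D_{\Omega} (2d; \bkR)$. Then
\begin{equation}
(f^{NC} \circ {\bf M}) (z) = \frac{1}{|Pf ({\bf \Omega})|} f^C ({\bf S}^{-1} {\bf M} z) = \frac{1}{|Pf ({\bf \Omega})|} f^C \big( ({\bf M}^{-1} {\bf S})^{-1} z \big).
\end{equation}
The one computation the proof needs is that ${\bf S}' := {\bf M}^{-1} {\bf S}$ again solves (\ref{Eq3.5}): using (\ref{Eq3.5}) for ${\bf S}$ and the defining relation (\ref{Eq3.33}) rewritten as ${\bf \Omega} = {\bf M}^{-1} {\bf \Omega} ({\bf M}^T)^{-1}$, one gets ${\bf S}' {\bf J} {\bf S}'^T = {\bf M}^{-1} ({\bf S} {\bf J} {\bf S}^T) ({\bf M}^{-1})^T = {\bf M}^{-1} {\bf \Omega} ({\bf M}^T)^{-1} = {\bf \Omega}$, so ${\bf S}' \in \D_{\Omega} (2d; \bkR)$. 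Hence $f^{NC} \circ {\bf M}$ has exactly the form (\ref{Eq3.16}) with the \emph{same} Wigner measure $f^C$ and the admissible Darboux matrix ${\bf S}'$, and therefore, by Definition 3.1, it is a NCWM; it is a pure-state NCWM if $f^{NC}$ was. Membership in $\F$ is clear since precomposition with an invertible linear map preserves $L^2$, and normalization needs no separate check, being automatic once the function is written in the form (\ref{Eq3.16}) (indeed $\int (f^{NC} \circ {\bf M})(z)\, dz = |\det {\bf S}'|\, |Pf ({\bf \Omega})|^{-1} \int f^C(\xi)\, d\xi = 1$ by (\ref{Eq3.6})).

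As a remark I would note an alternative route through Proposition 3.2: one first checks that the noncommutative $\star$-product is covariant under $Sp_{\Omega} (2d; \bkR)$, i.e. $(A \star B) \circ {\bf M} = (A \circ {\bf M}) \star (B \circ {\bf M})$ for ${\bf M} \in Sp_{\Omega} (2d; \bkR)$; this follows from the kernel formula (\ref{Eq3.10}) by the substitution $z' \mapsto {\bf M} z'$, $z'' \mapsto {\bf M} z''$, using $|\det {\bf M}| = 1$ and the identity ${\bf M}^T {\bf \Omega}^{-1} {\bf M} = {\bf \Omega}^{-1}$ obtained by inverting (\ref{Eq3.33}). Writing $f^{NC} = \overline{b} \star b$ with $\int |b|^2 = 1$ via Proposition 3.2, covariance gives $f^{NC} \circ {\bf M} = \overline{b \circ {\bf M}} \star (b \circ {\bf M})$ with $\int |b \circ {\bf M}|^2 = \int |b|^2 = 1$, so $f^{NC} \circ {\bf M}$ is a NCWM. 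There is no genuine obstacle in either version: the only substantive input is the rearrangement of (\ref{Eq3.33}) (equivalently ${\bf M}^{-1} {\bf \Omega} ({\bf M}^T)^{-1} = {\bf \Omega}$, or ${\bf M}^T {\bf \Omega}^{-1} {\bf M} = {\bf \Omega}^{-1}$), together with the harmless facts that $Sp_{\Omega} (2d; \bkR)$ is a group and $|\det {\bf M}| = 1$ (immediate from (\ref{Eq3.33})); everything else is bookkeeping. The only point to keep straight is the direction of the transformation and the Jacobian factor, but since both $f^{NC} \circ {\bf M}$ and $f^{NC} \circ {\bf M}^{-1}$ are covered at once, this causes no trouble.
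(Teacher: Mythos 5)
Your main argument is exactly the paper's proof of this lemma: write $f^{NC}(z)=|Pf({\bf \Omega})|^{-1}f^C({\bf S}^{-1}z)$, observe that $(f^{NC}\circ{\bf M})(z)=|Pf({\bf \Omega})|^{-1}f^C\left(({\bf M}^{-1}{\bf S})^{-1}z\right)$, and verify ${\bf M}^{-1}{\bf S}\in\D_{\Omega}(2d;\bkR)$ from (\ref{Eq3.33}) and (\ref{Eq3.5}), so the proposal is correct and essentially identical (you merely spell out the matrix identity the paper states without detail). Your additional route via Proposition 3.2 and the covariance of the $\star$-product under $Sp_{\Omega}(2d;\bkR)$ is also sound, but it is presented as a remark rather than the proof itself.
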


\begin{proof} Let $M \in Sp_{\Omega} (2d; \bkR)$ and let $f^{NC}$ be some NCWM. Then there exist a Wigner measure $f^C$ and ${\bf S} \in \D_{\Omega} (2d; \bkR)$ such that (\ref{Eq3.16}) holds. Under the noncommutative symplectic transformation $z \longrightarrow {\bf M} z$, we obtain
$$f^{NC}(z) \longrightarrow f^{NC} ({\bf M}z) = \frac{1}{\left|Pf ( {\bf \Omega}) \right|} f^C \left({\bf S}^{-1} {\bf M} z \right)= \frac{1}{\left|Pf ( {\bf \Omega}) \right|} f^C \left(\left( {\bf M}^{-1}{\bf S} \right)^{-1} z \right).
$$
On the other hand $\left({\bf M}^{-1} {\bf S} \right) {\bf J} \left({\bf M}^{-1} {\bf S} \right)^T = {\bf \Omega}$, which means that ${\bf M}^{-1} {\bf S}  \in \D_{\Omega} (2d; \bkR)$. It follows that $f^{NC} ({\bf M}z)$ is again a NCWM.
\end{proof}

\begin{lemma} The uncertainty principle is invariant under noncommutative symplectic transformations.
\end{lemma}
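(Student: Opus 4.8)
The plan is to track how the covariance matrix of a NCWM transforms under a noncommutative symplectic transformation and then to observe that the constraint of Proposition~3.5 is preserved by that transformation. Since every NCWM satisfies Proposition~3.5 and Lemma~3.11 already guarantees that $f^{NC}\mapsto f^{NC}({\bf M}z)$ sends NCWM's to NCWM's, the statement is in fact immediate; but the point worth recording is that it is exactly the defining relation ${\bf M}{\bf \Omega}{\bf M}^{T}={\bf \Omega}$ that is responsible, so I would carry out the computation explicitly.

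First I would fix ${\bf M}\in Sp_{\Omega}(2d;\bkR)$ and a NCWM $f^{NC}$ with mean $\langle\hat z_{\alpha}\rangle$ and covariance matrix ${\bf \Sigma}$ as in (\ref{Eq3.25}),(\ref{Eq3.27}), and set $g^{NC}(z):=f^{NC}({\bf M}z)$. Note that (\ref{Eq3.33}) forces $(\det{\bf M})^{2}=1$, so the substitution $w={\bf M}z$ has unit Jacobian; in particular $g^{NC}$ is still normalised, in agreement with Lemma~3.11. Performing this substitution in the defining integrals yields the transformed mean $\langle\hat z_{\alpha}\rangle'=({\bf M}^{-1})_{\alpha\gamma}\langle\hat z_{\gamma}\rangle$, so that $z_{\alpha}-\langle\hat z_{\alpha}\rangle'=({\bf M}^{-1})_{\alpha\gamma}(w_{\gamma}-\langle\hat z_{\gamma}\rangle)$ under $w={\bf M}z$, and hence the transformed covariance matrix is
\[
{\bf \Sigma}'={\bf M}^{-1}{\bf \Sigma}\,({\bf M}^{T})^{-1},
\]
i.e.\ ${\bf \Sigma}$ transforms by congruence under ${\bf M}^{-1}$.

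Next I would use (\ref{Eq3.33}) in the equivalent form ${\bf \Omega}={\bf M}^{-1}{\bf \Omega}\,({\bf M}^{T})^{-1}$ to factor
\[
{\bf \Sigma}'+\frac{i\hbar}{2}{\bf \Omega}={\bf M}^{-1}\left({\bf \Sigma}+\frac{i\hbar}{2}{\bf \Omega}\right)({\bf M}^{T})^{-1}.
\]
Because ${\bf M}$ is real one has $({\bf M}^{T})^{-1}=({\bf M}^{-1})^{\dagger}$, so the right-hand side is a congruence $N\mapsto {\bf M}^{-1}N({\bf M}^{-1})^{\dagger}$ applied to the Hermitian matrix $N={\bf \Sigma}+\frac{i\hbar}{2}{\bf \Omega}$, which is non-negative by Proposition~3.5. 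Since congruence by an invertible matrix preserves Hermiticity and non-negativity in $\CO^{2d}$, we conclude ${\bf \Sigma}'+\frac{i\hbar}{2}{\bf \Omega}\ge 0$, which is precisely the Robertson-Schr\"odinger uncertainty principle of Proposition~3.5 for $g^{NC}$.

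There is no genuine obstacle. The only points requiring a little care are the bookkeeping of the change of variables in (\ref{Eq3.25}),(\ref{Eq3.27}) — so that one obtains congruence by ${\bf M}^{-1}$ rather than by ${\bf M}$ or ${\bf M}^{T}$ — and the elementary fact that congruence by a real invertible matrix maps Hermitian non-negative matrices to Hermitian non-negative matrices. One could bypass the computation altogether by simply combining Lemma~3.11 with Proposition~3.5, but the explicit version above is what makes transparent the role of the condition ${\bf M}\in Sp_{\Omega}(2d;\bkR)$.
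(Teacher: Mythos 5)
Your proposal is correct and is essentially the paper's argument: both hinge on the defining relation ${\bf M}{\bf \Omega}{\bf M}^T={\bf \Omega}$ together with the fact that congruence by a real invertible matrix preserves Hermiticity and non-negativity of ${\bf \Sigma}+\frac{i\hbar}{2}{\bf \Omega}$. The only cosmetic difference is that you transform the state ($f^{NC}\mapsto f^{NC}({\bf M}z)$, giving congruence by ${\bf M}^{-1}$) whereas the paper transforms the variables ($\hat z'={\bf M}\hat z$, giving ${\bf \Sigma}'={\bf M}{\bf \Sigma}{\bf M}^T$ and the quadratic-form computation with ${\bf M}^T a$), which is the same argument up to replacing ${\bf M}$ by its inverse.
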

\begin{proof}
Let ${\bf M} \in Sp_{\Omega} (2d; \bkR)$ and $\hat z' = {\bf M} \hat z$. We then have $\hat{\tau}' = {\bf M} \hat{\tau}$ and ${\bf \Sigma}' = {\bf M} {\bf \Sigma} {\bf M}^T$. For arbitrary $a \in \CO^{2d}$:
\begin{equation}
a^{\dagger} \left({\bf \Sigma}' + \frac{i \hbar}{2} {\bf \Omega} \right) a =a^{\dagger} \left({\bf M} {\bf \Sigma} {\bf M}^T + \frac{i \hbar}{2} {\bf \Omega} \right) a =\left(M^T a\right) ^{\dagger} \left({\bf \Sigma}  + \frac{i \hbar}{2} {\bf \Omega} \right) \left(M^T a \right) \ge 0
\label{Eq3.35}
\end{equation}
\end{proof}

\end{section}

\begin{section}{Properties of noncommutative Wigner measures in two dimensions}

In this section we shall focus on the $d=2$ case. The extended Heisenberg algebra in two dimensions reads:
\begin{equation}
\left[ \hat z_{\alpha}, \hat z_{\beta} \right] = i \hbar \omega_{\alpha \beta}, \hspace{0.5 cm} \alpha, \beta = 1, \cdots 4, \hspace{1 cm}
{\bf{\Omega}} = \left(
\begin{array}{c c}
\frac{\theta}{\hbar} {\bf E} & {\bf I}_{2 \times 2}\\
- {\bf I}_{2 \times 2} & \frac{\eta}{\hbar} {\bf E}
\end{array}
\right).
\label{Eq4.1}
\end{equation}
The $2\times 2$ matrix ${\bf E}$ has entries $\epsilon_{11}= \epsilon_{22}=0$, $\epsilon_{12}= - \epsilon_{21} =1$. The real, constant, positive parameters $\theta, \eta$ measure the noncommutativity in phase-space. In this case, we have from (\ref{Eq3.2},\ref{Eq3.6}):
\begin{equation}
\det {\bf S} = \left|Pf ( {\bf \Omega}) \right| = 1  - \zeta, \hspace{1 cm} \zeta = \frac{\theta \eta}{\hbar^2} <1, \label{Eq4.2}
\end{equation}
where ${\bf S} \in \D_{\Omega} (4; \bkR)$. In particular, the bound on the purity (\ref{Eq3.23}) reads:
\begin{equation}
\int_{\bkR^4}\left[f^{NC} (z) \right]^2 ~ dz \le \frac{1}{(2 \pi \sqrt{1 - \zeta} )^2}.
\label{Eq4.3}
\end{equation}
In this paper we shall frequently use the following D map:
\begin{equation}
{\bf S} = \left(
\begin{array}{c c}
\lambda {\bf I}_{2 \times 2} & - \frac{\theta}{2 \lambda \hbar} {\bf E}\\
\frac{\eta}{2 \mu \hbar} {\bf E} & \mu {\bf I}_{2 \times 2}
\end{array}
\right),
\label{Eq4.4}
\end{equation}
where $\lambda, \mu$ are adimensional real constants such that:
\begin{equation}
\mu \lambda = \frac{1 + \sqrt{1 - \zeta}}{2}
\label{Eq4.5}
\end{equation}
From (\ref{Eq4.2},\ref{Eq4.4}), we conclude that the D map admits the inverse:
\begin{equation}
{\bf S}^{-1} = \frac{1}{\sqrt{1 - \zeta}} \left(
\begin{array}{c c}
\mu {\bf I}_{2 \times 2} &  \frac{\theta}{2 \lambda \hbar} {\bf E}\\
- \frac{\eta}{2 \mu \hbar} {\bf E} & \lambda {\bf I}_{2 \times 2}
\end{array}
\right),
\label{Eq4.6}
\end{equation}

\begin{subsection}{Marginal distributions}

Our aim is now to derive some properties of NCWMs for $d=2$. Let us start by defining the position and momentum marginal distributions:

\begin{definition} The position and momentum marginal distributions of a NCWM $f^{NC}$ are defined by:
\begin{equation}
\P_q (q) = \int_{\bkR^2} f^{NC} (q,p) ~dp \hspace{1 cm} \P_p (p) = \int_{\bkR^2}  f^{NC} (q,p) ~dq .
\label{Eq4.1.1}
\end{equation}
\end{definition}

It is important to remark that, albeit real and normalized, these distributions cannot be regarded as true probability densities as was the case with the commutative counterparts. This is not surprising given that the position variables do not commute amongst themselves and neither do the momenta. Therefore we cannot expect to obtain a joint probability density for, say, $q_1$ and $q_2$. This manifests itself in the fact that $\P_q$ and $\P_p$ may (and usually do) take on negative values. This is an immediate consequence of the following theorem:

\begin{theorem} The position and momentum marginal distributions of the NCWM associated with the pure state $\hat{\rho} = | \psi>< \psi|$ can be written as:
\begin{eqnarray}
\P_q (q) = \phi (q) \star_{\theta} \overline{\phi (q)} \label{Eq4.1.2}\\
\P_p (p) = \chi (p) \star_{\eta} \overline{\chi (p)} \label{Eq4.1.3}
\end{eqnarray}
where
\begin{equation}
\phi (q) = \frac{1}{\lambda} \psi \left( \frac{q}{\lambda} \right) \hspace{1 cm} \chi (p) = \frac{1}{\mu} \hat{\psi} \left( \frac{p}{\mu} \right).
\label{Eq4.1.4}
\end{equation}
Here $\lambda, \mu$ are the adimensional parameters that appear in the D transformation (\ref{Eq4.4}) and $\hat{\psi} (\Pi)$ is the Fourier transform of $\psi (R) = < R| \psi>$.
\end{theorem}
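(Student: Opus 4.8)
The plan is to reduce (\ref{Eq4.1.2}) and (\ref{Eq4.1.3}) to explicit changes of variables inside the integrals defining $\P_q$ and $\P_p$, and then to recognize the resulting double integrals as the kernel representations (\ref{Eq3.15}) of $\star_\theta$ and $\star_\eta$. Throughout I use the particular D map (\ref{Eq4.4})--(\ref{Eq4.6}) and write $z=(q,p)$, $\xi={\bf S}^{-1}z=(R,\Pi)$.

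\textbf{The position marginal.} Since $f^{NC}(z)=\frac{1}{1-\zeta}f^C({\bf S}^{-1}z)$ by (\ref{Eq3.16}) and (\ref{Eq4.2}), I integrate over $p$ with $q$ held fixed. Reading off the two block rows of $z={\bf S}\xi$ from (\ref{Eq4.4}) gives $q=\lambda R-\frac{\theta}{2\lambda\hbar}{\bf E}\Pi$ and $p=\frac{\eta}{2\mu\hbar}{\bf E}R+\mu\Pi$; the first relation, at fixed $q$, expresses $R=\frac{q}{\lambda}+\frac{\theta}{2\lambda^2\hbar}{\bf E}\Pi$ as an affine function of the new integration variable $\Pi$, and substituting this into the second, using ${\bf E}^2=-{\bf I}$ and $\zeta=\theta\eta/\hbar^2$, gives $p=\frac{\eta}{2\mu\lambda\hbar}{\bf E}q+\left(\mu-\frac{\zeta}{4\mu\lambda^2}\right)\Pi$. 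The constraint (\ref{Eq4.5}) collapses the scalar coefficient to $\mu-\frac{\zeta}{4\mu\lambda^2}=\frac{\sqrt{1-\zeta}}{\lambda}$, so $dp=\frac{1-\zeta}{\lambda^2}\,d\Pi$ and the two prefactors combine to
\[
\P_q(q)=\frac{1}{\lambda^2}\int_{\bkR^2}f^C\!\left(\frac{q}{\lambda}+\frac{\theta}{2\lambda^2\hbar}{\bf E}\Pi,\ \Pi\right)d\Pi .
\]
Next I insert the pure-state kernel (\ref{Eq2.13}), rescale the arguments of $\psi$ by means of $\phi(q)=\lambda^{-1}\psi(q/\lambda)$ (which produces a factor $\lambda^2$ cancelling the $\lambda^{-2}$), and change variables from $(\Pi,y)$ to the pair $(u,v)$ that occur as the arguments of $\overline{\phi}$ and $\phi$, namely $u=q+\frac{\theta}{2\lambda\hbar}{\bf E}\Pi-\lambda y$ and $v=q+\frac{\theta}{2\lambda\hbar}{\bf E}\Pi+\lambda y$. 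A short $2\times2$ block computation gives Jacobian $\theta^2/\hbar^2$, while inverting the relations yields $y=\frac{v-u}{2\lambda}$ and $\Pi=-\frac{\lambda\hbar}{\theta}{\bf E}(u+v-2q)$, so the Wigner phase becomes $\exp\!\big[\frac{i}{\theta}(v-u)^T{\bf E}(u+v-2q)\big]$. Because ${\bf E}$ is antisymmetric we have $x^T{\bf E}x=0$, and expanding around $q$ the cross term collapses to $(v-u)^T{\bf E}(u+v-2q)=2(v-q)^T{\bf E}(u-q)$. What remains is precisely the kernel representation (\ref{Eq3.15}) of $\phi(q)\star_\theta\overline{\phi(q)}$ with ${\bf\Theta}=\theta{\bf E}$ (so $\det{\bf\Theta}=\theta^2$ and ${\bf\Theta}^{-1}=-\theta^{-1}{\bf E}$), which is (\ref{Eq4.1.2}).

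\textbf{The momentum marginal.} Identity (\ref{Eq4.1.3}) follows by the same computation carried out in the momentum representation of the Wigner function, i.e. the Fourier dual of (\ref{Eq2.13}) in which $\psi(R\pm y)$ is replaced by $\hat{\psi}(\Pi\mp k)$ and the role of $R$ is played by $\Pi$: now I integrate over $q$ at fixed $p$, use the block rows of $z={\bf S}\xi$ to write $\Pi=\frac{p}{\mu}-\frac{\eta}{2\mu^2\hbar}{\bf E}R$ and $q=\frac{\sqrt{1-\zeta}}{\mu}R-\frac{\theta}{2\lambda\mu\hbar}{\bf E}p$ (the coefficient again simplified via (\ref{Eq4.5})), rescale by $\chi(p)=\mu^{-1}\hat{\psi}(p/\mu)$, and land on $\chi(p)\star_\eta\overline{\chi(p)}$ with ${\bf N}=\eta{\bf E}$. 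The two genuinely delicate points are (i) the algebraic simplification of the Jacobians via (\ref{Eq4.5}), which is what makes the reduced one-fold integrals clean and is the reason this particular D map is chosen, and (ii) verifying that antisymmetry of ${\bf E}$ turns the Wigner phase into exactly the $\star_\theta$ (resp. $\star_\eta$) phase rather than merely a similar-looking exponent. As a consistency check one has $\|\phi\|_{L^2}=\|\psi\|_{L^2}=1$ and $\int_{\bkR^2}\P_q\,dq=\int_{\bkR^2}\phi\,\overline{\phi}\,dq=1$ by the trace property of $\star_\theta$.
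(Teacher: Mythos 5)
Your proposal is correct and follows essentially the same route as the paper: insert the explicit D map (\ref{Eq4.4}), use the constraint (\ref{Eq4.5}) to simplify the coefficients, change variables to the arguments of $\phi$ (resp. $\chi$), and recognize the resulting double integral as the kernel representation (\ref{Eq3.15}) of $\star_{\theta}$ (resp. $\star_{\eta}$). The only difference is organizational — you pass from $p$ to $\Pi$ first and then to $(u,v)$, whereas the paper substitutes $(p,y)\to(u,v)$ in one step — and your intermediate formulas (Jacobians, the identity $(v-u)^T{\bf E}(u+v-2q)=2(v-q)^T{\bf E}(u-q)$, and the final phase) all check out against the paper's (\ref{Eq4.1.8})--(\ref{Eq4.1.10}).
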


\begin{proof} The NCWM of a pure state is given by (\ref{Eq3.16}) with (cf.(\ref{Eq2.13})):
\begin{equation}
f^C (R, \Pi) = \frac{1}{( \pi \hbar)^2} \int_{\bkR^2}  e^{ -2 i y \cdot \Pi / \hbar} \overline{\psi (R-y)} \psi (R+y)~dy .
\label{Eq4.1.5}
\end{equation}
We then have:
\begin{equation}
\P_q (q) = \frac{1}{( \pi \hbar \sqrt{1- \zeta})^2} \int_{\bkR^2}  \int_{\bkR^2} e^{ -2 i y \cdot \Pi (q,p) / \hbar} \overline{\psi \left[R(q,p) -y \right]} \psi \left[R(q,p) +y \right] ~dy dp .
\label{Eq4.1.6}
\end{equation}
We next perform the change of variables:
\begin{equation}
u = \lambda \left[R(q,p) +y \right] , \hspace{1 cm} v = \lambda \left[R(q,p) -y \right]
\label{Eq4.1.7}
\end{equation}
and obtain:
\begin{equation}
\begin{array}{c}
\P_q (q) = \frac{1}{( \pi \theta)^2} \int_{\bkR^2}\int_{\bkR^2}   \phi (u) \overline{\phi (v)} \times \\
\\
\times \exp\left\{ - \frac{i}{\hbar \sqrt{1 - \zeta}} (u-v) \cdot \left[\left(\frac{2 \hbar \mu \lambda}{\theta}  - \frac{\eta}{2 \mu \lambda \hbar} \right) {\bf E} q  - \frac{\hbar}{\theta}  \sqrt{1- \zeta} {\bf E} (u+v) \right] \right\} ~ du dv  .
\end{array}
\label{Eq4.1.8}
\end{equation}
Here $\phi (q)$ is given by (\ref{Eq4.1.4}). Notice also that from (\ref{Eq4.5}), we have:
\begin{equation}
\frac{2 \hbar \mu \lambda}{\theta}  - \frac{\eta}{2 \mu \lambda \hbar} = \frac{2 \hbar}{\theta} \sqrt{1 - \zeta}.
\label{Eq4.1.9}
\end{equation}
Consequently:
\begin{equation}
\P_q (q) = \frac{1}{( \pi \theta)^2} \int_{\bkR^2} \int_{\bkR^2}  \phi (u) \overline{\phi (v)} \exp\left[ - \frac{2i}{\theta} (u-q)^T {\bf E} (q-v) \right] ~du dv ,
\label{Eq4.1.10}
\end{equation}
which is the kernel representation of the $\star_{\theta}$-product (\ref{Eq3.15}). Likewise, let us consider the Wigner measure (\ref{Eq4.1.5}) in the momentum representation:
\begin{equation}
f^C (R, \Pi) = \frac{1}{( \pi \hbar)^2} \int_{\bkR^2} e^{ -2 i k \cdot R / \hbar} \overline{\hat{\psi} (k-\Pi)} \hat{\psi} (k+\Pi) ~ dk .
\label{Eq4.1.11}
\end{equation}
Following exactly the same procedure, we arrive at the kernel representation of $\chi (p) \star_{\eta} \overline{\chi (p)}$, where $\chi (p)$ is given by (\ref{Eq4.1.4}).
\end{proof}

The nonlocal nature of the $\star$-products in (\ref{Eq4.1.2},\ref{Eq4.1.3}) entails that the marginal distributions may take on negative values. Moreover, Theorem 4.2 has another important consequence. If we look at eq.(\ref{Eq4.1.2}), we notice that $\P_q (q)$ is a positive, normalized element of the $\star$-algebra with $\star_{\theta}$-product. Furthermore, the matrix ${\bf E}$ coincides with the sympletic matrix ${\bf J}$ of a two-dimensional phase space. We thus conclude that, if we make the correspondence ${\bf E} \leftrightarrow  {\bf J}$, $q_1 \leftrightarrow R$, $q_2 \leftrightarrow \Pi$, $\theta \leftrightarrow \hbar$ for $d=1$, then the expression (\ref{Eq4.1.2}) states that $\P_q (q)$ can be written as (cf.(\ref{Eq2.13},\ref{Eq2.16},\ref{Eq2.17})):
\begin{equation}
\P_q (q) = \sum_j s_j f_j^{\theta} (q_1, q_2),
\label{Eq4.1.12}
\end{equation}
with
\begin{equation}
0 \le s_j \le 1,~~ \forall j, \hspace{1 cm} \sum_j s_j =1,
\label{Eq4.1.13}
\end{equation}
and
\begin{equation}
f_j^{\theta} (q_1, q_2) = \frac{1}{\pi \theta} \int_{\bkR} e^{-2 i y q_2 / \theta} \overline{\phi_j (q_1 -y)} \phi_j (q_1 + y) ~ d y,
\label{Eq4.1.14}
\end{equation}
for some set of wave functions $\phi_j \in L^2 ( \bkR, dq_1)$ which can be chosen to be orthonormal:
\begin{equation}
\int_{\bkR} \overline{\phi_j (q_1)} \phi_k (q_1)~d q_1  = \delta_{jk}.
\label{Eq4.1.15}
\end{equation}
We shall call functions of the form (\ref{Eq4.1.12}-\ref{Eq4.1.14}) {\bf $\theta$-Wigner measures}.

If instead of a pure state we consider a mixed state:
\begin{equation}
f^{NC} (z) = \sum_i p_i f_i^{NC} (z),
\label{Eq4.1.16}
\end{equation}
then eq.(\ref{Eq4.1.2}) would be replaced by:
\begin{equation}
\P_q (q) = \sum_i p_i \phi_i (q) \star_{\theta} \overline{\phi_i (q)}.
\label{Eq4.1.17}
\end{equation}
This is then a convex combination of $\theta$-Wigner measures, which means that it must again be of the form (\ref{Eq4.1.12}). Analogous conclusions can be drawn for the momentum marginal distribution.
\begin{equation}
\P_p (p) = \sum_k r_k f_k^{\eta} (p_1, p_2),
\label{Eq4.1.18}
\end{equation}
with
\begin{equation}
0 \le r_k \le 1, ~~ \forall k, \hspace{1 cm} \sum_k r_k =1,
\label{Eq4.1.19}
\end{equation}
and
\begin{equation}
f_k^{\eta} (p_1, p_2) = \frac{1}{\pi \eta} \int_{\bkR}e^{-2 i y p_2 / \eta} \overline{\chi_k (y-p_1 )} \chi_k (y+p_1 ) ~ d y .
\label{Eq4.1.20}
\end{equation}
The functions $\chi_k \in L^2 ( \bkR, dp_1)$ can also be chosen to be orthonormal:
\begin{equation}
\int_{\bkR}  \overline{\chi_k (p_1)} \chi_l (p_1)~ d p_1  = \delta_{kl}.
\label{Eq4.1.21}
\end{equation}
The functions of the form (\ref{Eq4.1.18}-\ref{Eq4.1.20}) are called {\bf $\eta$-Wigner measures}.

From (\ref{Eq2.15},\ref{Eq4.1.12},\ref{Eq4.1.18}) we conclude that:

\begin{theorem} The marginal distributions of a NCWM satisfy the bounds:
\begin{eqnarray}
\int_{\bkR^2}  \left[\P_q (q) \right]^2 ~d q \le \frac{1}{2 \pi \theta} \label{Eq4.1.22} \\
\int_{\bkR^2} \left[\P_p (p) \right]^2 ~ d p \le \frac{1}{2 \pi \eta} \label{Eq4.1.23}
\end{eqnarray}
\end{theorem}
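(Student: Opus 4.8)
The plan is to derive Theorem 4.3 directly from the structural description of the marginals already obtained in the discussion following Theorem 4.2, combined with the commutative purity bound (\ref{Eq2.15}).

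First I would record that, by (\ref{Eq4.1.12})--(\ref{Eq4.1.15}) for a pure state and by (\ref{Eq4.1.16})--(\ref{Eq4.1.17}) for a mixed state, the position marginal of an \emph{arbitrary} NCWM admits the representation
\begin{equation}
\P_q (q) = \sum_j s_j f_j^{\theta} (q_1, q_2), \qquad 0 \le s_j \le 1,\ \ \sum_j s_j = 1,
\end{equation}
where each $f_j^{\theta}$ is, under the dictionary ${\bf E} \leftrightarrow {\bf J}$, $q_1 \leftrightarrow R$, $q_2 \leftrightarrow \Pi$, $\theta \leftrightarrow \hbar$, the ($d=1$) Wigner measure of the normalized wave function $\phi_j \in L^2 (\bkR, dq_1)$. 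Equivalently, $\P_q$ is the Wigner measure in this one-dimensional calculus of the density matrix $\hat{\rho}_q := \sum_j s_j |\phi_j \rangle \langle \phi_j|$, which is a legitimate (generally mixed) quantum state. Hence the bound (\ref{Eq2.15}), specialized to $d=1$ with $\hbar$ replaced by $\theta$ and applied to $f^C = \P_q$, gives exactly (\ref{Eq4.1.22}). The momentum bound (\ref{Eq4.1.23}) follows in the same way from (\ref{Eq4.1.18})--(\ref{Eq4.1.21}): $\P_p$ is the Wigner measure of $\hat{\rho}_p := \sum_k r_k |\chi_k \rangle \langle \chi_k|$ in the $d=1$ calculus with $\hbar \to \eta$, so (\ref{Eq2.15}) yields $\int_{\bkR^2} [\P_p(p)]^2 ~dp \le \frac{1}{2 \pi \eta}$.

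There is no genuine obstacle beyond this bookkeeping; the only point worth a sentence of care is that the orthonormality (\ref{Eq4.1.15}), (\ref{Eq4.1.21}) of the $\phi_j$ (resp. $\chi_k$) is not needed for the inequality itself, but only to identify when equality is attained (namely, precisely when $\P_q$, resp. $\P_p$, corresponds to a pure one-dimensional state). Should one prefer a self-contained computation, the alternative is to expand $\int [\P_q]^2 = \sum_{j,k} s_j s_k \int f_j^{\theta} f_k^{\theta} ~dq$ and use the overlap identity $\int f_j^{\theta} f_k^{\theta} ~dq = \frac{1}{2\pi\theta} |\langle \phi_j | \phi_k \rangle|^2$, which for orthonormal $\phi_j$ reduces the right-hand side to $\frac{1}{2\pi\theta} \sum_j s_j^2 \le \frac{1}{2\pi\theta}$ since $0 \le s_j \le 1$; but routing the argument through (\ref{Eq2.15}) avoids even this step.
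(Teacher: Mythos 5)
Your proposal is correct and takes essentially the same route as the paper: the paper deduces Theorem 4.3 directly from the purity bound (\ref{Eq2.15}) together with the representations (\ref{Eq4.1.12}) and (\ref{Eq4.1.18}) of the marginals as convex combinations of $\theta$- and $\eta$-Wigner measures, i.e.\ as $d=1$ Wigner measures with $\hbar$ replaced by $\theta$ or $\eta$, which is precisely your argument (your closing overlap-identity computation is just an optional, more explicit version of the same step).
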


These bounds have no counterpart in ordinary quantum mechanics. We shall denote the integrals in (\ref{Eq4.1.22},\ref{Eq4.1.23}) by {\bf $\theta$-purity} and {\bf $\eta$-purity}, respectively. The integral (\ref{Eq4.3}) will be simply called purity. An interesting question regards the saturation of the inequalities (\ref{Eq4.1.22},\ref{Eq4.1.23}). The following lemma shows that it is possible to maximize them.

\begin{lemma} The states of the form:
\begin{equation}
f^{NC} (q,p) = \frac{1}{1 - \zeta} \left( \frac{\theta}{\hbar} \right)^2 f_{\phi_1}^{\theta} \left( \frac{q + (\theta / \hbar) {\bf E} p}{\sqrt{1 - \zeta}} \right) f_{\phi_2}^{\theta} (q)
\label{Eq4.1.24}
\end{equation}
where $f_{\phi_i}^{\theta} (q)$ is the $\theta$-Wigner measure associated with the function $\phi_i \in L^2 (\bkR, dq_1)$ $(i=1,2)$, maximize the $\theta$-purity (\ref{Eq4.1.22}). Likewise, the states of the form:
\begin{equation}
f^{NC} (q,p) = \frac{1}{1 - \zeta} \left( \frac{\eta}{\hbar} \right)^2 f_{\chi_1}^{\eta} \left( \frac{p + (\eta / \hbar) {\bf E} q}{\sqrt{1 - \zeta}} \right) f_{\chi_2}^{\eta} (p)
\label{Eq4.1.25}
\end{equation}
where $f_{\chi_i}^{\eta} (q)$ is the $\eta$-Wigner measure associated with the function $\chi_i \in L^2 (\bkR, dp_1)$ $(i=1,2)$, maximize the $\eta$-purity (\ref{Eq4.1.23}).
\end{lemma}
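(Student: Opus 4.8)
The plan is to verify the three facts on which the statement rests: (a) every function of the form (\ref{Eq4.1.24}) is a genuine NCWM, so that the bound (\ref{Eq4.1.22}) applies to it; (b) its position marginal is exactly $f^{\theta}_{\phi_2}$; and (c) $f^{\theta}_{\phi_2}$, being a pure one-dimensional $\theta$-Wigner measure, saturates the one-dimensional purity bound. The statement (\ref{Eq4.1.25}) for the $\eta$-purity then follows by running the same argument in the momentum representation.

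For (a) I would use the characterisation (\ref{Eq3.16}): it suffices to exhibit one ${\bf S}\in\D_{\Omega}(4;\bkR)$ for which $g(\xi):=(1-\zeta)\,f^{NC}({\bf S}\xi)$ is an ordinary Wigner measure, and I take ${\bf S}$ to be the D-map (\ref{Eq4.4}). Writing $z={\bf S}\xi=(q,p)$ and using ${\bf E}^{2}=-{\bf I}_2$, $\theta\eta=\zeta\hbar^{2}$ and, crucially, the constraint (\ref{Eq4.5}) $\mu\lambda=\tfrac12(1+\sqrt{1-\zeta})$, one obtains the identity $[{\bf I}_2\,|\,(\theta/\hbar){\bf E}]\,{\bf S}=\sqrt{1-\zeta}\,[\lambda{\bf I}_2\,|\,\tfrac{\theta}{2\lambda\hbar}{\bf E}]$, so the argument of $f^{\theta}_{\phi_1}$ collapses to $\lambda R+\tfrac{\theta}{2\lambda\hbar}{\bf E}\Pi$ while $q=\lambda R-\tfrac{\theta}{2\lambda\hbar}{\bf E}\Pi$; hence
$$g(\xi)=\Big(\tfrac{\theta}{\hbar}\Big)^{2}f^{\theta}_{\phi_1}\!\Big(\lambda R+\tfrac{\theta}{2\lambda\hbar}{\bf E}\Pi\Big)\,f^{\theta}_{\phi_2}\!\Big(\lambda R-\tfrac{\theta}{2\lambda\hbar}{\bf E}\Pi\Big).$$
Rescaling $\widetilde\phi_i(x):=\sqrt{\lambda}\,\phi_i(\lambda x)$ and changing variables in the defining integral (\ref{Eq4.1.14}) shows that a one-dimensional $\theta$-Wigner measure is, up to the factor $\hbar/\theta$, a genuine $\hbar$-Wigner function: with $r=\hbar\lambda^{2}/\theta$ one gets $f^{\theta}_{\phi_1}(\lambda R_1+\tfrac{\theta}{2\lambda\hbar}\Pi_2,\ \lambda R_2-\tfrac{\theta}{2\lambda\hbar}\Pi_1)=\tfrac{\hbar}{\theta}\,W_{\widetilde\phi_1}(R_1+\tfrac1{2r}\Pi_2,\ rR_2-\tfrac12\Pi_1)$, and likewise for the second factor with the opposite signs. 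The prefactors $(\theta/\hbar)^{2}$ and $(\hbar/\theta)^{2}$ cancel, so $g$ is a product of two one-dimensional Wigner functions, i.e. the $d=2$ Wigner function of $\widetilde\phi_1\otimes\widetilde\phi_2$ composed with a linear map of $\xi$. That linear map is not symplectic as it stands (its pullback of the symplectic form kills the $dR_i\wedge d\Pi_i$ terms), but this is cured by the elementary identity $W_\psi(x,-p)=W_{\overline\psi}(x,p)$: replacing $W_{\widetilde\phi_1}(X_1,K_1)$ by $W_{\overline{\widetilde\phi_1}}(X_1,-K_1)$ makes the accumulated linear change of variables of $\xi$ symplectic — the coefficients fitting only because of (\ref{Eq4.5}) — so $g$ is the Wigner measure of the pure state $\overline{\widetilde\phi_1}\otimes\widetilde\phi_2$ composed with a symplectic transformation, hence an ordinary Wigner measure. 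Thus (\ref{Eq4.1.24}) is a NCWM, in fact a pure one, consistent with its purity already equalling the bound (\ref{Eq4.3}).

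Step (b) is a short computation: integrating (\ref{Eq4.1.24}) over $p$ and substituting $w=(q+(\theta/\hbar){\bf E}p)/\sqrt{1-\zeta}$, whose Jacobian is $(1-\zeta)(\hbar/\theta)^{2}$, cancels the prefactor $\tfrac1{1-\zeta}(\theta/\hbar)^{2}$, and since $\int_{\bkR^2}f^{\theta}_{\phi_1}(w)\,dw=\|\phi_1\|^{2}=1$ we are left with $\P_q(q)=f^{\theta}_{\phi_2}(q)$. For (c), $f^{\theta}_{\phi_2}$ is the image of the pure-state Wigner measure (\ref{Eq2.13}) under the dictionary ${\bf E}\leftrightarrow{\bf J}$, $\theta\leftrightarrow\hbar$, $d=1$, so the $d=1$ case of (\ref{Eq2.15}) — which is an equality for pure states — gives $\int_{\bkR^2}[f^{\theta}_{\phi_2}(q)]^{2}\,dq=\tfrac1{2\pi\theta}$; combined with (b) this is exactly the right-hand side of (\ref{Eq4.1.22}). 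The claim (\ref{Eq4.1.25}) follows by the same steps starting from the momentum-space form (\ref{Eq4.1.11}) of the pure-state Wigner measure, or from the $q\leftrightarrow p$, $\theta\leftrightarrow\eta$ symmetry of (\ref{Eq4.1}).

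The hard part is step (a), specifically the passage from a product of two one-dimensional Wigner functions to an honest $d=2$ Wigner measure: one must rescale so that the auxiliary ``$\theta$-Planck constant'' becomes $\hbar$, not overlook the conjugation $W_\psi(x,-p)=W_{\overline\psi}(x,p)$, and then verify that the resulting linear change of variables is symplectic — which works out only because of the normalisation (\ref{Eq4.5}) of $\mu\lambda$. Steps (b) and (c) are routine once the normalisation conventions are pinned down.
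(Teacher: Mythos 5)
Your argument is correct, and it reaches the two computational facts the paper also relies on (the pullback identity $g(\xi)=(\theta/\hbar)^2 f^{\theta}_{\phi_1}(\lambda R+\tfrac{\theta}{2\lambda\hbar}{\bf E}\Pi)\,f^{\theta}_{\phi_2}(\lambda R-\tfrac{\theta}{2\lambda\hbar}{\bf E}\Pi)$, and $\P_q=f^{\theta}_{\phi_2}$ with $\int[f^{\theta}_{\phi_2}]^2=\tfrac{1}{2\pi\theta}$); your verification that (\ref{Eq4.5}) is exactly what makes the coefficients match is the same cancellation the paper uses in (\ref{Eq4.1.9}). Where you genuinely diverge is in how you certify that (\ref{Eq4.1.24}) is a NCWM. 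The paper proceeds forwards: it writes down an explicit pure-state wavefunction (\ref{Eq4.1.26}), essentially a twisted convolution of $\overline{\phi_1}$ and $\phi_2$, computes its Wigner measure by direct integration over two of the four variables and a linear change of variables, and lands on (\ref{Eq4.1.30}), so the NCWM property holds by construction. You proceed backwards: you pull (\ref{Eq4.1.24}) through the D map (\ref{Eq4.4}), rescale each one-dimensional $\theta$-Wigner measure into an honest $\hbar$-Wigner function, notice (correctly) that the induced linear map on $\xi$ is not symplectic, repair it with $W_{\psi}(x,-k)=W_{\overline{\psi}}(x,k)$, and then invoke symplectic (metaplectic) covariance of the Wigner transform. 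Your route avoids guessing the preimage wavefunction and the four-fold integral, makes the role of (\ref{Eq4.5}) transparent, and yields purity of the state as a byproduct; its only extra input is metaplectic covariance, which the paper never states explicitly (though it is standard). You could even dispense with that input entirely: once $g(\xi)=W_{\Psi}(L'\xi)$ with $L'\in Sp(4;\bkR)$ and $\Psi=\overline{\widetilde\phi_1}\otimes\widetilde\phi_2$, the matrix ${\bf S}L'^{-1}$ still satisfies (\ref{Eq3.5}), i.e. belongs to $\D_{\Omega}(4;\bkR)$, so (\ref{Eq4.1.24}) is a NCWM directly from Definition 3.1 with the symplectic factor absorbed into the D map. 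The paper's explicit construction, on the other hand, buys a concrete formula for the underlying state, which is what makes the subsequent remark on applying Hudson's theorem to $\phi_2$ (and the constructions of $f_6$ and $f_2$ in Lemma 4.14) immediate.
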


\begin{proof} First of all, let us check that (\ref{Eq4.1.24}) is indeed a NCWM. Consider the normalized wavefunction:
\begin{equation}
\psi (q) = \lambda \sqrt{\frac{2}{\pi \theta}} \int_{\bkR^2} e^{-2i \lambda x q_2 / \theta} \overline{\phi_1 (\lambda q_1 -x)} \phi_2 (\lambda q_2 + x )~d x ,
\label{Eq4.1.26}
\end{equation}
where $\lambda$ is as in eq.(\ref{Eq4.4}). The corresponding NCWM reads:
\begin{equation}
\begin{array}{c}
f^{NC} (q,p) = \left(\frac{\lambda}{\pi \hbar \sqrt{1- \zeta}} \right)^2 \frac{2}{\pi \theta} \int_{\bkR} \int_{\bkR} \int_{\bkR} \int_{\bkR}
\exp \left(- \frac{2i}{\hbar} y_1 \Pi_1 - \frac{2i}{\hbar} y_2 \Pi_2 +  \right. \\
\\
\left. +\frac{2i \lambda}{\theta} x_1 (R_2 - y_2)- \frac{2i \lambda}{\theta} x_2 (R_2 + y_2) \right)
\times \overline{\phi_1 \left( \lambda (R_1 + y_1 ) - x_2 \right)} \phi_1 \left( \lambda (R_1 - y_1 ) - x_1 \right) \\
\\ \overline{\phi_2 \left( \lambda (R_1 - y_1 ) + x_1 \right)} \phi_2 \left( \lambda (R_1 + y_1 ) + x_2 \right) ~ d y_1 dy_2  d x_1 d x_2.
\label{Eq4.1.27}
\end{array}
\end{equation}
Upon integration over $y_2, x_2$, we obtain:
\begin{equation}
\begin{array}{c}
f^{NC} (q,p) = \frac{2 \lambda}{\left( \pi \hbar \sqrt{1- \zeta} \right)^2 } \int_{\bkR} \int_{\bkR}  \exp \left[- \frac{2i}{\hbar} y_1 \Pi_1 + \frac{2i \lambda}{\theta} R_2 \left(2 x_1 + \frac{\theta}{\hbar \lambda} \Pi_2 \right) \right]\times \\
\\
\times \overline{\phi_1 \left( \lambda (R_1 + y_1 ) + x_1 + \frac{\theta}{\hbar \lambda} \Pi_2 \right)} \phi_1 \left( \lambda (R_1 - y_1 ) - x_1 \right) \times \\
\\
\times \overline{\phi_2 \left( \lambda (R_1 - y_1 ) + x_1 \right)} \phi_2 \left( \lambda (R_1 + y_1 ) - x_1 - \frac{\theta}{\hbar \lambda} \Pi_2 \right) ~ d y_1  d x_1  .
\label{Eq4.1.28}
\end{array}
\end{equation}
We then perform the substitution
\begin{equation}
y_1 = \frac{v-u}{2 \lambda}, \hspace{1 cm} x_1 = - \left( \frac{u+v}{2} \right) - \frac{\theta}{2 \hbar \lambda} \Pi_2,
\label{Eq4.1.29}
\end{equation}
with Jacobian $(2 \lambda)^{-1}$. The result is:
\begin{equation}
f^{NC} (q,p) = \left(\frac{\theta}{\hbar} \right)^2 \frac{1}{1- \zeta} f_{\phi_1}^{\theta} \left( \lambda R + \frac{\theta}{2 \hbar \lambda} {\bf E} \Pi \right) f_{\phi_2}^{\theta} \left( \lambda R - \frac{\theta}{2 \hbar \lambda} {\bf E} \Pi \right).
\label{Eq4.1.30}
\end{equation}
Taking into account the D transformation (\ref{Eq4.4}), we obtain (\ref{Eq4.1.24}). By construction this is indeed a NCWM. It remains to prove that (\ref{Eq4.1.24}) maximizes the $\theta$-purity (\ref{Eq4.1.22}).
\begin{equation}
\P_q (q) = \left(\frac{\theta}{\hbar} \right)^2 \frac{1}{1- \zeta} f_{\phi_2}^{\theta} (q)  \int_{\bkR^2}f_{\phi_1}^{\theta} \left(\frac{q + (\theta / \hbar) {\bf E} p}{\sqrt{1 - \zeta}} \right)~ d p  = f_{\phi_2}^{\theta} (q).
\label{Eq4.1.31}
\end{equation}
We then have:
\begin{equation}
\int_{\bkR^2}  \left[ \P_q (q) \right]^2 ~dq = \int_{\bkR^2} \left[f_{\phi_2}^{\theta} (q) \right]^2 ~dq = \frac{1}{2 \pi \theta}.
\label{Eq4.1.32}
\end{equation}
In the last step, we used the fact that $f_{\phi_2}^{\theta}$ is a $\theta$-Wigner measure (\ref{Eq4.1.14}). The proof that the states (\ref{Eq4.1.25}) saturate the $\eta$-purity is analogous.
\end{proof}

\begin{lemma} States of the form (\ref{Eq4.1.24},\ref{Eq4.1.25}) cannot be Wigner measures.
\end{lemma}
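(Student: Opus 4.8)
The plan is to derive a contradiction by comparing \emph{purities}. First I would recall that, by the construction carried out in the proof of Lemma~4.4, a state $f^{NC}$ of the form (\ref{Eq4.1.24}) is the noncommutative Wigner measure of the pure state $\psi$ displayed in (\ref{Eq4.1.26}), and a state of the form (\ref{Eq4.1.25}) is the noncommutative Wigner measure of the analogous pure state built in the momentum representation. Hence both families consist of \emph{pure state} NCWMs, so by Proposition~3.4 they saturate the purity bound (\ref{Eq3.23}); using $|Pf({\bf \Omega})| = 1-\zeta$ from (\ref{Eq4.2}) this gives, for $d=2$, $\int_{\bkR^4}[f^{NC}(z)]^2\,dz = \frac{1}{(2\pi\hbar)^2(1-\zeta)}$.

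Next I would assume, towards a contradiction, that such an $f^{NC}$ were also an ordinary (commutative) Wigner measure on $\bkR^4$. Then the Wigner purity bound (\ref{Eq2.15}) with $d=2$ would force $\int_{\bkR^4}[f^{NC}(z)]^2\,dz \le \frac{1}{(2\pi\hbar)^2}$. Combining these two evaluations of the purity yields $\frac{1}{1-\zeta}\le 1$, i.e.\ $\zeta\le 0$, which contradicts $0<\zeta=\theta\eta/\hbar^2<1$ (the positivity of $\theta,\eta$ together with the bound (\ref{Eq4.2})). This disposes of both (\ref{Eq4.1.24}) and (\ref{Eq4.1.25}) at once.

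There is no serious obstacle here; the one point that needs to be pinned down is that the states (\ref{Eq4.1.24}) and (\ref{Eq4.1.25}) are genuine \emph{pure-state} NCWMs --- equivalently, that the wavefunction $\psi$ in (\ref{Eq4.1.26}) is normalized so that $f^{NC}$ has unit integral --- since it is the \emph{equality} in (\ref{Eq3.23}), not merely the inequality, that drives the contradiction; this is precisely what the computation in the proof of Lemma~4.4 provides. Finally, I would remark that in the special case where $\phi_2$ (resp.\ $\chi_2$) is non-Gaussian one has an independent argument: by (\ref{Eq4.1.31}) the position (resp.\ momentum) marginal of (\ref{Eq4.1.24}) (resp.\ (\ref{Eq4.1.25})) equals $f_{\phi_2}^{\theta}$ (resp.\ $f_{\chi_2}^{\eta}$), which under the correspondence ${\bf E}\leftrightarrow{\bf J}$, $\theta\leftrightarrow\hbar$ is an ordinary one-dimensional Wigner measure and hence, by Hudson's Theorem (Theorem~2.1), assumes negative values unless $\phi_2$ is Gaussian --- whereas the marginals of a commutative Wigner measure are everywhere non-negative. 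The purity argument above is preferable because it also covers the Gaussian subcase.
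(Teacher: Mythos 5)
Your proof is correct and rests on the same mechanism as the paper's own argument: the purity of a state of the form (\ref{Eq4.1.24}) or (\ref{Eq4.1.25}) equals $\frac{1}{(2\pi\hbar)^2(1-\zeta)}$, which strictly exceeds the commutative bound (\ref{Eq2.15}) since $0<\zeta<1$. The only difference is how that value is obtained --- the paper integrates the tensor-product form explicitly, using that the one-dimensional $\theta$-Wigner measures have squared integral $\frac{1}{2\pi\theta}$, whereas you invoke the saturation of (\ref{Eq3.23}) in Proposition 3.4 together with the pure-state construction (\ref{Eq4.1.26}) from the proof of Lemma 4.4 --- a harmless shortcut that yields the same number.
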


\begin{proof} Let us compute the purity of the states of the form (\ref{Eq4.1.24}):
\begin{equation}
\begin{array}{c}
\int_{\bkR^2} \int_{\bkR^2}  \left[f^{NC} (q,p) \right]^2 ~dq  dp= \left(\frac{\theta}{\hbar \sqrt{1 - \zeta}} \right)^4 \int_{\bkR^2} \int_{\bkR^2} \left[f_{\phi_1}^{\theta} \left( \frac{q + (\theta / \hbar) {\bf E} p}{\sqrt{1- \zeta}} \right) \right]^2 \left[f_{\phi_2}^{\theta} (q ) \right]^2~dq  dp =\\
\\
= \left(\frac{\theta}{\hbar \sqrt{1 - \zeta}} \right)^2 \int_{\bkR^2} \int_{\bkR^2}  \left[f_{\phi_1}^{\theta} (q' ) \right]^2 \left[f_{\phi_2}^{\theta} (q ) \right]^2 ~dq d q'= \frac{1}{(2 \pi \hbar \sqrt{1 - \zeta})^2} > \frac{1}{(2 \pi \hbar )^2}
\label{Eq4.1.33}
\end{array}
\end{equation}
We obtain the same result if we use instead the states (\ref{Eq4.1.25}). From (\ref{Eq2.15}) this proves the lemma.
\end{proof}

Notice that we may apply Hudson's theorem to the marginal distributions whenever we have states of the form (\ref{Eq4.1.24}) or (\ref{Eq4.1.25}): the marginal distributions $\P_q (q)$ or $\P_p (p)$ are positive iff $\phi_2$ or $\chi_2$ are Gaussian, respectively.

Before we conclude this section, a brief remark is in order. States of the form (\ref{Eq4.1.24},\ref{Eq4.1.25}) look like the tensor product of one-dimensional states. It then seems natural to look for entangled states of the form
\begin{equation}
\begin{array}{c}
f^{NC} (q,p) = \frac{1}{1 - \zeta} \left( \frac{\theta}{\hbar} \right)^2 \left[r f_{\phi_1}^{\theta} \left( \frac{q + (\theta / \hbar) {\bf E} p}{\sqrt{1 - \zeta}} \right) f_{\phi_2}^{\theta} (q) + \right.\\
\\
\left. + (1-r)  f_{\psi_1}^{\theta} \left( \frac{q + (\theta / \hbar) {\bf E} p}{\sqrt{1 - \zeta}} \right) f_{\psi_2}^{\theta} (q) \right],
\end{array}
\label{Eq4.1.40}
\end{equation}
with $0 <r<1$. As we mentioned in the introduction, this may be the starting point for a theory of noncommutative quantum information and quantum computation for continuous variables \cite{Giedke}. Hopefully, this may lead to qualitatively new predictions that could signal the existence of noncommutativity in the physical world. This will be the subject of a future work.

\end{subsection}

\begin{subsection}{Noncommutative Narcowich-Wigner spectrum}

An interesting question is that of constructing the analog of the KLM conditions for noncommutative Wigner measures. Let us start by defining the noncommutative version of the symplectic Fourier transform:

\begin{definition} Let $f \in \F$. Its noncommutative symplectic Fourier transform is defined by:
\begin{equation}
\tilde f^{{\bf \Omega}} (a) = \int_{\bkR^4}  f(z) \exp  \left(i a^T {\bf \Omega}^{-1} z \right) ~dz.
\label{Eq4.2.1}
\end{equation}
The inverse formula is:
\begin{equation}
f(z) = \frac{1}{(2 \pi \sqrt{1- \zeta} )^4}  \int_{\bkR^4}  \tilde f^{{\bf \Omega}} (a) \exp  \left(- i a^T {\bf \Omega}^{-1} z \right) ~da.
\label{Eq4.2.2}
\end{equation}
\end{definition}

As usual we define the convolution of $f \in L_p$ and $g \in L_q$ with $p^{-1}+q^{-1}=1$:
\begin{equation}
(f \natural g) (z) = \int_{\bkR^4} f(z-z') g (z') d z'= \int_{\bkR^4}  f(z') g (z-z') ~dz'.
\label{Eq4.2.3}
\end{equation}
From (\ref{Eq4.2.1}-\ref{Eq4.2.3}) it is easy to prove that the noncommutative symplectic Fourier transform of the convolution amounts to pointwise multiplication:
\begin{equation}
\widetilde{(f \natural g)}^{{\bf \Omega}} = \tilde f^{{\bf \Omega}} (a) \cdot \tilde g^{{\bf \Omega}} (a).
\label{Eq4.2.4}
\end{equation}
We then propose the following noncommutative generalization of Definition 2.3:

\begin{definition} We say that $\tilde f^{{\bf \Omega}} (a)$ is of the $(\alpha, \beta, \gamma )$-positive type, if the $m \times m$ matrix with entries
\begin{equation}
N_{jk} \equiv \tilde f^{{\bf \Omega}} (a_j-a_k) \exp \left( \frac{i}{2} a_k^T {\bf \Lambda} (\alpha, \beta, \gamma ) a_j \right)
\label{Eq4.2.5}
\end{equation}
is hermitian and non-negative for any positive integer $m$ and any set of $m$ points $a_1, \cdots, a_m$ in the dual of the phase space. The matrix ${\bf \Lambda}$ is defined by:
\begin{equation}
{\bf \Lambda} (\alpha, \beta, \gamma ) = \left(
\begin{array}{c c}
\gamma {\bf E} & - \alpha {\bf I}_{2 \times 2}\\
\alpha {\bf I}_{2 \times 2} & \beta {\bf E}
\end{array}
\right).
\label{Eq4.2.6}
\end{equation}
\end{definition}

Then the following theorem holds:

\begin{theorem} The function $f \in \F$ is a NCWM iff its noncommutative symplectic Fourier transform $\tilde f^{{\bf \Omega}} (a)$ satisfies the set of noncommutative KLM conditions:
\begin{eqnarray}
(i) & \tilde f^{{\bf \Omega}} (0) & =1 \label{Eq4.2.7}\\
(ii) & \tilde f^{{\bf \Omega}} (a) & {\mbox{is continuous and of the $(\tilde{\hbar}, \tilde{\theta}, \tilde{\eta})$-positive type}} \label{Eq4.2.8}
\end{eqnarray}
where
\begin{equation}
\tilde{\hbar}= (1 - \zeta)^{-1} \hbar, \hspace{0.3 cm} \tilde{\theta}= (1 - \zeta)^{-1} \theta, \hspace{0.3 cm} \tilde{\eta}= (1 - \zeta)^{-1} \eta. \label{Eq4.2.9}
\end{equation}
\end{theorem}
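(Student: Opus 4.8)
The plan is to deduce the noncommutative KLM conditions from the ordinary KLM conditions (Theorem~2.3) by pulling everything back along the Darboux map. Recall from Definition~3.1 that $f^{NC}(z)$ is a NCWM iff $f^C(\xi) := |Pf({\bf \Omega})|\, f^{NC}({\bf S}\xi)$ is an ordinary Wigner measure, where ${\bf S} \in \D_{\Omega}(2d;\bkR)$; we may use the explicit $d=2$ map (\ref{Eq4.4})--(\ref{Eq4.6}). So the first step is to relate the noncommutative symplectic Fourier transform $\tilde f^{{\bf \Omega}}$ of $f^{NC}$ to the ordinary symplectic Fourier transform $\tilde f_C^{{\bf J}}$ of $f^C$. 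Substituting $z = {\bf S}\xi$ into (\ref{Eq4.2.1}) and using ${\bf \Omega}^{-1} = ({\bf S}{\bf J}{\bf S}^T)^{-1} = ({\bf S}^T)^{-1}{\bf J}^{-1}{\bf S}^{-1} = -({\bf S}^T)^{-1}{\bf J}{\bf S}^{-1}$ together with $\det{\bf S} = |Pf({\bf \Omega})|$, I expect to get a clean identity of the form $\tilde f^{{\bf \Omega}}(a) = \tilde f_C^{{\bf J}}({\bf S}^T a)$ (possibly up to a sign in the argument, which can be absorbed using Corollary~2.7). The normalization condition (\ref{Eq4.2.7}), $\tilde f^{{\bf \Omega}}(0)=1$, is then immediately equivalent to $\tilde f_C^{{\bf J}}(0)=1$, i.e. to condition (i) of Theorem~2.3, and continuity of one transform is equivalent to continuity of the other since ${\bf S}$ is a linear isomorphism.

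The second step is to transport the $\hbar$-positive-type condition. Writing $b_j = {\bf S}^T a_j$, the ordinary KLM matrix from (\ref{Eq2.21}) with $\alpha=\hbar$ has entries $\tilde f_C^{{\bf J}}(b_j - b_k)\exp(-\tfrac{i\hbar}{2} b_k^T {\bf J} b_j) = \tilde f^{{\bf \Omega}}(a_j - a_k)\exp(-\tfrac{i\hbar}{2} a_k^T {\bf S}{\bf J}{\bf S}^T a_j) = \tilde f^{{\bf \Omega}}(a_j - a_k)\exp(-\tfrac{i\hbar}{2} a_k^T {\bf \Omega}\, a_j)$ by (\ref{Eq3.5}). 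Since the $a_j$ range over all finite point sets exactly when the $b_j$ do (${\bf S}^T$ is invertible), $\tilde f_C^{{\bf J}}$ being of $\hbar$-positive type is equivalent to the matrices with entries $\tilde f^{{\bf \Omega}}(a_j - a_k)\exp(-\tfrac{i\hbar}{2} a_k^T {\bf \Omega}\, a_j)$ being hermitian and non-negative for all point configurations. It remains to identify $\hbar {\bf \Omega}$ with the matrix ${\bf \Lambda}(\tilde\hbar, \tilde\theta, \tilde\eta)$ of (\ref{Eq4.2.6}) up to the overall sign convention in (\ref{Eq4.2.5}). Here I would simply compute: from (\ref{Eq4.1}), $\hbar{\bf \Omega}$ has blocks $\theta{\bf E}$, $\hbar{\bf I}$, $-\hbar{\bf I}$, $\eta{\bf E}$, whereas $-{\bf \Lambda}(\alpha,\beta,\gamma)$ has blocks $-\gamma{\bf E}$, $\alpha{\bf I}$, $-\alpha{\bf I}$, $-\beta{\bf E}$. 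Matching gives $\alpha = \hbar$, $\gamma = -\theta$, $\beta = -\eta$ --- not yet the claimed values, so the sign bookkeeping between (\ref{Eq2.21}) and (\ref{Eq4.2.5}) and the factor $(1-\zeta)^{-1}$ from (\ref{Eq4.2.9}) must be tracked carefully; I anticipate that the correct pull-back actually produces $\tilde f^{{\bf \Omega}}(a) = \tilde f_C^{{\bf J}}(c\,{\bf S}^T a)$ for a scalar $c$ tied to $\det{\bf S} = 1-\zeta$, and rescaling the dummy points $a_j \mapsto c^{-1}a_j$ rescales the exponent by $c^{-2} = (1-\zeta)^{-2}$; combined with one factor of $(1-\zeta)$ absorbed elsewhere this yields precisely the tildes in (\ref{Eq4.2.9}).

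The main obstacle will be this scaling/sign reconciliation: getting the factor $(1-\zeta)^{-1}$ in (\ref{Eq4.2.9}) to come out correctly, rather than $(1-\zeta)^{\pm 2}$ or $1$, and fixing the signs of $\tilde\theta,\tilde\eta$ relative to the block structure of ${\bf \Lambda}$. The cleanest way to nail this is to \emph{define} the auxiliary transform so that the inverse formula (\ref{Eq4.2.2}) with its $(2\pi\sqrt{1-\zeta})^{-4}$ prefactor is the Fourier-inversion dual of (\ref{Eq4.2.1}), then verify by direct substitution $z={\bf S}\xi$, $a \mapsto$ (appropriate rescaling) that (\ref{Eq4.2.1}) matches (\ref{Eq2.19}) applied to $f^C$; the prefactor $|Pf({\bf \Omega})|^{-1} = (1-\zeta)^{-1}$ in (\ref{Eq3.16}), the Jacobian $\det{\bf S} = 1-\zeta$ of the change of variables, and the argument rescaling must together conspire to leave $\tilde f^{{\bf \Omega}}(0)=1$ (consistent with $f^{NC}$ being normalized) while shifting the star-product parameter from $\hbar$ to $\hbar/(1-\zeta)$. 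Once the identity $\tilde f^{{\bf \Omega}}(a) = \tilde f_C^{{\bf J}}(\text{const}\cdot{\bf S}^T a)$ with the right constant is established, the theorem follows by invoking Theorem~2.3 for $f^C$, and the equivalence is two-sided because every step (linear change of variable, rescaling of the point sets) is reversible. I would also remark in passing that Theorem~2.4 (the $\alpha$-positivity version of KLM) could be used in place of Theorem~2.3 if it turns out more convenient to work directly with $\star_{\hbar/(1-\zeta)}$ rather than $\star_\hbar$.
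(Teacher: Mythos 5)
Your overall strategy --- pulling the NCWM back along a Darboux map and invoking the commutative KLM theorem --- is exactly the paper's, but the computation on which the whole theorem hinges goes wrong at the pull-back identity, and the repair you anticipate cannot work. The correct identity is $\tilde f^{{\bf \Omega}} (a) = \tilde f_C^{{\bf J}} ({\bf S}^{-1} a)$, with no transpose and no scalar constant: substituting $z={\bf S}\xi$ in (\ref{Eq4.2.1}), the Jacobian $\det {\bf S}$ cancels the prefactor $\left| Pf({\bf \Omega})\right|^{-1}$ of (\ref{Eq3.16}) exactly, and your own formula ${\bf \Omega}^{-1} = -({\bf S}^T)^{-1}{\bf J}{\bf S}^{-1}$ gives $a^T {\bf \Omega}^{-1}{\bf S}\xi = -({\bf S}^{-1}a)^T {\bf J}\xi$. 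Consequently, when transporting the KLM matrices you must set $b_j = {\bf S} a_j$ (not $b_j={\bf S}^T a_j$), and the exponent becomes $-\tfrac{i\hbar}{2}({\bf S}^{-1}b_k)^T {\bf J} ({\bf S}^{-1}b_j) = +\tfrac{i\hbar}{2}\, b_k^T {\bf \Omega}^{-1} b_j$: it is $\hbar {\bf \Omega}^{-1}$, not $\hbar{\bf \Omega}$, that must be matched with ${\bf \Lambda}$. A direct computation from (\ref{Eq4.1}) gives $\hbar {\bf \Omega}^{-1} = \frac{1}{1-\zeta}\left(\begin{smallmatrix} \eta {\bf E} & -\hbar {\bf I} \\ \hbar {\bf I} & \theta {\bf E}\end{smallmatrix}\right) = {\bf \Lambda}(\tilde{\hbar},\tilde{\theta},\tilde{\eta})$; the inversion of ${\bf \Omega}$ is the sole source of the factor $(1-\zeta)^{-1}$ in (\ref{Eq4.2.9}) and of the fact that the $\eta$-block lands in the $\gamma$ slot of (\ref{Eq4.2.6}) while the $\theta$-block lands in the $\beta$ slot.

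The fix you propose --- a scalar $c$ tied to $\det{\bf S}$ in the pull-back, with the rescaling $a_j\mapsto c^{-1}a_j$ supplying $(1-\zeta)^{-2}$ and one factor ``absorbed elsewhere'' --- would fail even if such a $c$ existed. Your mismatch is not a scalar one: your exponent carries $\hbar{\bf \Omega}$, with blocks $(\theta{\bf E},\,\hbar{\bf I};\,-\hbar{\bf I},\,\eta{\bf E})$, whereas ${\bf \Lambda}(\tilde{\hbar},\tilde{\theta},\tilde{\eta})$ has blocks $(\tilde{\eta}{\bf E},\,-\tilde{\hbar}{\bf I};\,\tilde{\hbar}{\bf I},\,\tilde{\theta}{\bf E})$. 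A uniform rescaling of the points multiplies the entire matrix in the exponent by a single constant and therefore cannot exchange the $\theta$- and $\eta$-blocks --- matching would force $\theta=-\eta$, impossible for positive parameters. Nor is there any constant to be harvested from the normalization of (\ref{Eq4.2.1})--(\ref{Eq4.2.2}): the prefactor of the inverse transform never enters the exponent, and in the forward direction prefactor and Jacobian cancel identically. Once ${\bf S}^T$ is replaced by ${\bf S}^{-1}$ and the inversion of ${\bf \Omega}$ is allowed to do the work, the remainder of your argument (equivalence of the normalization and continuity conditions, two-sidedness via the invertible change of point sets) is sound and coincides with the paper's proof.
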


\begin{proof} The function $f \in \F$ is a NCWM iff there exists a Wigner measure $g$ and a matrix ${\bf S} \in \D_{\Omega} (2d; \bkR)$ such that
\begin{equation}
f(z) = \frac{1}{\left| Pf({\bf \Omega}) \right|} g ( {\bf S}^{-1} z). \label{Eq4.2.10}
\end{equation}
The function $g(\xi)$ thus satisfies the KLM conditions (\ref{Eq2.22},\ref{Eq2.23}). Let us compute the noncommutative symplectic Fourier transform of (\ref{Eq4.2.10}). From (\ref{Eq2.19},\ref{Eq3.5},\ref{Eq3.6}) and the fact that ${\bf J}^{-1} = {\bf J}^T = - {\bf J}$, we get:
\begin{equation}
\begin{array}{c}
\tilde f^{{\bf \Omega}} (a) = \frac{1}{\left| Pf({\bf \Omega}) \right|} \int_{\bkR^4}  g ( {\bf S}^{-1} z) \exp \left( i a^T {\bf \Omega}^{-1} z \right) ~dz= \\
\\
= \frac{\det {\bf S}}{\left| Pf({\bf \Omega}) \right|} \int_{\bkR^4}  g ( \xi) \exp \left( i a^T {\bf \Omega}^{-1} {\bf S} \xi \right)~d \xi = \\
\\
=
 \int_{\bkR^4} g ( \xi) \exp \left( - i({\bf S}^{-1} a)^T {\bf J} \xi \right) ~d \xi= \tilde g^{{\bf J}} ({\bf S}^{-1} a ). \label{Eq4.2.11}
\end{array}
\end{equation}
From this equation it follows that conditions (\ref{Eq2.22}) and (\ref{Eq4.2.7}) are equivalent. For any positive integer $m$ and any set of points $a_1, \cdots, a_m$ in the dual of the phase-space, let us consider the matrices:
\begin{equation}
M_{jk} = \tilde g^{{\bf J}} (a_j -a_k) \exp \left( - \frac{i \hbar}{2} a_k^T {\bf J} a_j \right).
\label{Eq4.2.12}
\end{equation}
If we define $b_i = {\bf S} a_i$ $(i=1, \cdots, m)$, we get from (\ref{Eq4.2.11}):
\begin{equation}
M_{jk} = \tilde f^{{\bf \Omega}} (b_j -b_k) \exp \left( \frac{i \hbar}{2} b_k^T {\bf \Omega}^{-1} b_j \right),
\label{Eq4.2.13}
\end{equation}
where we used (cf.(\ref{Eq3.5})):
\begin{equation}
- ({\bf S}^T)^{-1} {\bf J} {\bf S}^{-1} = ({\bf S}{\bf J}{\bf S}^T )^{-1} = {\bf \Omega}^{-1}.
\label{Eq4.2.14}
\end{equation}
Now notice that:
\begin{equation}
\frac{i \hbar}{2} {\bf \Omega}^{-1} = \frac{i}{2} \left(
\begin{array}{c c}
\tilde{\eta} {\bf E} & - \tilde{\hbar} {\bf I}_{2 \times 2}\\
\tilde{\hbar} {\bf I}_{2 \times 2} & \tilde{\theta} {\bf E}
\end{array}
\right) = \frac{i}{2} {\bf \Lambda} (\tilde{\hbar}, \tilde{\theta} ,\tilde{\eta}).
\label{Eq4.2.15}
\end{equation}
That is:
\begin{equation}
M_{jk} = \tilde f^{{\bf \Omega}} (b_j -b_k) \exp \left( \frac{i}{2} b_k^T {\bf \Lambda} (\tilde{\hbar}, \tilde{\theta} ,\tilde{\eta}) b_j \right),
\label{Eq4.2.16}
\end{equation}
The function $\tilde g^{{\bf J}} (a)$ is then of the $\hbar$-positive type, iff $\tilde f^{{\bf \Omega}} (a)$ is of the $(\tilde{\hbar}, \tilde{\theta} ,\tilde{\eta})$-positive type.
\end{proof}

\begin{definition} The noncommutative Narcowich-Wigner (NCNW) spectrum of $f \in \F$ is the set:
\begin{equation}
\W^{NC} (f) = \left\{(\alpha, \beta, \gamma) \in \bkR^3 \left| \tilde f^{{\bf \Omega}} (a) {\mbox{ is of $(\alpha, \beta, \gamma)$-positive type}} \right. \right\}.
\label{Eq4.2.17}
\end{equation}
Obviously, if $f$ is a NCWM, then $(\tilde{\hbar},\tilde{\theta},\tilde{\eta}) \in \W^{NC} (f)$.
\end{definition}

From this definition, we may prove the analog of Theorem 2.8:

\begin{theorem} The NCNW spectrum of the convolution $f \natural g$ of $f,g \in \F$ contains all elements of the form
\begin{equation}
(\alpha_1 + \alpha_2, \beta_1 + \beta_2, \gamma_1 + \gamma_2 ),
\label{Eq4.2.18}
\end{equation}
with $(\alpha_1 , \beta_1 , \gamma_1) \in \W^{NC} (f)$ and $(\alpha_2 , \beta_2 , \gamma_2) \in \W^{NC} (g)$.
\end{theorem}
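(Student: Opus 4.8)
The plan is to carry the proof of Theorem 2.8 over to the noncommutative setting, using two ingredients that are already available. First, the noncommutative symplectic Fourier transform turns convolution into pointwise multiplication: $\widetilde{(f\natural g)}^{{\bf \Omega}}(a) = \tilde f^{{\bf \Omega}}(a)\cdot\tilde g^{{\bf \Omega}}(a)$, which is (\ref{Eq4.2.4}). Second, the matrix ${\bf \Lambda}(\alpha,\beta,\gamma)$ of (\ref{Eq4.2.6}) is linear in the triple $(\alpha,\beta,\gamma)$, so that ${\bf \Lambda}(\alpha_1+\alpha_2,\beta_1+\beta_2,\gamma_1+\gamma_2) = {\bf \Lambda}(\alpha_1,\beta_1,\gamma_1) + {\bf \Lambda}(\alpha_2,\beta_2,\gamma_2)$. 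Combining these, the matrix whose non-negativity defines $(\alpha_1+\alpha_2,\beta_1+\beta_2,\gamma_1+\gamma_2)$-positivity of $f\natural g$ will turn out to be the entrywise (Schur, or Hadamard) product of the matrices defining $(\alpha_1,\beta_1,\gamma_1)$-positivity of $f$ and $(\alpha_2,\beta_2,\gamma_2)$-positivity of $g$; since the Schur product of hermitian, non-negative matrices is again hermitian and non-negative, the conclusion follows.

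Concretely, fix $(\alpha_1,\beta_1,\gamma_1)\in\W^{NC}(f)$ and $(\alpha_2,\beta_2,\gamma_2)\in\W^{NC}(g)$, a positive integer $m$, and points $a_1,\dots,a_m$ in the dual of the phase space. By the definition of the $(\alpha,\beta,\gamma)$-positive type, the $m\times m$ matrices
\begin{align*}
N^{(1)}_{jk} &= \tilde f^{{\bf \Omega}}(a_j-a_k)\,\exp\!\left(\frac{i}{2}\,a_k^T{\bf \Lambda}(\alpha_1,\beta_1,\gamma_1)\,a_j\right), \\
N^{(2)}_{jk} &= \tilde g^{{\bf \Omega}}(a_j-a_k)\,\exp\!\left(\frac{i}{2}\,a_k^T{\bf \Lambda}(\alpha_2,\beta_2,\gamma_2)\,a_j\right)
\end{align*}
are hermitian and non-negative. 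Applying (\ref{Eq4.2.4}) to $\widetilde{(f\natural g)}^{{\bf \Omega}}(a_j-a_k)$ and splitting the exponential by the linearity of ${\bf \Lambda}$ (legitimate since the exponents are commuting scalars), the matrix $N$ attached to $f\natural g$ and the triple $(\alpha_1+\alpha_2,\beta_1+\beta_2,\gamma_1+\gamma_2)$ has entries
\begin{equation*}
N_{jk} = \widetilde{(f\natural g)}^{{\bf \Omega}}(a_j-a_k)\,\exp\!\left(\frac{i}{2}\,a_k^T{\bf \Lambda}(\alpha_1+\alpha_2,\beta_1+\beta_2,\gamma_1+\gamma_2)\,a_j\right) = N^{(1)}_{jk}\,N^{(2)}_{jk}.
\end{equation*}
Hence $N$ is the Schur product of $N^{(1)}$ and $N^{(2)}$, so it is hermitian and non-negative; since $m$ and the points $a_j$ were arbitrary, $(\alpha_1+\alpha_2,\beta_1+\beta_2,\gamma_1+\gamma_2)\in\W^{NC}(f\natural g)$, which is the assertion.

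I do not expect a serious obstacle: once the linearity of ${\bf \Lambda}$ and the identity (\ref{Eq4.2.4}) are in place, the statement is essentially forced, just as in the commutative Theorem 2.8. The one point that warrants care is the well-definedness of $f\natural g$ and of its noncommutative symplectic Fourier transform --- for $f,g\in\F$ the convolution exists pointwise by Cauchy--Schwarz but need not itself lie in $\F$, so (\ref{Eq4.2.4}) and the definition of $\W^{NC}(f\natural g)$ should be read under the same (tacit) regularity conventions already in force for Theorem 2.8. The remaining ingredient, the Schur product theorem (that the entrywise product of hermitian, non-negative matrices is hermitian and non-negative), is standard.
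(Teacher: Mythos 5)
Your proof is correct and follows essentially the same route as the paper, which likewise rests on the convolution identity (\ref{Eq4.2.4}), the additivity of ${\bf \Lambda}$, and the Schur product theorem; you have merely written out explicitly the Schur-product factorization that the paper leaves as "immediate."
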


\begin{proof} The theorem follows immediately from the facts that (i) the Schur (or Hadamard) product of two hermitian and non-negative matrices is again a hermitian, non-negative matrix, and (ii) the matrix ${\bf \Lambda} (\alpha, \beta, \gamma)$ is additive with respect to its arguments:
\begin{equation}
{\bf \Lambda} (\alpha_1+ \alpha_2, \beta_1 + \beta_2, \gamma_1+\gamma_2) = {\bf \Lambda} (\alpha_1, \beta_1 , \gamma_1) + {\bf \Lambda} ( \alpha_2, \beta_2, \gamma_2).
\label{Eq4.2.19}
\end{equation}
\end{proof}

The interpretation of $(\alpha, \beta, \gamma)$-positive functions is slightly less straightforward than that of $\alpha$-positive functions in the commutative case. Indeed, we know from Theorem 2.6, that that if $g(\xi)$ is properly normalized and if its symplectic Fourier transform is continuous and of the $\alpha$-positive type (with $\alpha \ne 0$), then we can just think of it as a Wigner function, where $\hbar$ has been replaced by $\alpha$. In other words, there exists $b(\xi) \in \F$ such that (\ref{Eq2.25}) holds. In the noncommutative case, the relation between $(\alpha, \beta , \gamma)$-positive functions and elements of the form $\overline{b (z)} \star_{\alpha} \star_{\beta} \star_{\gamma} b(z)$ is less clear. To begin with, if we consider a function $f(z) = \overline{b (z)} \star_{\hbar} \star_{\theta} \star_{\eta} b(z)$, then eq.(\ref{Eq4.2.9}) reveals that $f$ is not of the $(\hbar, \theta , \eta)$-positive type but rather of the $(\tilde{\hbar}, \tilde{\theta} , \tilde{\eta})$-positive type. Moreover, there is an additional complication. If we look carefully at the noncommutative symplectic Fourier transform (\ref{Eq4.2.1}) (and contrary to what happens with the commutative symplectic Fourier transform (\ref{Eq2.19})), it depends on the parameters $\hbar, \theta, \eta$, even if we are considering functions of $(\alpha, \beta, \gamma)$-positive type, with $(\alpha, \beta, \gamma) \ne (\tilde{\hbar}, \tilde{\theta}, \tilde{\eta})$. To circumvent this difficulty, let us consider a function $f \in \F$ of the form:
\begin{equation}
f(z) = \overline{b(z)} \star_{\alpha} \star_{\beta} \star_{\gamma} b(z),
\label{Eq4.2.20}
\end{equation}
with
\begin{equation}
\int_{\bkR^4} |b(z)|^2 dz=1,
\label{Eq4.2.21}
\end{equation}
where $\alpha, \beta, \gamma$ have the dimensions of $\hbar, \theta, \eta$, respectively and where:
\begin{equation}
\alpha^2 \ne \beta \gamma.
\label{Eq4.2.22}
\end{equation}
We may thus construct a new complete Weyl-Wigner formulation by defining:
\begin{equation}
\alpha {\bf \Omega}' = \left(
\begin{array}{c c}
\beta {\bf E} & \alpha {\bf I}_{2 \times 2}\\
- \alpha {\bf I}_{2 \times 2} & \gamma {\bf E}
\end{array}
\right).
\label{Eq4.2.23}
\end{equation}
We also define appropriate D maps ${\bf S}'$:
\begin{equation}
{\bf S}' {\bf J} {\bf S}'^T = {\bf \Omega}', \hspace{1 cm} \det {\bf S}' = |1 - \zeta'|, \hspace{1 cm} \zeta'=\frac{\beta \gamma}{\alpha}.
\label{Eq4.2.24}
\end{equation}
Our whole construction of noncommutative quantum mechanics in phase-space \cite{Bastos1} goes through with $(\hbar, \theta, \eta)$ replaced by $(\alpha, \beta, \gamma)$. Consequently if $f$ is of the form (\ref{Eq4.2.20},\ref{Eq4.2.21}), there must exist a normalized function $g(\xi)$ of $\alpha$-positive type such that:
\begin{equation}
f(z) = \frac{1}{\left| Pf({\bf \Omega'}) \right|} g ({\bf S}'^{-1} z).
\label{Eq4.2.25}
\end{equation}
The corresponding noncommutative symplectic Fourier transform is:
\begin{equation}
\begin{array}{c}
\tilde f^{{\bf \Omega}} (a) = \int_{\bkR^4} f (z) \exp \left( i a^T {\bf \Omega}^{-1} z \right) ~dz=
\frac{1}{\left| Pf({\bf \Omega'}) \right|} \int_{\bkR^4} g ({\bf S}'^{-1} z) \exp \left( i a^T {\bf \Omega}^{-1} z \right) ~dz= \\
\\
= \int_{\bkR^4}  g (\xi) \exp \left( i a^T {\bf \Omega}^{-1} {\bf S}' {\bf J}^T {\bf J} \xi \right) ~d \xi= \tilde g^{{\bf J}} ({\bf J} {\bf S}'^T {\bf \Omega}^{-1} a).
\end{array}
\label{Eq4.2.26}
\end{equation}
Since $\tilde g^{{\bf J}}$ is of $\alpha$-positive type, for any positive integer $m$ and any set of points $b_1, \cdots, b_m$, the matrices
\begin{equation}
M_{jk} = \tilde g^{{\bf J}} (b_j - b_k) \exp \left( - \frac{i \alpha}{2} b_k^T {\bf J} b_j \right),
\label{Eq4.2.27}
\end{equation}
are hermitian and non-negative. If we define $b_i = {\bf J}{\bf S}'^T {\bf \Omega}^{-1} a_i$, $(i=1, \cdots, m)$, we get from (\ref{Eq4.2.26}):
\begin{equation}
M_{jk} = \tilde f^{{\bf \Omega}} (a_j - a_k) \exp \left(  \frac{i \alpha}{2} a_k^T {\bf \Omega}^{-1} {\bf S}'{\bf J}{\bf S}'^T {\bf \Omega}^{-1} a_j \right),
\label{Eq4.2.28}
\end{equation}
From (\ref{Eq4.2.24}) it then follows that:
\begin{equation}
M_{jk} = \tilde f^{{\bf \Omega}} (a_j - a_k) \exp \left(  \frac{i \alpha}{2} a_k^T {\bf \Omega}^{-1} {\bf \Omega}' {\bf \Omega}^{-1} a_j \right).
\label{Eq4.2.29}
\end{equation}
Now notice that
\begin{equation}
\alpha  {\bf \Omega}^{-1} {\bf \Omega}' {\bf \Omega}^{-1} = {\bf \Lambda} (\alpha^{\sharp}, \beta^{\sharp}, \gamma^{\sharp}),
\label{Eq4.2.30}
\end{equation}
where:
\begin{equation}
\alpha^{\sharp} = \frac{\alpha \tilde{\hbar} (1 + \zeta) - \tilde{\eta} \beta - \tilde{\theta} \gamma}{\hbar(1- \zeta)} , \hspace{1 cm}
\beta^{\sharp} = \frac{2 \alpha \tilde{\hbar} \tilde{\theta} - \tilde{\hbar}^2 \beta - \tilde{\theta}^2 \gamma}{\hbar^2} , \hspace{1 cm} \gamma^{\sharp} = \frac{2 \alpha \tilde{\hbar} \tilde{\eta} - \tilde{\hbar}^2 \gamma - \tilde{\eta}^2 \beta}{\hbar^2}
\label{Eq4.2.31}
\end{equation}
This system is easily inverted:
\begin{equation}
\alpha = (1 + \zeta) \alpha^{\sharp} - \frac{1}{\hbar} (\eta \beta^{\sharp} + \theta \gamma^{\sharp}), \hspace{1 cm} \beta = \frac{2 \hbar \zeta \alpha^{\sharp} - \eta \beta^{\sharp} - \theta \zeta \gamma^{\sharp}}{\eta}, \hspace{1 cm} \gamma = \frac{2 \hbar \zeta \alpha^{\sharp} - \theta \gamma^{\sharp} - \zeta \eta \beta^{\sharp}}{\theta}.
\label{Eq4.2.32}
\end{equation}
Consequently, for $\alpha, \beta, \gamma$ such that (\ref{Eq4.2.22}) holds, we have:

\begin{theorem} If $f(z) \in \F$ is such that (\ref{Eq4.2.20},\ref{Eq4.2.21}) holds for some $b(z) \in \F$, then $f$ is such that $\tilde f^{{\bf \Omega}} (0)=1$ and $\tilde f^{{\bf \Omega}}$ is continuous and of the $(\alpha^{\sharp}, \beta^{\sharp}, \gamma^{\sharp} )$-positive type, with $\alpha^{\sharp}, \beta^{\sharp}, \gamma^{\sharp}$ given by (\ref{Eq4.2.31}). Conversely, if $\tilde f^{{\bf \Omega}} (0)=1$ and $\tilde f^{{\bf \Omega}}$ is continuous and of the $(\alpha^{\sharp}, \beta^{\sharp}, \gamma^{\sharp} )$-positive type, then there exists $b(z) \in \F$ such that (\ref{Eq4.2.20},\ref{Eq4.2.21}) hold, for $\alpha, \beta, \gamma$, given by (\ref{Eq4.2.32}) as long as $\alpha^2 \ne \beta \gamma$.
\end{theorem}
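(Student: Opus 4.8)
The plan is to package the chain of identities (\ref{Eq4.2.20})--(\ref{Eq4.2.32}) into a genuine equivalence, the two external inputs being Theorem 2.6 and Proposition 3.2 (together with Definition 3.1), each applied not to the original parameters $(\hbar,\theta,\eta)$ but to the deformed triple $(\alpha,\beta,\gamma)$. The standing hypothesis $\alpha\neq 0$, $\alpha^2\neq\beta\gamma$ is exactly what makes the matrix ${\bf \Omega}'$ of (\ref{Eq4.2.23}) well defined and non-degenerate, so that the whole construction of noncommutative quantum mechanics from \cite{Bastos1} — existence of D maps ${\bf S}'$ with ${\bf S}'{\bf J}{\bf S}'^T={\bf \Omega}'$ and $\det{\bf S}'=\left|Pf({\bf \Omega}')\right|$, the kernel form of the $\star_\alpha\star_\beta\star_\gamma$ product, and the analogue of Proposition 3.2 — carries over verbatim with $(\hbar,\theta,\eta)$ replaced by $(\alpha,\beta,\gamma)$.

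For the forward direction, assume $f=\overline{b}\star_\alpha\star_\beta\star_\gamma b$ with $\int_{\bkR^4}|b|^2\,dz=1$. Replacing $(\hbar,\theta,\eta)$ by $(\alpha,\beta,\gamma)$ in Definition 3.1 and Proposition 3.2 — legitimate since ${\bf \Omega}'$ is non-degenerate — one obtains (\ref{Eq4.2.25}) for a suitable D map ${\bf S}'$ and a function $g(\xi)\in\F$ whose symplectic Fourier transform $\tilde g^{{\bf J}}$ is continuous, satisfies $\tilde g^{{\bf J}}(0)=1$, and is of $\alpha$-positive type. The computation (\ref{Eq4.2.26}) then gives $\tilde f^{{\bf \Omega}}(a)=\tilde g^{{\bf J}}({\bf J}{\bf S}'^T{\bf \Omega}^{-1}a)$, so $\tilde f^{{\bf \Omega}}(0)=1$ and $\tilde f^{{\bf \Omega}}$ is continuous. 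For the positivity, start from the hermitian non-negative matrices (\ref{Eq4.2.27}) attached to $\tilde g^{{\bf J}}$, perform the invertible substitution $b_i={\bf J}{\bf S}'^T{\bf \Omega}^{-1}a_i$, use ${\bf S}'{\bf J}{\bf S}'^T={\bf \Omega}'$ to reach (\ref{Eq4.2.29}), and apply the identity (\ref{Eq4.2.30}), $\alpha\,{\bf \Omega}^{-1}{\bf \Omega}'{\bf \Omega}^{-1}={\bf \Lambda}(\alpha^\sharp,\beta^\sharp,\gamma^\sharp)$ with $(\alpha^\sharp,\beta^\sharp,\gamma^\sharp)$ as in (\ref{Eq4.2.31}); the resulting matrices are precisely the $N_{jk}$ of (\ref{Eq4.2.5}). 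Since $a\mapsto{\bf J}{\bf S}'^T{\bf \Omega}^{-1}a$ is bijective, ranging over all finite point sets $(a_j)$ is the same as ranging over all $(b_j)$, so $\tilde f^{{\bf \Omega}}$ is of $(\alpha^\sharp,\beta^\sharp,\gamma^\sharp)$-positive type.

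For the converse, reverse each step. Given $(\alpha^\sharp,\beta^\sharp,\gamma^\sharp)$, define $(\alpha,\beta,\gamma)$ by the inverse system (\ref{Eq4.2.32}); by hypothesis $\alpha^2\neq\beta\gamma$, so ${\bf \Omega}'$ is non-degenerate and the deformed formulation is available. Set $g(\xi)=\left|Pf({\bf \Omega}')\right|f({\bf S}'\xi)$, so (\ref{Eq4.2.25}) holds and (\ref{Eq4.2.26}) becomes $\tilde g^{{\bf J}}(c)=\tilde f^{{\bf \Omega}}\!\left(({\bf J}{\bf S}'^T{\bf \Omega}^{-1})^{-1}c\right)$. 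Running the matrix manipulation (\ref{Eq4.2.27})--(\ref{Eq4.2.30}) backwards shows that continuity, $\tilde f^{{\bf \Omega}}(0)=1$, and $(\alpha^\sharp,\beta^\sharp,\gamma^\sharp)$-positivity of $\tilde f^{{\bf \Omega}}$ are equivalent to continuity, $\tilde g^{{\bf J}}(0)=1$, and $\alpha$-positivity of $\tilde g^{{\bf J}}$. Theorem 2.6 (with $\hbar\to\alpha$) then yields $b_0(\xi)\in\F$ with $\int_{\bkR^4}|b_0|^2\,d\xi=1$ and $g=\overline{b_0}\star_\alpha b_0$; transporting this through ${\bf S}'$ and applying the $(\alpha,\beta,\gamma)$-analogue of Proposition 3.2 produces the required $b(z)\in\F$ with $\int_{\bkR^4}|b|^2\,dz=1$ and $f=\overline{b}\star_\alpha\star_\beta\star_\gamma b$.

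I expect no conceptual obstacle; the work is bookkeeping. The points to check with care are that $\alpha\neq 0$ together with $\alpha^2\neq\beta\gamma$ is precisely the non-degeneracy condition on ${\bf \Omega}'$ needed to import the constructions of Section 3, that the linear substitution $a\mapsto{\bf J}{\bf S}'^T{\bf \Omega}^{-1}a$ and the parameter change (\ref{Eq4.2.31}) are invertible with inverse (\ref{Eq4.2.32}), and that Proposition 3.2 really does hold verbatim for the deformed theory. The computational heart — the identity (\ref{Eq4.2.30}) and the explicit formulas (\ref{Eq4.2.31}), (\ref{Eq4.2.32}) — has essentially been recorded already in the lead-up to the statement, so the proof amounts to stitching these pieces together in both directions.
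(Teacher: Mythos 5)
Your proposal is correct and takes essentially the same route as the paper: the paper's proof of this theorem is exactly the lead-up computation (\ref{Eq4.2.20})--(\ref{Eq4.2.32}) --- the deformed formulation built from $\alpha {\bf \Omega}'$ and its D maps ${\bf S}'$, the transfer $\tilde f^{{\bf \Omega}} (a)= \tilde g^{{\bf J}} ({\bf J}{\bf S}'^T {\bf \Omega}^{-1} a)$, the substitution $b_i = {\bf J}{\bf S}'^T {\bf \Omega}^{-1} a_i$ combined with the identity (\ref{Eq4.2.30}), and the invertible parameter change (\ref{Eq4.2.31})/(\ref{Eq4.2.32}) --- which you merely reorganize into explicit forward and converse directions, invoking Theorem 2.6 with $\hbar \to \alpha$ and the $(\alpha,\beta,\gamma)$-analogue of Proposition 3.2 just as the paper does. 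The only cosmetic deviation is your standing assumption $\alpha \ne 0$, which matches the argument actually used and is addressed separately by the paper in Remark 4.13 via the regular combination $\alpha {\bf \Omega}'$.
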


\begin{corollary} If $(\alpha^{\sharp}, \beta^{\sharp}, \gamma^{\sharp} ) \in \W^{NC} (f)$, then $(- \alpha^{\sharp}, - \beta^{\sharp}, - \gamma^{\sharp} ) \in \W^{NC} (f)$.
\end{corollary}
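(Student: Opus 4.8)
The plan is to follow the proof of Corollary 2.7, with the previous theorem playing the role of Theorem 2.6 and with the antisymmetric matrix ${\bf \Lambda}(\alpha,\beta,\gamma)$ playing the role of ${\bf J}$. Assume $(\alpha^{\sharp},\beta^{\sharp},\gamma^{\sharp}) \in \W^{NC}(f)$. As in the commutative case one carries along the normalization $\tilde f^{{\bf \Omega}}(0)=1$ and the continuity of $\tilde f^{{\bf \Omega}}$ (neither of which involves the triple), so that the converse part of the previous theorem furnishes $b(z) \in \F$ with $\int_{\bkR^4}|b(z)|^2\,dz=1$ and $f(z)=\overline{b(z)}\star_{\alpha}\star_{\beta}\star_{\gamma}b(z)$, where $(\alpha,\beta,\gamma)$ is recovered from $(\alpha^{\sharp},\beta^{\sharp},\gamma^{\sharp})$ through the inversion formulae (\ref{Eq4.2.32}), which are available as long as $\alpha^{2}\neq\beta\gamma$.

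Next I would invoke the reflection identity for the component star products. Because ${\bf J}$, ${\bf \Theta}$ and ${\bf N}$ are antisymmetric, each of $\star_{\hbar}$, $\star_{\theta}$, $\star_{\eta}$ satisfies $A\star_{x}B=B\star_{-x}A$, and since the three bidifferential operators in (\ref{Eq3.12})--(\ref{Eq3.14}) commute, the composed product obeys $\overline{b(z)}\star_{\alpha}\star_{\beta}\star_{\gamma}b(z)=b(z)\star_{-\alpha}\star_{-\beta}\star_{-\gamma}\overline{b(z)}$. Setting $c(z):=\overline{b(z)}$, which still satisfies $\int_{\bkR^4}|c(z)|^2\,dz=1$, this reads $f(z)=\overline{c(z)}\star_{-\alpha}\star_{-\beta}\star_{-\gamma}c(z)$, i.e. $f$ has a representation of the form (\ref{Eq4.2.20})--(\ref{Eq4.2.21}) with $(\alpha,\beta,\gamma)$ replaced by $(-\alpha,-\beta,-\gamma)$; the genericity condition $(-\alpha)^{2}\neq(-\beta)(-\gamma)$ is literally $\alpha^{2}\neq\beta\gamma$. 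The direct part of the previous theorem then shows that $\tilde f^{{\bf \Omega}}$ is continuous, $\tilde f^{{\bf \Omega}}(0)=1$, and of the $\bigl((-\alpha)^{\sharp},(-\beta)^{\sharp},(-\gamma)^{\sharp}\bigr)$-positive type.

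It remains to identify $\bigl((-\alpha)^{\sharp},(-\beta)^{\sharp},(-\gamma)^{\sharp}\bigr)$ with $(-\alpha^{\sharp},-\beta^{\sharp},-\gamma^{\sharp})$. This is immediate once one notices that the assignment $(\alpha,\beta,\gamma)\mapsto(\alpha^{\sharp},\beta^{\sharp},\gamma^{\sharp})$ is homogeneous linear: by (\ref{Eq4.2.30}) it is the composition of the linear map $(\alpha,\beta,\gamma)\mapsto\alpha{\bf \Omega}'$ of (\ref{Eq4.2.23}), the fixed conjugation $X\mapsto{\bf \Omega}^{-1}X{\bf \Omega}^{-1}$, and the linear parametrization (\ref{Eq4.2.6}) of ${\bf \Lambda}$-matrices by triples; equivalently, linearity can be read off directly from the explicit formulae (\ref{Eq4.2.31}). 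A homogeneous linear map commutes with sign reversal, so $\bigl((-\alpha)^{\sharp},(-\beta)^{\sharp},(-\gamma)^{\sharp}\bigr)=-(\alpha^{\sharp},\beta^{\sharp},\gamma^{\sharp})$, and therefore $(-\alpha^{\sharp},-\beta^{\sharp},-\gamma^{\sharp})\in\W^{NC}(f)$, as claimed.

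I do not expect a genuine obstacle; the only point requiring care is the degenerate locus $\alpha^{2}=\beta\gamma$, on which the detour through the previous theorem is not available. There one can instead argue directly from Definition 4.12: the matrix ${\bf \Lambda}(\alpha,\beta,\gamma)$ is antisymmetric (because ${\bf E}$ is), so $a_{k}^{T}{\bf \Lambda}a_{j}=-a_{j}^{T}{\bf \Lambda}a_{k}$, and replacing the point set $\{a_{i}\}$ by $\{-a_{i}\}$ in (\ref{Eq4.2.5}) turns the matrix $N_{jk}$ built from $(\alpha^{\sharp},\beta^{\sharp},\gamma^{\sharp})$ into the transpose of the matrix built from $(-\alpha^{\sharp},-\beta^{\sharp},-\gamma^{\sharp})$. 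Since transposition and relabeling of the defining points both preserve hermiticity and non-negativity, $(-\alpha^{\sharp},-\beta^{\sharp},-\gamma^{\sharp})$-positivity follows at once; this second route in fact requires no auxiliary hypotheses at all.
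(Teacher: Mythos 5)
Your main route is essentially the paper's proof: the paper derives the corollary from (i) the reflection identity $a(z)\star_{\alpha}\star_{\beta}\star_{\gamma}b(z)=b(z)\star_{-\alpha}\star_{-\beta}\star_{-\gamma}a(z)$, (ii) Theorem 4.11, and (iii) the observation that the map $(\alpha,\beta,\gamma)\mapsto(\alpha^{\sharp},\beta^{\sharp},\gamma^{\sharp})$ in (\ref{Eq4.2.31}) sends $(-\alpha,-\beta,-\gamma)$ to $(-\alpha^{\sharp},-\beta^{\sharp},-\gamma^{\sharp})$, which is exactly your first two paragraphs. Your closing fallback, however, is a genuinely different and in fact stronger argument, and it deserves emphasis: membership of $(\alpha^{\sharp},\beta^{\sharp},\gamma^{\sharp})$ in $\W^{NC}(f)$ by itself does not supply the continuity and normalization $\tilde f^{{\bf \Omega}}(0)=1$ that the converse part of Theorem 4.11 requires (the paper's proof is equally silent on this), and the detour through Theorem 4.11 also excludes the locus $\alpha^{2}=\beta\gamma$; your direct computation --- replacing the points $\{a_i\}$ by $\{-a_i\}$ and using the antisymmetry of ${\bf \Lambda}$ so that the resulting matrix is the transpose (hence complex conjugate) of the one built from $(-\alpha^{\sharp},-\beta^{\sharp},-\gamma^{\sharp})$, with transposition preserving hermiticity and non-negativity --- proves the corollary in all cases with no auxiliary hypotheses, mirroring how one would prove Corollary 2.7 straight from Definition 2.3. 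Only a small bookkeeping slip: the definition of $(\alpha,\beta,\gamma)$-positivity you invoke is Definition 4.7, not 4.12 (4.12 is the corollary itself).
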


\begin{proof} Again this is a consequence of (i) the fact that $a(z)  \star_{\alpha} \star_{\beta} \star_{\gamma} b(z) = b (z)  \star_{- \alpha} \star_{- \beta} \star_{- \gamma} a(z)$, (ii) Theorem 4.11 and (iii) the fact that under the replacement $(\alpha, \beta, \gamma) \to (- \alpha,- \beta, - \gamma)$ in (\ref{Eq4.2.31}), we get $(\alpha^{\sharp}, \beta^{\sharp}, \gamma^{\sharp}) \to (- \alpha^{\sharp},- \beta^{\sharp}, - \gamma^{\sharp})$.
\end{proof}

\begin{remark} If we set $\beta = \gamma =0$ and $\alpha = \hbar$ in (\ref{Eq4.2.31}), we obtain $(\alpha^{\sharp}, \beta^{\sharp}, \gamma^{\sharp} ) = (1- \zeta)^{-2} ( \hbar (1 + \zeta) , 2 \theta , 2 \eta )$. Consequently, if $f$ is a Wigner measure, then:
\begin{equation}
\frac{( \tilde{\hbar} (1 + \zeta) , 2 \tilde{\theta} ,  2 \tilde{\eta} )}{1 - \zeta} \in \W^{NC} (f).
\label{Eq4.2.33}
\end{equation}
Obviously, from Bochner's theorem, if $(0,0,0) \in \W^{NC} (f)$, then $f$ is everywhere non-negative.

Theorems 4.10 and 4.11, then suggest a way of constructing functions which are simultaneously commutative and noncommutative Wigner measures. Indeed, let $f,g \in \F$ be such that:

\vspace{0.3 cm}
\noindent
(i) $f(z)$ is a NCWM, i.e. (cf. Theorem 4.8 and Corollary 4.12):
\begin{equation}
\left\{(\tilde{\hbar} , \tilde{\theta}, \tilde{\eta} ), (- \tilde{\hbar} , - \tilde{\theta}, - \tilde{\eta} ) \right\} \subseteq \W^{NC} (f)
\label{Eq4.2.34}
\end{equation}

\vspace{0.3 cm}
\noindent
(ii) $g(z)$ is such that:
\begin{equation}
\left\{(0,0,0), (\alpha^{\sharp} , \beta^{\sharp}, \gamma^{\sharp} ), (- \alpha^{\sharp} , - \beta^{\sharp}, - \gamma^{\sharp} ) \right\} \subseteq \W^{NC} (g)
\label{Eq4.2.35}
\end{equation}
where $\alpha^{\sharp} , \beta^{\sharp}, \gamma^{\sharp} $ are given by:
\begin{equation}
(\alpha^{\sharp} , \beta^{\sharp}, \gamma^{\sharp} )= \frac{(2 \tilde{\hbar} \zeta , (1 + \zeta) \tilde{\theta} , (1 + \zeta) \tilde{\eta} )}{1 - \zeta}.
\label{Eq4.2.36}
\end{equation}
From Theorem 4.10, we conclude that the NCNW spectrum of the convolution $f \natural g$ contains the elements:
\begin{equation}
(\tilde{\hbar} , \tilde{\theta}, \tilde{\eta} ) + (\alpha^{\sharp} , \beta^{\sharp}, \gamma^{\sharp} ) = \frac{( \tilde{\hbar} (1 + \zeta) , 2 \tilde{\theta} ,  2 \tilde{\eta} )}{1 - \zeta} , \hspace{1 cm} (\tilde{\hbar} , \tilde{\theta}, \tilde{\eta} ) + (0,0,0) = (\tilde{\hbar} , \tilde{\theta}, \tilde{\eta} )
\label{Eq4.2.37}
\end{equation}
The first element means that the convolution is a Wigner measure (\ref{Eq4.2.33}), whereas the second one entails that it is equally a NCWM (\ref{Eq4.2.34}). Functions $g$ satisfying (\ref{Eq4.2.35}) are easy to construct. Indeed from (\ref{Eq4.2.20},\ref{Eq4.2.31}), we conclude that any function of the form
\begin{equation}
g(z) = \overline{b(z)} \star_{\theta} \star_{\eta} b(z), \hspace{1 cm} b(z) \in \F
\label{Eq4.2.38}
\end{equation}
contains $\pm (\alpha^{\sharp} , \beta^{\sharp}, \gamma^{\sharp} )$ (\ref{Eq4.2.36}) in its NCNW spectrum. Moreover, it is easy to check that if $b(z)$ is a Gaussian, then $g(z)$ in (\ref{Eq4.2.38}) is positive, as it is another Gaussian. Notice that we can safely set $\alpha=0$. Indeed it always appears in all the formulae in the combination $\alpha {\bf \Omega}'$ (\ref{Eq4.2.23}) which is regular as $\alpha \downarrow 0$.
\end{remark}

\end{subsection}

\begin{subsection}{Constructing functions in $\F^C$, $\F^{NC}$ and $\L$}

Our purpose now is to investigate how the sets of Wigner measures ($\F^C$), noncommutative Wigner measures ($\F^{NC}$) and Liouville measures ($\L$) relate to each other. The latter is the set of real, normalized phase-space functions, which are everywhere non-negative, i.e. functions whose symplectic Fourier transform is of $0$-positive type or whose noncommutative symplectic Fourier transform is of $(0,0,0)$-positive type. Let us then define the sets:
\begin{equation}
\begin{array}{l l l}
\Omega_1 = \F^C \backslash ( \F^{NC} \cup \L ), & \Omega_2 = \F^{NC} \backslash ( \F^C \cup \L ) , & \Omega_3 = \L \backslash (\F^C \cup \F^{NC} ), \\
& & \\
\Omega_4 = (\F^C \cap  \F^{NC}) \backslash \L,  & \Omega_5 = (\F^C \cap \L ) \backslash \F^{NC} , & \Omega_6 = (\F^{NC} \cap \L ) \backslash \F^C,  \\
& & \\
\Omega_7 =  \F^C \cap \F^{NC} \cap \L.
\end{array}
 \label{Eq4.3.1}
\end{equation}
The remainder of this section is devoted to proving Lemma 4.14. We depicted the content of the lemma in Figure 1.

\begin{figure}
\begin{center}
\includegraphics[scale=0.5]{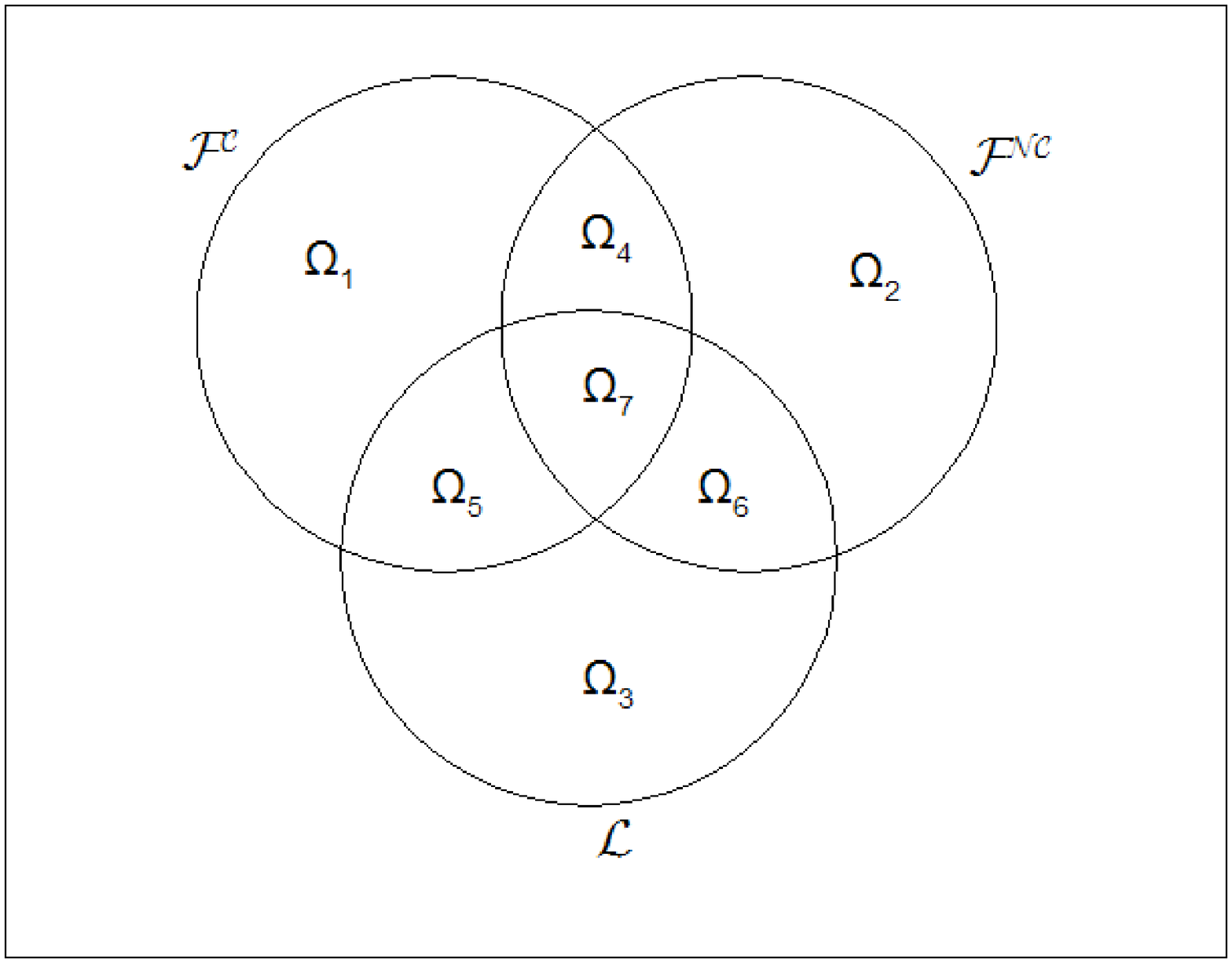}
\caption{Different sets of functions and their intersection.}
\end{center}
\label{diagrama}
\end{figure}

\begin{lemma} The sets $\Omega_i$ $(i=1, \cdots, 7)$ are all non-empty.
\end{lemma}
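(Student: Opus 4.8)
The plan is to exhibit, for each of the seven regions $\Omega_i$, an explicit function lying in it, relying throughout on the characterizations already established: Theorem 2.1 (Hudson) and Lemma 2.9 for the commutative side, Theorem 3.7 (Hudson for NCWM) and Lemma 3.8 for the noncommutative side, and Bochner's theorem for membership in $\L$. Since Gaussians are the most tractable objects — they are simultaneously amenable to the matrix criteria of Lemmata 2.9 and 3.8 and to the non-negativity criterion — I would try to realize as many of the seven regions as possible using centered Gaussians of the form \eqref{Eq2.26}/\eqref{Eq3.30}, and fall back on convolution constructions (Theorems 2.8 and 4.10, together with the explicit positive-kernel $g$ in Remark 4.13) or the tensor-product-like states of Lemma 4.4 only where Gaussians cannot reach.

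\textbf{The Gaussian bookkeeping.} For a centered Gaussian with covariance $\tfrac12{\bf C}^{-1}$, membership in $\F^C$ is equivalent to ${\bf C}^{-1}+i\hbar{\bf J}\ge 0$, membership in $\F^{NC}$ to ${\bf C}^{-1}+i\hbar{\bf \Omega}\ge 0$, and membership in $\L$ is automatic (a Gaussian is everywhere positive), so every Gaussian lies in $\L$. Hence Gaussians can only populate $\Omega_3,\Omega_5,\Omega_6,\Omega_7$, never $\Omega_1,\Omega_2,\Omega_4$. For $\Omega_7$ take ${\bf C}^{-1}$ a large multiple of the identity, so both ${\bf C}^{-1}+i\hbar{\bf J}$ and ${\bf C}^{-1}+i\hbar{\bf \Omega}$ are diagonally dominant and positive. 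For $\Omega_5$ (Wigner but not NCWM) I want ${\bf C}^{-1}+i\hbar{\bf J}\ge 0$ while ${\bf C}^{-1}+i\hbar{\bf \Omega}\not\ge 0$: since ${\bf \Omega}$ has the extra $\tfrac{\theta}{\hbar}{\bf E}$ and $\tfrac{\eta}{\hbar}{\bf E}$ blocks, choosing ${\bf C}^{-1}$ saturating the Heisenberg bound in the position--momentum pairing but with a small spread makes the $\theta,\eta$-corrections push an eigenvalue negative — one checks the $2\times2$ principal minor in the $(q_1,q_2)$ block, which for ${\bf \Omega}$ picks up the off-diagonal $i\theta$ term absent for ${\bf J}$. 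For $\Omega_6$ (NCWM but not Wigner) I reverse the roles, using the freedom that ${\bf \Omega}$-positivity is a genuinely different constraint; a clean choice is to take a Gaussian that is a pure-state NCWM via Theorem 3.11 (so ${\bf C}^{-1}={\bf B}^T{\bf B}$ with ${\bf B}\in\D_{\Omega}$) but whose covariance violates the ordinary Robertson--Schrödinger inequality — here $\det{\bf S}=1-\zeta<1$ is exactly what makes such a Gaussian ``too squeezed'' for the commutative theory. For $\Omega_3$ (Liouville but neither) take ${\bf C}^{-1}$ small (a sharply peaked Gaussian), so both $i\hbar{\bf J}$ and $i\hbar{\bf \Omega}$ corrections dominate and both matrices fail to be non-negative, while positivity of the Gaussian gives membership in $\L$.

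\textbf{The three remaining regions} $\Omega_1,\Omega_2,\Omega_4$ require non-positive functions. For $\Omega_4$ (both Wigner and NCWM, not Liouville) I would take a non-Gaussian pure state $\psi$ — say the first excited oscillator state — whose ordinary Wigner function $f^C$ is a known non-negative-violating function, and then \emph{convolve} it with the positive Gaussian $g$ of Remark 4.13 chosen (via \eqref{Eq4.2.36}) so that the convolution lands in $\F^C\cap\F^{NC}$; one must check the convolution still takes a negative value somewhere, which holds for sufficiently mild smoothing. For $\Omega_1$ (Wigner, not NCWM, not Liouville): a non-Gaussian pure-state Wigner measure $f^C$ is automatically not in $\L$ by Hudson, and I must argue it is not a NCWM — this uses Proposition 3.3 (the NCWM purity bound $1/((2\pi\hbar)^d|Pf({\bf \Omega})|)$), since $|Pf({\bf \Omega})|=1-\zeta<1$ makes the NCWM bound \emph{stricter} than the commutative pure-state purity $1/(2\pi\hbar)^d$, so any pure-state Wigner measure violates it; for $d=2$ this is precisely \eqref{Eq4.3} versus \eqref{Eq2.15}. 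Symmetrically, $\Omega_2$ (NCWM, not Wigner, not Liouville) is handled by Lemma 4.5: the states \eqref{Eq4.1.24} are NCWMs, are shown there not to be Wigner measures (their purity $1/(2\pi\hbar\sqrt{1-\zeta})^2$ exceeds $1/(2\pi\hbar)^2$), and are not everywhere non-negative whenever $\phi_1$ or $\phi_2$ is non-Gaussian by the Hudson-type remark following Lemma 4.5.

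\textbf{The main obstacle} I anticipate is $\Omega_4$: producing a function that is simultaneously a genuine (not merely formal) commutative \emph{and} noncommutative Wigner measure yet somewhere negative. The convolution machinery of Theorems 2.8 and 4.10 makes membership in both sets easy to certify by additivity of NW-spectra, but one must ensure the smoothing kernel $g$ is weak enough that negativity survives — this needs a quantitative estimate, e.g. choosing $g$ a Gaussian whose covariance is a small parameter $\epsilon$ times the identity and letting $\epsilon\to 0$, since $f^C\natural g\to f^C$ pointwise and $f^C$ is strictly negative on an open set. The other cases reduce to elementary eigenvalue inequalities for $2d\times2d$ Hermitian matrices or to direct appeals to Lemmata 4.4 and 4.5 and Proposition 3.3, which are routine once the right explicit examples are written down.
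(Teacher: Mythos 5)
Most of your regions are handled correctly, and in places ($\Omega_3$, $\Omega_5$, $\Omega_7$ via the Gaussian matrix criteria of Lemmata 2.9 and 3.8) more simply than the paper, which instead argues through purity bounds and a convolution for $\Omega_7$. But there are two genuine gaps. First, your argument for $\Omega_1$ is backwards: the NCWM purity bound of Proposition 3.4 is $\int [f^{NC}]^2\,dz\le \frac{1}{(2\pi\hbar)^d\,|Pf({\bf \Omega})|}$ with $|Pf({\bf \Omega})|=1-\zeta<1$, so this bound equals $\frac{1}{(2\pi\hbar)^d(1-\zeta)}$ and is \emph{weaker}, not stricter, than the commutative pure-state value $\frac{1}{(2\pi\hbar)^d}$. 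A pure-state Wigner measure therefore does \emph{not} violate the NCWM purity bound, and your proposed proof that a non-Gaussian pure state lies outside $\F^{NC}$ collapses. The correct instrument is the marginal ($\theta$-purity) bound of Theorem 4.3: the paper's $f_1$ is the Wigner measure of an excited state whose position marginal is concentrated on a scale $a<\theta$, so that $\int[\P_q]^2\,dq=\frac{1}{2\pi a}>\frac{1}{2\pi\theta}$, which is what excludes membership in $\F^{NC}$. (The same repair would also make your $\Omega_6$ sketch rigorous in the opposite direction: a pure-state NCWM has purity $\frac{1}{(2\pi\hbar)^2(1-\zeta)}>\frac{1}{(2\pi\hbar)^2}$ and so cannot be a Wigner measure; "too squeezed" by itself is not an argument.)

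Second, for $\Omega_4$ your limiting argument is not available. To certify via Remark 4.13 that the convolution lies in $\F^C\cap\F^{NC}$, the smoothing kernel must be of the form $g=\overline{b}\star_{\theta}\star_{\eta}b$, and for Gaussian $b$ the resulting widths obey $c\ge\theta$, $d\ge\eta$ (cf.\ (\ref{Eq4.3.13})); you cannot shrink the covariance of $g$ to $\epsilon\,{\bf I}$ with $\epsilon\to 0$ and remain in the class whose NCNW spectrum contains $(0,0,0)$ and $\pm(\alpha^{\sharp},\beta^{\sharp},\gamma^{\sharp})$. Hence "negativity survives for sufficiently mild smoothing" is precisely the point that must be proved at a \emph{fixed}, $\theta,\eta$-sized kernel, and smoothing at such a scale can in principle wash out the negative region (as Husimi-type smoothing does in the commutative case). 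The paper resolves this by an explicit computation: it convolves the $\Omega_2$ state $f_2$ (with $a=\frac{1}{2\theta}$) with the minimal kernel $c=\theta$, $d=\eta$, obtains the closed-form $f_4$ in (\ref{Eq4.3.16}), and exhibits the explicit ellipse on which it is negative. Your proposal needs an analogous quantitative verification (for your oscillator state convolved with an admissible $g$) before $\Omega_4\neq\emptyset$ can be claimed.
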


\begin{proof} To prove the lemma we shall construct explicitly families of functions in each of the sets, by resorting to the properties of functions in $\F^C$, $\F^{NC}$ and $\L$. Let us start with the simplest case:

\vspace{0.3 cm}
\noindent
{\bf A function in $\Omega_3$:} The function
\begin{equation}
f_3 (q,p) = \frac{1}{\pi^2 ab} \exp \left( - \frac{q^2}{a} - \frac{p^2}{b} \right), \hspace{1 cm} a, b >0, ~ab < \hbar^2 (1 - \zeta)
\label{Eq4.3.2}
\end{equation}
belongs to $\Omega_3$.

\vspace{0.3 cm}
\noindent
It is obvious that $f_3 \in \L$, since it is real, normalized and everywhere positive. To prove that it does not belong to $\F^C \cup \F^{NC}$, let us compute its purity:
\begin{equation}
\int_{\bkR^2} \int_{\bkR^2} \left[f_3 (q,p) \right]^2 ~dq dp= \frac{1}{(2 \pi)^2 ab} > \frac{1}{(2 \pi \hbar \sqrt{1- \zeta})^2 } > \frac{1}{(2 \pi \hbar)^2 }.
\label{Eq4.3.3}
\end{equation}
Consequently, from (\ref{Eq2.15},\ref{Eq4.3}), we conclude that $f_3 \in \Omega_3$.

\vspace{0.3 cm}
\noindent
{\bf A function in $\Omega_5$:} The function
\begin{equation}
f_5 (q,p) = \frac{1}{(\pi \hbar)^2} \exp \left( - \frac{q^2}{a} - \frac{ap^2}{\hbar^2} \right), \hspace{1 cm} 0 <a < \theta
\label{Eq4.3.4}
\end{equation}
belongs to $\Omega_5$.

\vspace{0.3 cm}
\noindent
Let us consider the normalized wave function $\psi_5 (q) = \frac{1}{\sqrt{\pi a}} \exp \left(- \frac{q^2}{2a} \right)$. A simple calculation shows that $f_5$ is the Wigner measure associated with $\psi_5$. By construction, we conclude that $f_5 \in \F^C$. Moreover $f_5 $ is positive, which means that $f_5 \in \L$. It remains to prove that $f_5 \notin \F^{NC}$. Let us compute its $\theta$-purity (\ref{Eq4.1.22}). The marginal distribution reads
\begin{equation}
\P_q (q) = \int_{\bkR^2} f_5 (q,p) ~dp= \frac{1}{\pi a} \exp \left(- \frac{q^2}{a} \right).
\label{Eq4.3.5}
\end{equation}
And thus:
\begin{equation}
\int_{\bkR^2} \left[\P_q (q) \right]^2 ~dq= \frac{1}{2 \pi a} >   \frac{1}{2 \pi \theta}.
\label{Eq4.3.6}
\end{equation}
We conclude that $f_5 \notin \F^{NC}$.

\vspace{0.3 cm}
\noindent
{\bf A function in $\Omega_1$:} The function
\begin{equation}
f_1 (q,p) = \frac{8}{3a(\pi \hbar)^2}\left( q_1^2 + \frac{9 a^2 p_1^2}{16 \hbar^2} - \frac{3a}{8} \right) \exp \left( - \frac{4q^2}{3a} - \frac{3ap^2}{4\hbar^2} \right), \hspace{1 cm} 0 <a < \theta
\label{Eq4.3.7}
\end{equation}
belongs to $\Omega_1$.

\vspace{0.3 cm}
\noindent
Let us consider the normalized wave function $\psi_1 (q) = \frac{4}{3a}\sqrt{\frac{2}{\pi}} q_1  \exp \left(- \frac{2q^2}{3a} \right)$. The corresponding Wigner measure is $f_1 (q,p)$. However, $f_1$ is negative inside the ellipse $q_1^2 + \left( \frac{3 a p_1}{4 \hbar} \right)^2 < \frac{3 a}{8}$. And thus $f_1 \in \F^C \backslash \L$. Finally, let us prove that $f_1 \notin \F^{NC}$. Integration over the momenta yields:
\begin{equation}
\P_q (q) = \int_{\bkR^2}  f_1 (q,p) ~dp= \frac{32}{9 \pi a^2} q_1^2 \exp \left(- \frac{4q^2}{3a} \right).
\label{Eq4.3.8}
\end{equation}
And thus:
\begin{equation}
\int_{\bkR^2} \left[\P_q (q) \right]^2 ~dq= \frac{1}{2 \pi a} >   \frac{1}{2 \pi \theta},
\label{Eq4.3.9}
\end{equation}
which means that $f_1 \notin \F^{NC}$.

\vspace{0.3 cm}
\noindent
{\bf A function in $\Omega_6$:} The function
\begin{equation}
\begin{array}{c}
f_6 (q,p) = \frac{1}{(\pi \hbar \sqrt{1- \zeta})^2}  \exp \left\{ - \left(\frac{2 -\zeta}{1- \zeta} \right) \left(2 a q_1^2 + \frac{q_2^2}{2a \theta^2}\right) \right.\\
\\
 \left. - \left( \frac{\theta}{\hbar \sqrt{1 - \zeta}} \right)^2 \left( \frac{p_1^2}{2 a\theta^2} + 2 a p_2^2 \right) - \frac{2 \theta}{\hbar (1 - \zeta)} \left( 2 a q_1 p_2 - \frac{q_2 p_1}{2 a \theta^2} \right) \right\} , \hspace{1 cm} a >0
\label{Eq4.3.10}
\end{array}
\end{equation}
belongs to $\Omega_6$.

\vspace{0.3 cm}
\noindent
It is easy to check that $f_6$ is a NCWM of the form (\ref{Eq4.1.24}) with $\phi_1 (q_1)= \phi_2 (q_1) = \left( \frac{2a}{\pi} \right)^{\frac{1}{4}} \exp( - a q_1^2)$. Moreover, we already know from Lemma 4.5 that states of the form (\ref{Eq4.1.24}) cannot be Wigner measures. Finally, since $f_6$ is everywhere positive, we conclude that $f_6 \in \Omega_6$.

\vspace{0.3 cm}
\noindent
{\bf A function in $\Omega_2$:} The function
\begin{equation}
\begin{array}{c}
f_2 (q,p) = \frac{4a}{(\pi \hbar \sqrt{1- \zeta})^2} \left(q_1^2 + \frac{q_2^2}{4 a^2 \theta^2} - \frac{1}{4 a} \right)  \exp \left\{ - \left(\frac{2 -\zeta}{1- \zeta} \right) \left(2 a q_1^2 + \frac{q_2^2}{2a \theta^2}\right) \right.\\
\\
 \left. - \left( \frac{\theta}{\hbar \sqrt{1 - \zeta}} \right)^2 \left( \frac{p_1^2}{2 a\theta^2} + 2 a p_2^2 \right) - \frac{2 \theta}{\hbar (1 - \zeta)} \left( 2 a q_1 p_2 - \frac{q_2 p_1}{2 a \theta^2} \right) \right\} , \hspace{1 cm} a >0
\label{Eq4.3.11}
\end{array}
\end{equation}
belongs to $\Omega_2$.

\vspace{0.3 cm}
\noindent
If we choose $\phi_1 (q_1)=  \left( \frac{2a}{\pi} \right)^{\frac{1}{4}} \exp( - a q_1^2)$ and $\phi_2 (q_1) = \left( \frac{32 a^3}{\pi} \right)^{\frac{1}{4}} q_1 \exp( - a q_1^2)$ and substitute in (\ref{Eq4.1.24}), we obtain $f_2$. From Lemma 4.5 we conclude that $f_2 \in \F^{NC} \backslash \F^C$. Since $f_2$ is negative inside the ellipse $q_1^2 + \left( \frac{q_2}{2 a \theta} \right)^2 < \frac{1}{4 a}$, we conclude that $f_2 \notin \L$.

\vspace{0.3 cm}
\noindent
{\bf A function in $\Omega_7$:} Any function of the form $f_7 (z) = (f_6 \natural g) (z)$ with $f_6$ an element of $\Omega_6$ and
\begin{equation}
\begin{array}{c}
g (q,p) = \frac{1}{\pi^2 cd} \exp \left\{ - \frac{q^2}{c}  - \frac{p^2}{d} \right), \hspace{1 cm} c \ge \theta, ~ d \ge \eta
\label{Eq4.3.12}
\end{array}
\end{equation}
belongs to $\Omega_7$.

\vspace{0.3 cm}
\noindent
Let us choose $\alpha, \beta >0$ such that:
\begin{equation}
c = \frac{1 + \alpha^2 \theta^2}{2 \alpha} , \hspace{1 cm} d = \frac{1 + \beta^2 \eta^2}{2 \beta}.
\label{Eq4.3.13}
\end{equation}
With this choice, $c,d$ automatically satisfy $c \ge \theta$ and $d \ge \eta$. Moreover, we define:
\begin{equation}
b (q,p) = \frac{2}{\pi} \sqrt{\alpha \beta} \exp \left( - \alpha q^2 - \beta p^2 \right).
\label{Eq4.3.14}
\end{equation}
Using the kernel representations (\ref{Eq3.15}) of the star-products it is straightforward to show that:
\begin{equation}
g (z) = b(z) \star_{\theta} \star_{\eta} b(z).
\label{Eq4.3.15}
\end{equation}
From our discussion in Remark 4.13, we know that under these circumstances, the convolution of $f_6$ and $g$ is simultaneously a Wigner measure and a NCWM. Finally, since $g(z)$ is positive, its convolution with another positive function is again positive.

\vspace{0.3 cm}
\noindent
{\bf A function in $\Omega_4$:} The function
\begin{equation}
f_4 (q,p) = \frac{1}{3(\pi \hbar)^2} \left[ \frac{2}{3 \theta} \left( q - \frac{\theta}{\hbar} {\bf E} p \right)^2 -1 \right] \exp \left( - \frac{2 q^2}{3 \theta} - \frac{2 \theta p^2}{3 \hbar^2} - \frac{2 }{3 \hbar} q \cdot {\bf E} p \right),
\label{Eq4.3.16}
\end{equation}
is a function of $\Omega_4$.

\vspace{0.3 cm}
\noindent
Let us choose $g(z)$ of the form (\ref{Eq4.3.12}) with $c= \theta$ and $d= \eta$. Moreover, let us consider the function $f_2 (z)$ in (\ref{Eq4.3.11}) with $a= \frac{1}{2 \theta}$. Then it is easy to show that $f_4 (z) = (g \natural f_2) (z)$. From Remark 4.13, we conclude that $f_4 \in \F^C \cup \F^{NC}$. However, $f_4$ is negative for $\left(q- \frac{\theta}{\hbar} {\bf E} p \right)^2 < \frac{3 \theta}{2}$, which means that $f_4 \notin \L$. This completes the proof of the lemma.
\end{proof}

\begin{remark} The function $f_6$ in (\ref{Eq4.3.10}) reveals that functions of the form (\ref{Eq4.1.24}) saturate the $\theta$-purity but not the $\eta$-purity. Indeed by a simple calculation, we obtain:
\begin{equation}
\P_p (p) = \int_{\bkR^2}  f_6 (q,p) ~dq= \frac{\theta}{\pi \hbar^2 (2 - \zeta)} \exp \left( - \frac{1}{2 a \hbar^2 (2 - \zeta)} ( p_1^2 + 4 a^2 \theta^2 p_2^2 ) \right).
\label{Eq4.3.17}
\end{equation}
Consequently
\begin{equation}
\int_{\bkR^2}  \left[\P_p (p) \right]^2 ~dp= \frac{\zeta}{2\pi \eta (2 - \zeta)}.
\label{Eq4.3.18}
\end{equation}
Since $\zeta <1$, we conclude that this is strictly smaller than $\frac{1}{2\pi \eta }$. By the same token, we can show that states of the form (\ref{Eq4.1.25}), albeit maximizing the $\eta$-purity, need not saturate the $\theta$-purity.
\end{remark}

\end{subsection}

\end{section}

\begin{section}{Appendix}

In this appendix we prove that for a D transformation, the associated matrix ${\bf S} \in \D_{{\bf \Omega}} (2d; \bkR)$ satisfies:
\begin{equation}
\det {\bf S} = \sqrt{\det {\bf \Omega}} = | Pf ({\bf \Omega})| >0.
\label{Eqa.1}
\end{equation}
To prove this, we first derive the following Lemma:
\begin{lemma} Under the assumption (\ref{Eq3.2}), the sign of the Pfaffian of the matrix ${\bf \Omega}$ reads:
\begin{equation}
sign \left(Pf ({\bf \Omega}) \right) = (-1)^{d(d-1)/2}.
\label{Eqa.2}
\end{equation}
\end{lemma}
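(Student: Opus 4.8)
The plan is a homotopy argument: connect ${\bf \Omega}$ to the canonical symplectic matrix ${\bf J}$ of (\ref{Eq2.2}) through a path of nondegenerate skew-symmetric matrices of the same block type, and use that the Pfaffian, being a nonvanishing polynomial along such a path, keeps a constant sign. First I would record the elementary identity $Pf({\bf J})=(-1)^{d(d-1)/2}$, the case ${\bf B}={\bf I}_{d\times d}$ of the standard formula $Pf\left(\begin{array}{cc}{\bf 0}_{d\times d}&{\bf B}\\-{\bf B}^T&{\bf 0}_{d\times d}\end{array}\right)=(-1)^{d(d-1)/2}\det{\bf B}$. Then introduce
\begin{equation}
{\bf \Omega}_t=\hbar^{-1}\left(\begin{array}{cc}t{\bf \Theta}&\hbar{\bf I}_{d\times d}\\-\hbar{\bf I}_{d\times d}&t{\bf N}\end{array}\right),\qquad t\in[0,1],
\label{Eqa.fam}
\end{equation}
so that ${\bf \Omega}_0={\bf J}$ and ${\bf \Omega}_1={\bf \Omega}$. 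Each ${\bf \Omega}_t$ is skew-symmetric with invertible off-diagonal blocks, and the bound (\ref{Eq3.2}) is inherited by the rescaled parameters, since $(t\theta_{ij})(t\eta_{kl})=t^2\theta_{ij}\eta_{kl}\le\theta_{ij}\eta_{kl}<\hbar^2$ when $\theta_{ij}\eta_{kl}\ge0$ (as $0\le t^2\le1$), and $\le0<\hbar^2$ otherwise.

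The heart of the proof is to show that ${\bf \Omega}_t$ is nondegenerate for every $t\in[0,1]$. Adding $t\hbar^{-1}{\bf \Theta}$ times the lower block row of (\ref{Eqa.fam}) to the upper one clears the $(1,1)$ block, and a cofactor expansion gives $\det{\bf \Omega}_t=\det({\bf I}_{d\times d}+t^2\hbar^{-2}{\bf \Theta}{\bf N})$. It then remains to check that (\ref{Eq3.2}) forces this determinant to be nonzero throughout $[0,1]$, equivalently that no eigenvalue of the product ${\bf \Theta}{\bf N}$ of the two real skew matrices lies in $(-\infty,-\hbar^2]$. For $d=2$ this is transparent: ${\bf \Theta}{\bf N}=\theta\eta\,{\bf E}^2=-\theta\eta\,{\bf I}_{2\times2}$ and (\ref{Eq3.2}) amounts to $\zeta=\theta\eta/\hbar^2<1$, so $\det{\bf \Omega}_t=(1-t^2\zeta)^2>0$. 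For general $d$ one would bring ${\bf \Theta}$ to its Darboux (skew-normal) form and estimate the eigenvalues of ${\bf \Theta}{\bf N}$ in terms of the noncommutativity parameters, reformulating (\ref{Eq3.2}), if necessary, through the symplectic eigenvalues of ${\bf \Theta}$ and ${\bf N}$. This eigenvalue estimate is the step I expect to demand the most care.

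Granting nondegeneracy along the whole path, $t\mapsto Pf({\bf \Omega}_t)$ is a polynomial that never vanishes on $[0,1]$, hence has constant sign there; evaluating at $t=0$ gives $sign\big(Pf({\bf \Omega})\big)=sign\big(Pf({\bf J})\big)=(-1)^{d(d-1)/2}$, which is (\ref{Eqa.2}). As a byproduct ${\bf \Omega}$ is nondegenerate, so some ${\bf S}\in\D_{\Omega}(2d;\bkR)$ exists with ${\bf \Omega}={\bf S}{\bf J}{\bf S}^T$ and hence $Pf({\bf \Omega})=\det({\bf S})\,Pf({\bf J})$; together with the sign just found this yields $\det{\bf S}=|Pf({\bf \Omega})|=\sqrt{\det{\bf \Omega}}>0$, which is (\ref{Eqa.1}).
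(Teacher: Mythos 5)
Your overall strategy is sound in its outline: $Pf({\bf J})=(-1)^{d(d-1)/2}$ is correct, the row reduction giving $\det {\bf \Omega}_t=\det\left({\bf I}_{d\times d}+t^2\hbar^{-2}{\bf \Theta}{\bf N}\right)$ is correct, and \emph{if} the family ${\bf \Omega}_t$ were nondegenerate on all of $[0,1]$ the constant-sign argument would indeed deliver (\ref{Eqa.2}) and, as a byproduct, (\ref{Eqa.1}). But the step you defer --- that (\ref{Eq3.2}) forbids eigenvalues of ${\bf \Theta}{\bf N}$ in $(-\infty,-\hbar^2]$ --- is not a technical estimate to be supplied later: it is the entire content of the lemma, and for $d\ge 3$ it cannot be extracted from (\ref{Eq3.2}), because that hypothesis constrains only the entrywise products $\theta_{ij}\eta_{kl}$ and not the spectrum of ${\bf \Theta}{\bf N}$. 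Concretely, for $d=3$ take $\theta_{ij}=a$ and $\eta_{ij}=b$ for all $i<j$, i.e. ${\bf \Theta}=a{\bf K}$ and ${\bf N}=b{\bf K}$ with ${\bf K}$ the $3\times 3$ skew-symmetric matrix whose entries above the diagonal are all $1$. Then ${\bf \Theta}{\bf N}=ab\,{\bf K}^2$ has eigenvalues $\{0,-3ab,-3ab\}$, and a direct evaluation of the Pfaffian (only four perfect matchings contribute) gives
\begin{equation*}
Pf({\bf \Omega})=-1+\frac{\theta_{12}\eta_{12}+\theta_{13}\eta_{13}+\theta_{23}\eta_{23}}{\hbar^2}=-1+\frac{3ab}{\hbar^2},
\end{equation*}
consistent with $\det{\bf \Omega}=\left(1-3ab/\hbar^2\right)^2$. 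For $\hbar^2/3<ab<\hbar^2$ every product $\theta_{ij}\eta_{kl}=ab$ satisfies (\ref{Eq3.2}), yet $Pf({\bf \Omega})>0\neq(-1)^{d(d-1)/2}$, your path degenerates at $t=\hbar/\sqrt{3ab}\in(0,1)$, and at $ab=\hbar^2/3$ the form $\omega$ is even degenerate. So no choice of path (nor any other argument) can close the gap in the stated generality; what your homotopy actually needs is a spectral hypothesis such as $\Vert{\bf \Theta}\Vert\,\Vert{\bf N}\Vert<\hbar^2$ in operator norm, under which every eigenvalue of $\hbar^{-2}{\bf \Theta}{\bf N}$ has modulus smaller than one and your argument then goes through verbatim.

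For $d=2$ your proof is complete and correct, and it is a genuinely different route from the paper's (which expands $Pf({\bf \Omega})$ into a $({\bf \Theta},{\bf N})$-independent term plus a polynomial in the ratios $\theta_{ij}\eta_{kl}/\hbar^2$ and estimates the latter); in that case ${\bf \Theta}{\bf N}=-\theta\eta\,{\bf I}_{2\times 2}$, the needed spectral bound is exactly $\zeta<1$, and this is the only case used in Section 4 (cf. (\ref{Eq4.2})). As a proof of the lemma for general $d$, however, the proposal is incomplete at precisely the step you flagged, and your own reduction shows that the missing eigenvalue estimate cannot follow from (\ref{Eq3.2}) alone.
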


\begin{proof}
Let $(\omega_{\alpha \beta} )$ $(\alpha,\beta=1, \cdots, 2d)$ denote the elements of ${\bf \Omega}$. The Pfaffian of ${\bf \Omega}$ can be obtained from the following recursive formula \cite{Pfaffian}:
\begin{equation}
Pf ( {\bf \Omega} ) = \sum_{\alpha=2}^{2d} (-1)^{\alpha} \omega_{1, \alpha} Pf   ( {\bf \Omega}_{\hat 1, \hat{\alpha}} ),
\label{Eqa.3}
\end{equation}
where ${\bf \Omega}_{\hat 1, \hat{\alpha}}$ denotes the matrix ${\bf \Omega}$ with both the 1st and $\alpha$-th rows and columns removed. From (\ref{Eq3.1}), we get:
\begin{equation}
Pf ( {\bf \Omega} ) = \sum_{i=2}^d (-1)^i \frac{\theta_{1i}}{\hbar} Pf   ( {\bf \Omega}_{\hat 1 ,\hat i} ) + (-1)^{d+1} Pf   ( {\bf \Omega}_{\hat 1 ,\hat{d+1}} )  . \label{Eqa.4}
\end{equation}
A term which is independent of the elements of ${\bf \Theta}$ and ${\bf N}$ can only be found in $(-1)^{d+1} Pf   ( {\bf \Omega}_{\hat 1, \hat{d+1}} )$. Suppose that $d \ge 3$. If we apply the recursive formula (\ref{Eqa.3}) again we obtain a term of the form $(-1)^{d+1} (-1)^d Pf ({\bf A}_2 )$ where ${\bf A}_2$ is obtained from ${\bf \Omega}$ by removing the 1st, 2nd, (d+1)th and (d+2)th rows and columns. After i steps we obtain a term  $(-1)^{d+1} (-1)^d \cdots (-1)^{d+2-i}  Pf ({\bf A}_i )$ where ${\bf A}_i$ is obtained from ${\bf \Omega}$ by removing the 1st, 2nd,..., ith, and (d+1)th, (d+2)th,..., (d+i)th rows and columns. We terminate this process when $i=d-2$. We thus obtain:
\begin{equation}
\begin{array}{c}
(-1)^{d+1} (-1)^d \cdots (-1)^4  Pf \left(
\begin{array}{c c c c}
0 & \frac{\theta_{d-1,d}}{\hbar} & 1 & 0\\
\frac{\theta_{d,d-1}}{\hbar} & 0 & 0 & 1\\
-1 & 0 & 0 & \frac{\eta_{d-1,d}}{\hbar}\\
0 & -1 & \frac{\eta_{d,d-1}}{\hbar} & 0
\end{array}
\right) = \\
\\
= \left(\frac{\theta_{d-1,d} \eta_{d-1,d}}{\hbar^2} -1 \right) (-1)^{\sum_{i=4}^{d+1} i}.
\end{array}
\label{Eqa.5}
\end{equation}
And thus the term independent of the elements of ${\bf \Theta}$ and ${\bf N}$ is $(-1)^{d(d-1) /2}$. We leave to the reader the simple task of verifying that this result also holds when $d=2$.

Let us now turn to the $\theta$ and $\eta$ dependent terms. We resort to the definition of the Pfaffian \cite{Pfaffian}:
\begin{equation}
Pf ({\bf \Omega}) = \frac{1}{2^d d!} \sum_{\sigma \in S_{2d}} sgn (\sigma) \Pi_{i=1}^d \omega_{\sigma (2i-1), \sigma (2i)},
\label{Eqa.6}
\end{equation}
where $S_{2d}$ is the symmetric group and $sgn(\sigma)$ is the signature of the permutation $\sigma$. Moreover, we use the following notation. If $d=2$, for instance, then we consider the permutations of the set $\left\{1,2,3,4 \right\}$. As an example, consider $\sigma = \left\{3,1,4,2 \right\}$. Then we write: $\sigma(1) =3$, $\sigma(2) =1$, $\sigma(3) =4$, and $\sigma(4) =2$.

Suppose that in the string $\Pi_{i=1}^d \omega_{\sigma (2i-1), \sigma (2i)}$ we pick $k$ elements of the matrix $\hbar^{-1} {\bf \Theta}$, $p$ elements of the matrix $\hbar^{-1} {\bf N}$ and $l$ elements of the matrix ${\bf I}$ or $- {\bf I}$. Then, of course:
\begin{equation}
k +l +p =d.
\label{Eqa.7}
\end{equation}
If we pick $l$ elements from ${\bf I}$ or $- {\bf I}$, then the remaining $k+p$ terms can only be taken from ${\bf \Omega}$ when $2l$ lines and rows have been eliminated. In particular, we remove $l$ lines and rows from $\hbar^{-1} {\bf \Theta}$. That leaves us with $(d-l-1)(d-l)/2$ non-vanishing independent parameters in $\hbar^{-1} {\bf \Theta}$. Each time we choose one of the latter for our string $\Pi_{i=1}^d \omega_{\sigma (2i-1), \sigma (2i)}$, we have to eliminate another 2 lines and 2 columns. So if we pick k elements out of the $(d-l-1)(d-l)/2$ non-vanishing independent elements of $\hbar^{-1} {\bf \Theta}$, we remove $2k$ lines and columns. We are left with $(d-l-2k-1)(d-l-2k)/2$ non-vanishing independent elements. But this is only possible if:
\begin{equation}
2k \le d-l.
\label{Eqa.8}
\end{equation}
A similar argument leads to:
\begin{equation}
2p \le d-l.
\label{Eqa.9}
\end{equation}
Now, (\ref{Eqa.8}) and (\ref{Eqa.9}) are only compatible with (\ref{Eqa.7}) if:
\begin{equation}
k=p= \frac{ d-l}{2}.
\label{Eqa.10}
\end{equation}
This means that in each string we have exactly the same number of elements of $\hbar^{-1} {\bf \Theta}$ and $\hbar^{-1} {\bf N}$. This proves that:
\begin{equation}
Pf ({\bf \Omega}) = (-1)^{d(d-1)/2} + P_{\left[d/2 \right]} ,
\label{Eqa.11}
\end{equation}
where $ P_{\left[d/2 \right]} $ is a homogeneous polynomial of degree $\left[d/2 \right]$ (the integral part of $d/2$) in the dimensionless variables $\theta_{ij} \eta_{kl} / \hbar^2$ with $1 \le i <j \le d$ and $1 \le k <l \le d$.

Let us define:
\begin{equation}
\zeta = {\mbox max} \left\{\theta_{ij} \eta_{kl} / \hbar^2,~~ 1 \le i <j \le d, ~ 1 \le k <l \le d\right\}.
\label{Eqa.12}
\end{equation}
Let $\sigma'$ be the permutation which yields the contribution $(-1)^{d(d-1)/2}$ to the Pfaffian and let $S_{2d}': = S_{2d} \backslash \left\{ \sigma' \right\}$. We thus have:
\begin{equation}
\left| P_{\left[d/2 \right]} \right| =
\left|\frac{1}{2^d d!} \sum_{\sigma \in S_{2d}'} sgn (\sigma) \Pi_{i=1}^d \omega_{\sigma (2i-1), \sigma (2i)} \right| \le  \frac{1}{2^d d!}\sum_{\sigma \in S_{2d}'} \Pi_{i=1}^d \left| \omega_{\sigma (2i-1), \sigma (2i)} \right|.
\label{Eqa.13}
\end{equation}
If a string $\Pi_{i=1}^d  \omega_{\sigma (2i-1), \sigma (2i)} $ contains $k$ elements of $\hbar^{-1} {\bf \Theta}$ and $k$ elements of $\hbar^{-1} {\bf N}$, then
\begin{equation}
 \Pi_{i=1}^d \left| \omega_{\sigma (2i-1), \sigma (2i)} \right| \le \zeta^k < \zeta,
 \label{Eqa.14}
 \end{equation}
 where we used $\zeta <1$. Since there are $d! -1 < d!$ elements in $S_{2d}'$, we conclude that:
\begin{equation}
\left| P_{\left[d/2 \right]} \right| < \frac{\zeta }{2^d} <1.
\label{Eqa.15}
\end{equation}
This yields the desired result.
\end{proof}
An immediate consequence of this Lemma is that the matrix ${\bf \Omega}$ is invertible and the skew-symmetric form (\ref{Eq3.3}) is non-degenerate as advertised.

\begin{proposition}
A matrix ${\bf S}$ associated with a D transformation has positive determinant.
\end{proposition}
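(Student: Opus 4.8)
The plan is to pin down $\det {\bf S}$ via the behaviour of the Pfaffian under congruence and then appeal to Lemma~6.1. The one ingredient we need is the standard transformation law of the Pfaffian: for any real $2d\times 2d$ matrix ${\bf B}$ and any skew-symmetric $2d\times 2d$ matrix ${\bf A}$ one has $Pf ({\bf B}{\bf A}{\bf B}^T) = \det ({\bf B})\, Pf ({\bf A})$. Applying this to the defining relation (\ref{Eq3.5}), ${\bf \Omega} = {\bf S}{\bf J}{\bf S}^T$, gives at once
\begin{equation}
Pf ({\bf \Omega}) = \det ({\bf S}) \, Pf ({\bf J}),
\label{Eqa.16}
\end{equation}
so the sign (and value) of $\det {\bf S}$ is completely determined once we know $Pf({\bf \Omega})$ and $Pf({\bf J})$.

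The second step is to compute $Pf({\bf J})$ for the canonical symplectic matrix (\ref{Eq2.2}). The quickest route is to reorder the Darboux coordinates from $(R_1,\dots,R_d,\Pi_1,\dots,\Pi_d)$ to the interleaved ordering $(R_1,\Pi_1,\dots,R_d,\Pi_d)$: this brings ${\bf J}$ to block-diagonal form with $d$ identical $2\times 2$ blocks, each of Pfaffian $1$, while the reordering is realized by a permutation of signature $(-1)^{d(d-1)/2}$. Hence $Pf({\bf J}) = (-1)^{d(d-1)/2}$. (Alternatively one may iterate the recursive formula (\ref{Eqa.3}) along the first row, exactly as in the proof of Lemma~6.1, and collect the resulting signs.)

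Combining (\ref{Eqa.16}) with $Pf({\bf J}) = (-1)^{d(d-1)/2}$ yields $\det({\bf S}) = (-1)^{d(d-1)/2}\, Pf({\bf \Omega})$. By Lemma~6.1, $Pf({\bf \Omega})$ is nonzero and has sign $(-1)^{d(d-1)/2}$, so $\det({\bf S}) = |Pf({\bf \Omega})|$, which is positive; using the identity $\det {\bf \Omega} = Pf({\bf \Omega})^2$ valid for skew-symmetric matrices, this is the same as $\det {\bf S} = \sqrt{\det {\bf \Omega}} > 0$, which is (\ref{Eqa.1}). The only delicate point in the argument is the sign bookkeeping — the signature of the reordering permutation fixing $Pf({\bf J})$, and the sign of $Pf({\bf \Omega})$ supplied by Lemma~6.1 — and there is no genuine obstacle beyond getting those two parities to match, which they do.
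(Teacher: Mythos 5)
Your proof is correct and follows essentially the same route as the paper: apply $Pf({\bf S}{\bf J}{\bf S}^T)=\det({\bf S})\,Pf({\bf J})$ to (\ref{Eq3.5}), evaluate $Pf({\bf J})=(-1)^{d(d-1)/2}$, and invoke Lemma~6.1 for the sign of $Pf({\bf \Omega})$. The only cosmetic difference is that you compute $Pf({\bf J})$ by an interleaving permutation, whereas the paper quotes the standard block formula for $Pf\!\left(\begin{smallmatrix} 0 & {\bf M}\\ -{\bf M}^T & 0\end{smallmatrix}\right)$ — the sign bookkeeping agrees in both cases.
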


\begin{proof}
We use the well known formula \cite{Pfaffian}:
\begin{equation}
Pf \left({\bf B} {\bf A} {\bf B}^T \right) = \det ({\bf B}) Pf ({\bf A}),
\label{Eqa.16}
\end{equation}
which holds for any $2d \times 2d$ skew-symmetric matrix ${\bf A}$ and any $2d \times 2d$ matrix ${\bf B}$. If we apply this formula to (\ref{Eq3.5}), we obtain:
\begin{equation}
\det ({\bf S}) Pf ({\bf J}) = Pf ({\bf \Omega}).
\label{Eqa.17}
\end{equation}
For an arbitrary $d \times d$ matrix ${\bf M}$ we have \cite{Pfaffian}:
\begin{equation}
Pf \left(
\begin{array}{c c}
0 & {\bf M}\\
- {\bf M}^T & 0
\end{array}
\right) = (-1)^{d (d-1)/2} \det {\bf M}.
\label{Eqa.18}
\end{equation}
We conclude that:
\begin{equation}
Pf( {\bf J})=  (-1)^{d (d-1)/2}.
\label{Eqa.19}
\end{equation}
From (\ref{Eqa.17},\ref{Eqa.19}) and Lemma 6.1, we infer that:
\begin{equation}
\det ( {\bf S}) >0.
\label{Eqa.20}
\end{equation}
\end{proof}

\end{section}

\subsection*{Acknowledgments}

\vspace{0.3cm}

\noindent
The authors would like to thank O. Bertolami for useful discussions and for reading the manuscript. The work of CB is supported by Funda\c{c}\~{a}o para a Ci\^{e}ncia e a Tecnologia (FCT)
under the fellowship SFRH/BD/24058/2005. The work of NCD and JNP was partially supported by the grant PTDC/MAT/69635/2006 of the FCT.

\vspace{0.3cm}

\end{document}